\newcommand{\FullVersion}[2]{%
    \ifstrequal{full}{full}
    {#1}
    {#2}
}
\pgfplotsset{compat=1.5}
\DeclarePairedDelimiter\pars{\lparen}{\rparen}
\spnewtheorem{numberedclaim}{Claim}{\bfseries}{\itshape}
\renewenvironment{claim}{\begin{numberedclaim}}{\end{numberedclaim}}
\DeclareMathAlphabet{\mathpzc}{OT1}{pzc}{m}{it}
\newenvironment{proofsketch}[1]{\begin{trivlist} \item {\it Proof
#1:~~}}
  {\qed\end{trivlist}}
\newcommand{\namedref}[2]{\hyperref[#2]{#1~\ref*{#2}}}
\newcommand{\eqnamedref}[2]{\hyperref[#2]{#1~(\ref*{#2})}}
\newcommand{\thmlab}[1]{\label{thm:#1}}
\newcommand{\thmref}[1]{\namedref{Theorem}{thm:#1}}
\newcommand{\lemlab}[1]{\label{lem:#1}}
\newcommand{\lemref}[1]{\namedref{Lemma}{lem:#1}}
\newcommand{\claimlab}[1]{\label{claim:#1}}
\newcommand{\claimref}[1]{\namedref{Claim}{claim:#1}}
\newcommand{\seclab}[1]{\label{sec:#1}}
\newcommand{\secref}[1]{\namedref{Section}{sec:#1}}
\newcommand{\applab}[1]{\label{app:#1}}
\newcommand{\appref}[1]{\namedref{Appendix}{app:#1}}
\newcommand{\figlab}[1]{\label{fig:#1}}
\newcommand{\figref}[1]{\namedref{Figure}{fig:#1}}
\newcommand{\alglab}[1]{\label{alg:#1}}
\renewcommand{\algref}[1]{\namedref{Algorithm}{alg:#1}}
\newcommand{\deflab}[1]{\label{def:#1}}
\newcommand{\defref}[1]{\namedref{Definition}{def:#1}}
\newcommand{\eqnlab}[1]{\label{eq:#1}}
\newcommand{\eqnref}[1]{\eqnamedref{Equation}{eq:#1}}
\newcommand{\propref}[1]{\namedref{Proposition}{prop:#1}}
\newcommand{\proplab}[1]{\label{prop:#1}}
\newenvironment{remindertheorem}[1]{\medskip {\bf Reminder of  #1.  }\em}{}
\newenvironment{reminderclaim}[1]{\medskip {\bf Reminder of  #1.  }\em}{}
\newcommand{\lcompress}{\mathtt{LCompress}}
\newcommand{\compress}{\mathtt{Compress}}
\newcommand{\kc}{\mathtt{KC}}
\newcommand{\ckc}{\mathtt{CKC}}
\newcommand{\score}{\mathsf{Score}}
\newcommand{\dpcompress}{\mathtt{DPCompress}}
\newcommand{\lapalg}{\mathcal{T}}
\newcommand{\decompress}{\mathtt{Decompress}}
\newcommand{\startinside}{\mathsf{StartInside}}
\newcommand{\B}{\mathcal{B}}
\newcommand{\M}{\mathcal{M}}
\newcommand{\BAD}{\mathsf{BAD}}
\newcommand{\GS}{\mathtt{GS}}   
\newcommand{\Lap}{\mathrm{Lap}}
\newcommand{\Ham}{\mathtt{Ham}}
\newcommand{\SR}{\mathtt{SR}}
\newcommand{\ctb}{\mathsf{ctb}}
\newcommand{\ctc}{\mathsf{ctc}}
\newcommand{\superscript}[1]{\ensuremath{^{\mbox{\scriptsize{\textit{#1}}}}}}
\def \th {\superscript{th}} 
\renewcommand{\S}{\mathcal{S}}
\newcommand{\Enc}{\mathsf{E}}
\newcommand{\QuinStr}{\mathsf{QuinStr}}
\newcommand{\str}{\mathsf{str}}
\def \A {\mdef{\mathcal{A}}}
\def \B {\mdef{\mathcal{B}}}
\renewcommand{\O}[1]{\ensuremath{\mathcal{O}\pars*{#1}}}
\newcommand{\mdef}[1]{{\ensuremath{#1}}\xspace}  
\def \th {\superscript{th}}     
\newcommand{\ignore}[1]{}
\newif\ifnotes\notestrue 
\newcommand{\samson}[1]{\textcolor{purple}{{\bf (Samson:} {#1}{\bf ) }} \marginpar{\tiny\bf
             \begin{minipage}[t]{0.5in}
               \raggedright S:
            \end{minipage}}}  
\newcommand{\jeremiah}[1]{
\todo[color=crimson!30]{\textbf{Jeremiah:} #1}
}
\newcommand{\samson}[1]{}
\newcommand{\jeremiah}[1]{}
\renewcommand*{\@fnsymbol}[1]{\textcolor{blue}{\ensuremath{\ifcase#1\or *\or \dagger\or \ddagger\or
 \mathsection\or \triangledown\or \mathparagraph\or \|\or **\or \dagger\dagger
   \or \ddagger\ddagger \else\@ctrerr\fi}}}
\providecommand{\email}[1]{\href{mailto:#1}{\nolinkurl{#1}\xspace}}
\definecolor{mahogany}{rgb}{0.75, 0.25, 0.0}
\definecolor{darkblue}{rgb}{0.0, 0.0, 0.55}
\definecolor{darkpastelgreen}{rgb}{0.01, 0.75, 0.24}
\definecolor{darkgreen}{rgb}{0.0, 0.2, 0.13}
\definecolor{darkgoldenrod}{rgb}{0.72, 0.53, 0.04}
\definecolor{forestgreen}{rgb}{0.13, 0.55, 0.13}
\definecolor{darkred}{rgb}{0.55, 0.0, 0.0}
\definecolor{blueviolet}{RGB}{138,43,226}
\definecolor{dodgerblue}{RGB}{30,144,255}
\definecolor{crimson}{RGB}{220,20,60}
\let\llncssubparagraph\subparagraph
\let\subparagraph\paragraph
\titlespacing*{\subsubsection}{0pt} {1.5ex}{1.5ex}
\let\subparagraph\llncssubparagraph
\begin{document}

\title{Differentially Private Compression and the Sensitivity of LZ77}

%
\titlerunning{Differentially Private Compression and the Sensitivity of LZ77}
%

\author{Jeremiah Blocki\inst{1}\orcidID{0000-0002-5542-4674}\index{Blocki, Jeremiah} 
\and Seunghoon Lee\thanks{Work done primarily while Seunghoon was at Purdue University.}\inst{2}\orcidID{0000-0003-4475-5686}\index{Lee, Seunghoon}
\and Brayan Sebastián Yepes-Garcia\inst{3}\orcidID{0009-0002-9162-390X}\index{Yepes-Garcia, Brayan Sebastián}
}
\authorrunning{J. Blocki, S. Lee, and S. Yepes-Garcia}
%
\institute{Purdue University, West Lafayette, USA\\ \email{jblocki@purdue.edu} \and University of Waterloo, Waterloo, Canada\\ \email{seunghoon.lee@uwaterloo.ca} \and Universidad Nacional de Colombia, Bogotá, Colombia\\ \email{byepesg@unal.edu.co}
}

%
\maketitle              

\begin{abstract}
We initiate the study of differentially private data-compression schemes motivated by the insecurity of the popular ``Compress-Then-Encrypt'' framework. Data compression is a useful tool which exploits redundancy in data to reduce storage/bandwidth when files are stored or transmitted. However, if the contents of a file are confidential then the \emph{length} of a compressed file might leak confidential information about the content of the file itself. Encrypting a compressed file does not eliminate this leakage as data encryption schemes are only designed to hide the \emph{content} of confidential message instead of the \emph{length} of the message. In our proposed \emph{Differentially Private Compress-Then-Encrypt} framework, we add a random positive amount of padding to the compressed file to ensure that any leakage satisfies the rigorous privacy guarantee of $(\epsilon,\delta)$-differential privacy. The amount of padding that needs to be added depends on the sensitivity of the compression scheme to small changes in the input, i.e., to what degree can changing a single character of the input message impact the length of the compressed file. While some popular compression schemes are highly sensitive to small changes in the input, we argue that effective data compression schemes do not necessarily have high sensitivity. Our primary technical contribution is analyzing the fine-grained sensitivity of the LZ77 compression scheme (IEEE Trans. Inf. Theory 1977) which is one of the most common compression schemes used in practice. We show that the global sensitivity of the LZ77 compression scheme has the upper bound $\mathcal{O}(W^{2/3}\log n)$ where $W\leq n$ denotes the size of the sliding window. When $W=n$, we show the lower bound $\Omega(n^{2/3}\log^{1/3}n)$ for the global sensitivity of the LZ77 compression scheme which is tight up to a sublogarithmic factor.

\end{abstract}

\section{Introduction}\seclab{intro}
Data compression algorithms exploit redundancy in data to compress files before storage or transmission. Lossless compression schemes are widely used in contexts such as ZIP archives (Deflate \cite{rfc1951}) and web content compression by Google (Brotli \cite{rfc7932}). In the Compress-Then-Encrypt framework, a message $w$ is first compressed to obtain a (typically shorter) file $y=\compress(w)$ and then encrypted to obtain a ciphertext $c=\Enc_K(y)$. Intuitively, compression is used to reduce bandwidth, and encryption is used to protect the file contents from eavesdropping attacks before it is transmitted over the internet.  

While encryption protects the {\em content} of the underlying plaintext message $y$, there is no guarantee that the ciphertext $c$ will hide the \emph{length} of the underlying plaintext $y$. For example, AES-GCM \cite{rfc5282} is the most widely used symmetric key encryption scheme, but the ciphertext leaks the {\em exact} length of the underlying plaintext\footnote{Any efficient encryption scheme must at least leak some information about the length of the underlying plaintext. Otherwise, the ciphertext of a short message (e.g., ``hi'') would need to be longer than the longest possible message supported by the protocol (e.g., a 4GB movie).}. Unfortunately, the length of the compressed message $y$ can leak information about the {\em content} of the original uncompressed file. For example, suppose that the original message was $w=$``\texttt{The top secret password is:~password.}'' An eavesdropping attacker who intercepts the AES-GCM ciphertext $c$ would be able to infer the length $\left|\compress(w)\right|$ and could use this information to eliminate many incorrect password guesses, i.e.,  if $\left| \compress(w')\right| \neq \left|\compress(w)\right|$ where $w'=$``\texttt{The top secret password is:~[guess]}'' then the attacker could immediately infer that this particular password guess is incorrect. 

The observation that the length of a compressed file can leak sensitive information about the {\em content} has led to several real-world exploits. For example, the CRIME (Compression Ratio Info-leak Made Easy) attack \cite{RizDuo12} exploits compression length leakage to hijack TLS sessions \cite{Fis12}. Similarly, the BREACH (Browser Reconnaissance and Exfiltration via Adaptive Compression of Hypertext) attack \cite{GluHarPra13} against the HTTPS protocol exploits compression in the underlying HTTP protocol. Despite its clear security issues, the  ``Compress-then-Encrypt'' framework continues to be used as a tool to reduce bandwidth overhead. There have been several heuristic proposals to address this information leakage. One proposal \cite{paulsen2019debreach} is static analysis to identify/exclude sensitive information before data compression. However, it is not always clear {\em a priori} what information should be considered sensitive. Heal the Breach \cite{palacios2022htb} pads compressed files by a random amount to mitigate leakage and protect against the CRIME/BREACH attacks. While this defense is intuitive, there will still be some information leakage and there are no rigorous privacy guarantees. This leads to the following question: 

\begin{quote}
    \emph{Can one design a compression scheme which provides rigorous privacy guarantees against an attacker who learns the compressed file’s length?}
\end{quote}

Differential Privacy (DP) \cite{TCC:DMNS06} has emerged as a gold standard in privacy-preserving data analysis due to its rigorous mathematical guarantees and strong composition results. Given $\epsilon>0$ and $\delta\in(0,1)$, a randomized algorithm $\A$ is called \emph{$(\epsilon,\delta)$-differentially private} if for every pair of neighboring datasets $\mathfrak{D}$ and $\mathfrak{D}'$ and for all sets $S$ of possible outputs, we have $\Pr[\A(\mathfrak{D})\in S]\leq e^\epsilon\cdot\Pr[\A(\mathfrak{D}')\in S]+\delta$. Typically, $\epsilon$ is a small constant and the additive loss $\delta$ should be negligible (if $\delta=0$ then we can simply say that $\A$ satisfies \emph{$\epsilon$-DP}). In our context, we are focused on leakage from the length of a compressed file. We say that an compression algorithm $\compress$ is  \emph{$(\epsilon,\delta)$-differentially private} if for every pair of neighboring strings $w$ and $w'$ (e.g., differing in one symbol) and all $S \subseteq \mathbb{N}$ we have  $\Pr[\left| \compress(w)\right|\in S]\leq e^\epsilon\cdot\Pr[\left| \compress(w')\right|\in S]+\delta$. Intuitively, suppose that $w$ denotes the original message and $w'$ denotes the message after replacing a sensitive character by a special blank symbol. Differential privacy ensures that an attacker who observes the length of the compressed file will still be unlikely to distinguish between $w$ and $w'$. DP composition ensures that the attacker will also be unlikely to distinguish between the original string $w$ and a string $w''$ where {\em every} symbol of a secret password has been replaced --- provided that the secret password is not too long.  

We first note that the Heal the Breach proposal \cite{palacios2022htb} does not necessarily satisfy differential privacy. For example, Lagarde and Perifel \cite{lagarde2018lempel} proved that the LZ78 compression algorithm \cite{LZ78} is highly sensitive to small bit changes. In particular, there exists nearby strings $w \sim w'$ (i.e., $|w|=|w'|$ and $|\{i:w[i]\neq w'[i]\}|=1$) of length $n$ such that $\left|\compress_{\sf LZ78}(w) \right| = o(n)$ and $\left|\compress_{\sf LZ78}(w') \right| = \Omega(n)$. In particular, to prevent an eavesdropping attacker from distinguishing between $w$ and $w'$, the length of the padding would need to be at least $\Omega(n)$. Heal the Breach \cite{palacios2022htb} adds a much smaller amount of padding $o(n)$.

\paragraph*{Notations.}
For a positive integer $n$, we define $[n]\coloneqq\{1,\ldots,n\}$. Unless otherwise noted, we will use $n$ as the length of a string throughout the paper. We use $\Sigma$ to denote a set of alphabets and $w \in \Sigma^n$ to denote a string of length $n$ and we let $w[i] \in \Sigma$ denote the $i$th character of the string. Given two strings $w,w' \in \Sigma^n$, we use $\Ham(w,w') = \left| \{ i:~w[i] \neq w'[i]\} \right|$ to denote the \emph{hamming distance} between $w$ and $w'$, i.e., the number of indices where the strings do not match. We say that two strings $w$ and $w'$ are \emph{adjacent} (or \emph{neighbors}) if $\Ham(w,w')=1$ and we use $w \sim w'$ to denote this. We use $|w|$ to denote the length of a string, i.e., for any string $w \in \Sigma^n$ we have $|w| = n$. We use $w \circ w'$ to denote the concatenation of two strings $w$ and $w'$. Given a character $c \in \Sigma$ and an integer $k \geq 1$, we use $c^k$ to denote the character $c$ repeated $k$ times. Formally, we define $c^1 \coloneqq c$ and $c^{i+1} \coloneqq c \circ c^i$. Unless otherwise noted, we assume that all log's have base $2$, i.e., $\log x \coloneqq \log_2 x$.

\subsection{Our Contributions}\seclab{contributions}
In this paper, we initiate the study of differentially private compression schemes focusing especially on the  LZ77 compression algorithm \cite{LZ77}. We first provide a general transformation showing how to make any compression scheme differentially private by adding a random amount of padding which depends on the global sensitivity (the worst-case difference in the compressed length when one input symbol is modified; see \defref{def:local_sensitivity} for the formal definition) of the compression scheme. The transformation will yield an efficient compression scheme as long as the underlying compression scheme has low global sensitivity. Second, we demonstrate that good compression schemes do not inherently have high global sensitivity by demonstrating that Kolmogorov compression has low global sensitivity. Third, we analyze the global sensitivity of the LZ77 compression scheme providing an upper/lower bound that is tight up to the sublogarithmic factor of $\log^{2/3} n$. 

\subsubsection{Differentially Private Transform for Compression Algorithms.}

 We provide a general framework to transform any compression scheme $\compress$ into a differentially private compression algorithm $\dpcompress(w,\epsilon,\delta)$ by adding a random positive amount of padding to $\compress(w)$, where the amount of padding $p$ depends on the privacy parameters $\epsilon$ and $\delta$ as well as the global sensitivity of the underlying compression algorithm $\compress$. More concretely, we show that $A_{\epsilon,\delta}(w)\coloneqq\dpcompress(w,\epsilon,\delta)$ is $(\epsilon,\delta)$-differentially private, i.e., the length leakage is $(\epsilon,\delta)$-differentially private.  The expected amount of padding added is $\O{\frac{\GS_\compress}{\epsilon}\ln(\frac{1}{2\delta})}$ (see \defref{def:local_sensitivity} for the definition of the global sensitivity $\GS_\compress$). Thus, as long as  $\GS_\compress \ll n$, it is possible to achieve a compression ratio $\left| \dpcompress(w,\epsilon,\delta) \right|/ n = \left| \compress(w) \right|/ n + o(1)$ that nearly matches $\compress$. See \secref{sec:dpcompress} for details.

\subsubsection{Idealized Compression Schemes have Low Sensitivity.}

While the LZ78 \cite{LZ78} compression algorithm has high global sensitivity, we observe that compression schemes achieving optimal compression ratios do not inherently have high global sensitivity. In particular, we argue that Kolmogorov compression has low global sensitivity, i.e., at most $\O{\log n}$ where $n$ denotes the string length parameter. While Kolmogorov compression is uncomputable, it is known to achieve a compression rate that is {\em at least as good} as any compression algorithm. We also construct a \emph{computable} variant of Kolmogorov compression that preserves the global sensitivity, i.e., the global sensitivity of the computable variant of Kolmogorov compression is also $\O{\log n}$.
See \secref{sec:kolmogorov} for details.

\subsubsection{Global Sensitivity of the LZ77 Compression Scheme.}

Our primary technical contribution is to analyze the global sensitivity of the  LZ77 compression algorithm \cite{LZ77}. The LZ77 algorithm includes a tunable parameter $W \leq n$ for the size of the sliding window. Selecting smaller $W$ reduces the algorithm's space footprint, but can result in worse compression rates because the algorithm can only exploit redundancy within this window. We provide an almost tight upper and lower bound of the global sensitivity of the LZ77 compression scheme. In particular, we prove that the global sensitivity of the LZ77 compression scheme for strings of length $n$ is upper bounded by $\mathcal{O}(W^{2/3}\log n)$ where $W \leq n$ is the length of the sliding window. When $W=n$ the global sensitivity is lower bounded by $\Omega(n^{2/3}\log^{1/3}n)$ matching our upper bound up to a sublogarithmic factor of $\log^{2/3}n$. We hope that our initial paper inspires follow-up research analyzing the global sensitivity of other compression schemes. 

\begin{theorem}[informal]
Let $\mathsf{LZ77}$ be the LZ77 compression scheme for strings of length $n$ with a sliding window size $W \leq n$. Then $\GS_\mathsf{LZ77} = \mathcal{O}(W^{2/3}\log n)$ and, when $W=n$, $\GS_\mathsf{LZ77} = \Omega(n^{2/3}\log^{1/3}n)$. (See \thmref{thm:thmupperbound} and \thmref{thm:lowerbound}.)
\end{theorem}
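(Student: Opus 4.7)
The plan is to bound the global sensitivity by working at the granularity of the phrase count $z(w)$ in the LZ77 parsing: each phrase is encoded by a triple $(\text{offset}, \text{length}, \text{char})$ of bit-cost $\Theta(\log n)$, so that $|\compress(w)| = \Theta(z(w)\log n)$ and bounding $\bigl|\,|\compress(w)| - |\compress(w')|\,\bigr|$ reduces up to a $\log n$ factor to bounding $|z(w) - z(w')|$ for adjacent strings $w \sim w'$ differing at a single position $i$.

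For the upper bound $\mathcal{O}(W^{2/3}\log n)$, I would first exploit the left-to-right greedy nature of LZ77: the parsings $P$ of $w$ and $P'$ of $w'$ agree on every phrase ending strictly before $i$, so divergence starts at the phrase of $P$ touching $i$, beginning at some position $j \leq i$. The next step is to identify a re-synchronization position $k > i$ at which both parsings again share a common phrase boundary, after which they continue identically because once the sliding window has advanced past $i$ the relevant window contents agree between $w$ and $w'$. This reduces the problem to bounding the number of phrases either parsing places inside $[j,k]$. A natural approach is to split phrases in $[j,k]$ by a length threshold $L$: short phrases of length at most $L$ are counted via a length-budget argument relative to the $\mathcal{O}(W)$ width of the divergent interval, while long phrases of length greater than $L$ are controlled by a combinatorial bound on how many long matches can coexist within a window of size $W$. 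The $2/3$ exponent should emerge from balancing the two contributions at a suitable choice of $L$.

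For the lower bound $\Omega(n^{2/3}\log^{1/3}n)$ when $W = n$, I would exhibit an explicit family of adjacent pairs attaining a phrase-count gap of $\Omega(n^{2/3}/\log^{2/3}n)$, which multiplies by $\Theta(\log n)$ bits per phrase to yield the stated bound. A natural template is a string formed by $k = \Theta(n^{2/3}/\log^{2/3}n)$ blocks of length $\ell = \Theta(n^{1/3}\log^{2/3}n)$, crafted so that in $w$ every later block matches a designated ``seed'' block as a single LZ77 phrase, whereas in $w'$ a single character flipped inside the seed simultaneously destroys the shared long match for all $\Omega(k)$ later blocks, forcing each to split into two or more phrases. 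A convenient instantiation uses blocks that share a long common substring with the seed and differ only in a short suffix, with the perturbed character positioned so that no alternative long match can salvage the compression; certifying this last point is best done via an alphabet/structure argument ensuring that the seed block is genuinely irreplaceable.

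The hardest part will be the upper bound. The re-synchronization claim --- that once the window has slid past $i$ the two parsings must share a common phrase boundary within $\mathcal{O}(W)$ characters --- is subtle, because even though the relevant window contents agree symbol-wise, $P$ and $P'$ may be at different current positions with different upcoming longest matches. Going further, establishing that the phrase count in the divergent interval is $\mathcal{O}(W^{2/3})$, rather than the naive $\mathcal{O}(W)$, requires a nontrivial combinatorial bound on the density of long matches inside a sliding window, which is the key ingredient driving the $2/3$ exponent. For the lower bound, the main challenge is certifying that LZ77 provably cannot find any alternative long match in $w'$ that would compensate for the perturbation, which is where the careful block design must do the work.
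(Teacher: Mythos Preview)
Your upper-bound plan has a structural gap. The re-synchronization argument (``once the window has slid past $i$ the two parsings must share a common phrase boundary within $\mathcal{O}(W)$ characters'') is not what drives the bound, and when $W=n$ the window \emph{never} slides past the changed position, so this route gives nothing in exactly the case the theorem covers. The paper does not prove re-synchronization at all. Instead it shows that for every phrase $B_i$ of $P$, at most two phrases of $P'$ can start inside $B_i$; calling $B_i$ ``type-2'' when exactly two do, one gets $z(w')-z(w)\le t_2$, the number of type-2 blocks. The crucial structural fact is then: every type-2 block $B_i=[q_i,\ell_i,c_i]$ (other than the one straddling the flipped index $j$) must \emph{copy from} an interval containing $j$, i.e.\ $q_i\le j<q_i+\ell_i$, and moreover the pairs $(q_i,\ell_i)$ are pairwise distinct. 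This yields the per-length constraint $x_\ell\coloneqq |\{i\in\mathcal{B}_2:\ell_i=\ell\}|\le \ell$, and maximizing $\sum_\ell x_\ell$ subject to $x_\ell\le\ell$ and $\sum_\ell (\ell+1)x_\ell\le n$ (or $\le W$ in the bounded-window case, using that type-2 blocks satisfy $s_i\le j+W$) gives $t_2=\mathcal{O}(W^{2/3})$. Your short/long threshold split does not surface the constraint $x_\ell\le\ell$, and without it there is no visible mechanism producing the exponent $2/3$.

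Your lower-bound template is actually \emph{ruled out} by the upper-bound analysis. You propose $k=\Theta(n^{2/3}/\log^{2/3}n)$ later blocks each of the \emph{same} length $\ell=\Theta(n^{1/3}\log^{2/3}n)$, each becoming one phrase in $w$ and splitting in $w'$. But the paper's upper bound shows there can be at most $\ell$ type-2 blocks of a given length $\ell$, and here $k\gg\ell$, so no such pair $w\sim w'$ exists. The paper's construction necessarily uses blocks of \emph{many different lengths}: for each $2\le\ell\le m$ it places $\ell-1$ substrings $S_{\ell,u}$ (with $1\le u\le\ell-1$) of length $\Theta(\ell\log m)$, and the single flipped character (a `2' changed to a `3' inside a shared prefix dictionary) simultaneously breaks the long match for all $\Theta(m^2)$ of them while the varying offsets $u$ ensure the required $(q_i,\ell_i)$-distinctness. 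With $n=\Theta(m^3\log m)$ this gives $t_2=\Theta(m^2)=\Theta(n^{2/3}/\log^{2/3}n)$ and hence $\GS_\compress=\Omega(n^{2/3}\log^{1/3}n)$. The ``irreplaceability'' you flag as the hard part is handled by an injective encoding $\Enc(\cdot)$ together with doubling each codeword and inserting distinct separator symbols, which forces the only long match for each $S_{\ell,u}$ to pass through the flipped position.
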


At a high level, the upper bound analysis considers the relationship between the blocks generated by running the LZ77  compression for two neighboring strings $w \sim w'$. Let $B_1,\ldots, B_t$ (resp. $B_1',\ldots, B_{t'}'$) denote the blocks generated when we run LZ77 with input $w$ (resp. $w'$) where block $B_i$ (resp. $B'_k$) encodes the substring $w[s_i, f_i]$ (resp. $w'[s_k',f_k'])$. We prove that if $s_i \leq s_k' \leq f_i$ (block $B_k'$ starts inside block $B_i$) then $f_{k+1}' \geq f_i$. In particular, this means that for every block $B_i$ there are \emph{at most two} blocks from $B_1',\ldots, B_{t'}'$ that ``start inside'' $B_i$. We can then argue that the difference in compression lengths is proportional to $t_2$ where $t_2 = \left| \left\{ i \leq t: \left|\{ k\leq t': s_i \leq s_k' \leq f_i  \}\right| =2\right\} \right|$ denotes the number of ``type-2'' blocks, i.e., $B_i$ which have two blocks that start inside it. Finally, we can upper bound $t_2 = \O{n^{2/3}}$ when $W=n$ or $t_2 = \O{W^{2/3}}$ when $W < n$. 

The lower bound works by constructing a string which (nearly) maximizes $t_2$. See \secref{sec:upperbound} and \secref{sec:lowerbound} for details.

\subsection{Related Work} 

There has been extensive work on developing efficient compression algorithms. Huffman coding \cite{Huffman52} encodes messages symbol by symbol --- symbols that are used most frequently are encoded by shorter binary strings in the prefix-free code. Lempel and Ziv \cite{LZ77,LZ78} developed multiple algorithms, and Welch \cite{Welch84} later proposed LZW as an improved implementation of \cite{LZ78}. Our work primarily focuses on analyzing the LZ77 compression algorithm \cite{LZ77} since it is one of the most common lossless compression algorithms used in practice. Deflate \cite{rfc1951} is a lossless compression scheme that combines LZ77 compression \cite{LZ77} and Huffman coding \cite{Huffman52} and is a key algorithm used in ZIP archives. The Lempel–Ziv–Markov chain algorithm (LZMA) is used in the 7z format of the 7-Zip archiver and it uses a modification of the LZ77 compression. Brotli \cite{rfc7932} also uses the LZ77 compression with Huffman coding and the 2nd-order context modeling. 

Several prior papers have studied the sensitivity of compression schemes \cite{lagarde2018lempel,AFI23,GILRSU23} to small input changes. Lagarde and Perifel \cite{lagarde2018lempel} studied the multiplicative sensitivity of the LZ78 compression algorithm \cite{LZ78}, and their result implies that the additive (i.e., global) sensitivity can be as large as $\Omega(n)$. Giuliani et al. \cite{GILRSU23} showed that the additive sensitivity of the Burrows-Wheeler Transform with Run-Length Encoding can be as large as $\Omega(\sqrt{n})$ --- upper bounding the additive sensitivity remains an open question. 

Most closely related to ours is the work of Akagi et al. \cite{AFI23} who studied the additive and multiplicative sensitivity of several compression schemes including Kolmogorov and LZ77. Akagi et al. \cite{AFI23} proved that Kolmogorov compression has additive sensitivity $\O{\log n}$. We extend this result to a computable variation of Kolmogorov compression in \secref{sec:kolmogorov}. For LZ77, they proved that the additive sensitivity is lower bounded by $\Omega(\sqrt{n})$ when the window size is $W=\Omega(n)$\footnote{Bannai et al. \cite{CPM:BanChaRad24} also lower bounds the additive sensitivity of LZ77 by $\Omega(\sqrt{n})$ as well though the primary focus in \cite{CPM:BanChaRad24} is on efficiently updating the compressed file after rotations instead of analyzing additive sensitivity.}. We prove that the additive sensitivity is lower bounded by $\widetilde\Omega(n^{2/3})$. Finally, they prove that the (local) additive sensitivity of LZ77 is {\em at most} $\O{z}$ where $z$ is the length of the compressed file. Unfortunately, this result does not even imply that the global sensitivity is $o(n)$ because $z$ can be as large as $z= \Omega(n)$ when the file is incompressible. 
By contrast, we prove that the global sensitivity is upper bounded by $\O{W^{2/3} \log n}$ which is tight up to logarithmic factors and is at most $\O{n^{2/3} \log n}$ even when $W = \Omega(n)$.

Degabriele \cite{CCS:Degabriele21} introduced a formal model for length-leakage security of compressed messages with random padding. While Degabriele \cite{CCS:Degabriele21} did not use differential privacy as a security notion, his analysis suggests that DP friendly distributions such as the  Laplace distribution and Gaussian distribution minimized the leakage. Song \cite{song2024refined} analyzed the (in)security of compression schemes against attacks such as cookie-recovery attacks and used the additive sensitivity of a compression scheme to upper bound the probability of successful attacks. Neither work \cite{CCS:Degabriele21,song2024refined} formalized the notion of a differentially private compression scheme as we do in \secref{sec:dpcompress}.  

Ratliff and Vadhan introduced a framework for differential privacy against timing attacks \cite{ratliff2024framework} to deal with information leakage that could occur when the running time of a (randomized) algorithm might depend on the sensitive input. They propose to introduce random positive delays after an algorithm is finished. The length of this delay will depend on the sensitivity of the algorithm's running time to small changes in the input. While our motivation is different, there are similarities: they analyze the sensitivity of an algorithm's running time while we analyze the sensitivity of a compression algorithm with respect to the length of the file. They introduce a random positive delay while proposing to add a random positive amount of padding to the compressed file. 

\section{Preliminaries}\seclab{sec:prelim}

\begin{definition}[Lossless Compression]
Let $\Sigma,\Sigma'$ be sets of alphabets. A {\em lossless compression scheme} consists of two functions $\compress:\Sigma^* \rightarrow (\Sigma')^*$ and $\mathtt{Decompress}:(\Sigma')^* \rightarrow \Sigma^*$. We require that for all string $w \in \Sigma^*$ we have $\mathtt{Decompress}\left(\compress(w) \right) = w$.
\end{definition}

\paragraph*{Global Sensitivity of Compression Scheme.}

Global sensitivity helps us understand how much the output of a function can change when its input is slightly modified. Recall that two strings $w,w' \in \Sigma^n$ are neighbors (denoted $w \sim w'$) if the Hamming Distance between these two strings is at most one, i.e., $\left|\{ i: w[i] \neq w'[i] \}  \right| \leq 1$. In the context of compression schemes, the global sensitivity of a compression scheme is defined as the largest difference in the compression length when we compress two neighboring strings $w \sim w'$.

\begin{definition}[Local/Global Sensitivity]
\deflab{def:local_sensitivity}
Let $\Sigma,\Sigma'$ be sets of alphabets. 
The \emph{local sensitivity} of the compression algorithm $\compress:\Sigma^* \rightarrow (\Sigma')^*$ at $w \in \Sigma^n$ is defined as $\mathtt{LS}_{\compress}(w) \coloneqq \max_{w' \in \Sigma^n:w \sim w'} \left| |\compress(w)| - |\compress(w')| \right|$, and the \emph{global sensitivity} of the compression algorithm for strings of length $n$ is defined as
$ \GS_{\compress}(n) \coloneqq \max_{w \in \Sigma^n} \mathtt{LS}_{\compress}(w)$. 
If it is clear from context, then one can omit the parameter $n$ (length of a string) and simply write $\GS_{\compress}$ to denote the global sensitivity.
\end{definition}

Understanding the global sensitivity of a compression algorithm is a crucial element of designing a differentially private compression scheme. While there is a large line of work using global sensitivity  \cite{sheng2025differentiallyprivatedistancequery,farias2025differentiallyprivatemultiobjectiveselection,wicker2024certificationdifferentiallyprivateprediction,10735286,10597947,zhang2023sensitivityestimationdifferentiallyprivate,tetek:LIPIcs.APPROX/RANDOM.2024.73,10.1145/3523227.3546781,9762326,blocki_et_al:LIPIcs.APPROX/RANDOM.2023.59,blocki_et_al:LIPIcs.ICALP.2022.26} to design differentially private mechanisms, to the best of our knowledge, no prior work has studied the construction of differentially private compression schemes.

In our context, the amount of random padding added to a compressed file will scale with the global sensitivity of the compression algorithm. While Lagarde and Perifel \cite{lagarde2018lempel} were not motivated by privacy or security, their analysis of LZ78 implies that the global sensitivity of this compression algorithm is $\Omega(n)$. Thus, to achieve differential privacy, the random padding would need to be very long, i.e., at least $\Omega(n)$. This would immediately negate any efficiency gains since, after padding, the compressed file would not be shorter than the original file!

\paragraph*{LZ77 Compression Scheme.} 

The LZ77 compression algorithm \cite{LZ77} takes as input a string $w \in \Sigma^n$ and outputs a sequence of blocks $B_1,\ldots, B_k$ where each block $B_i =[q_i, \ell_i, c_i]$ is a tuple consisting of two non-negative integers $q_i$ and $\ell_i$ and a character $c_i \in \Sigma$. Intuitively, if blocks $B_1,\ldots, B_{i-1}$ encode the first $\ctc$ characters of $w$ then the block $B_i =[q_i, \ell_i, c_i]$ encodes the next $\ell_i +1$ characters of $w$. In particular, if we have already recovered $w[1,\ctc]$ then the decoding algorithm can recover the substring $w[\ctc+1,\ctc+\ell_i+1] = w[q_i,q_i+\ell_i-1] \circ c_i$ can be recovered by copying $\ell_i$ characters from $w[1,\ctc]$ beginning at index $q_i$ and then appending the character $c_i$. The compression algorithm defines the first block as $B_1 = [0,0,w[1]]$ (where $w[i]$ denotes the $i\th$ character of $w$) and then initializes counters $\ctb=2$ and $\ctc=1$. Intuitively, the counter $\ctb$ indicates that we will output block $B_\ctb$ next and the parameter $\ctc$ counts the numbers of characters of $w$ that have already been encoded by blocks $B_1,\ldots, B_{\ctb-1}$. In general, to produce the $\ctb\th$ block we will find the longest prefix of $w[\ctc+1,n]$ that is a substring of $w[\max\{1,\ctc-W+1\},\ctc]$, where $W\leq n$ denotes the size of the sliding window --- the algorithm only stores $W$ most recent characters to save space. If the character $w[\ctc+1]$ is not contained in $w[\max\{1,\ctc-W+1\},\ctc]$ then we will simply set $B_\ctb=[0,0,w[\ctc+1]]$ and increment the counters $\ctb$ and $\ctc$. Otherwise, suppose that the longest such prefix has length $\ell_\ctb > 0$ then for some $z$ such that $\max\{0,\ctc-W\}+\ell_\ctb\leq z\leq \ctc$ we have $w[\ctc+1,\ctc+\ell_\ctb] = w[z-\ell_\ctb+1,z]$. Then we set $B_\ctb = [z-\ell_\ctb+1, \ell_\ctb, w[\ctc+\ell_\ctb+1]]$, increment $\ctb$, and update $\ctc = \ctc+\ell_\ctb+1$. We terminate the algorithm if $\ctc=n$ (i.e., the entire string has been encoded) and then output the blocks $(B_1,\ldots,B_{\ctb-1})$. See \figref{fig:LZ77} for a toy example of running the LZ77 compression for a string $w=``aababcdbabca"\in\Sigma^{12}$ with $\Sigma=\{a,b,c,d\}$ and $W=n$.

\begin{figure}[ht!]  
    \centering
    \begin{tikzpicture}
        \node[anchor=west] (w) at (0,0) {$w$};
        \node[draw,minimum height=0.5cm,label={north:$B_1$},anchor=west] (B1) at ($(w)+(1,0)$) {$a$};
        \node[draw,minimum height=0.5cm,label={north:$B_2$},anchor=west] (B2) at ($(B1.east)+(0.1,0)$) {$ab$};
        \node[draw,minimum height=0.5cm,label={north:$B_3$},anchor=west] (B3) at ($(B2.east)+(0.1,0)$) {$abc$};
        \node[draw,minimum height=0.5cm,label={north:$B_4$},anchor=west] (B4) at ($(B3.east)+(0.1,0)$) {$d$};
        \node[draw,minimum height=0.5cm,label={north:$B_5$},anchor=west] (B5) at ($(B4.east)+(0.1,0)$) {$babca$};
        \node[anchor=west] at ($(w.west)+(0,-0.8)$) {Initialize: $B_1=[0,0,a]$, $\ctb=2$, and $\ctc=1$};
        \node[anchor=west] at ($(w.west)+(0,-1.4)$) {Intermediate Steps:};
        \node[anchor=north west] (table) at ($(w.west)+(0,-1.5)$) {
        \begin{tabular}{ccccccc}
        $\ctb$ & $\ctc$ & $\ell_\ctb$ & $z$ & $B_\ctb$ & updated $\ctb$ & updated $\ctc$ \\
        \hline
        $2$ & $1$ & $1$ & $1$ & $[1,1,b]$ & $3$ & $3$ \\
        $3$ & $3$ & $2$ & $3$ & $[2,2,c]$ & $4$ & $6$ \\
        $4$ & $6$ & $0$ & N/A & $[0,0,d]$ & $5$ & $7$ \\
        $5$ & $7$ & $4$ & $6$ & $[3,4,a]$ & $6$ & $12$
        \end{tabular}
        };
        \node[anchor=west] (output) at ($(table.south west)+(0,-0.3)$) {Output: $(B_1=[0,0,a],B_2=[1,1,b],B_3=[2,2,c],B_4=[0,0,d],B_5=[3,4,a])$};
    \end{tikzpicture}
    \caption{Example of Running the LZ77 Compression for a string $w=``aababcdbabca$''.}
    \figlab{fig:LZ77}
\end{figure}

The decompression works straightforwardly as follows: (1) Given the compression $(B_1,\ldots,B_t)$, parse $B_1=[0,0,c_1]$ and initialize the string $w=``c_1"$. (2) For $i=2$ to $t$, parse $B_i=[q_i,\ell_i,c_i]$ and convert it into a string $v$: if $q_i=\ell_i=0$ then $v\coloneqq c_i$; otherwise, $v\coloneqq w[q_i,q_i+\ell_i-1]\circ c_i$. Then update $w\gets w\circ v$. 

The LZ77 compression scheme described above is called the \emph{non-overlapping (i.e., without self-referencing)} LZ77. In the non-overlapping LZ77, it is always the case that $q_i+ \ell_i-1\leq\ctc$ (so that the substring that we copy from the previous encoding does not overlap with the current encoding). We still obtain a valid compression scheme if we allow for the case $q_i+\ell_i -1 > \ctc$, and we refer to this scheme as LZ77 compression with self-referencing. We focus on the non-overlapping LZ77 in the main body of this paper. However, as we demonstrate in \appref{app:missingproofC}, our techniques extend to LZ77 with self-referencing.

\raggedbottom

\section{Differentially Private Compression}\seclab{sec:dpcompress}

This section presents a general framework that transforms any compression scheme $\compress:\Sigma^*\rightarrow(\Sigma')^*$ to another compression scheme called $\dpcompress:\Sigma^*\times\mathbb{R}\times\mathbb{R}\rightarrow(\Sigma')^*$ such that for any $\epsilon>0$ and $\delta \in(0,1)$, the algorithm  $A_{\epsilon,\delta}(w) \coloneqq \dpcompress(w,\epsilon,\delta)$ is a $(\epsilon,\delta)$-differentially private compression algorithm as defined formally in \defref{dpcompression}. 

\begin{definition} \deflab{dpcompression}
We say that a randomized compression algorithm $A:\Sigma^*\rightarrow(\Sigma')^*$  is  \emph{$(\epsilon,\delta)$-differentially private} if for every $n \geq 1$, every pair of neighboring strings $w \sim w' \in \Sigma^n$ and all $S \subseteq \mathbb{N}$ we have  \[ \Pr\left[\left| A(w)\right|\in S\right]\leq e^\epsilon\cdot\Pr\left[\left| A(w')\right|\in S\right]+\delta \ .\] 
\end{definition}

Intuitively, $\dpcompress(w,\epsilon,\delta)$ works by running $\compress(w)$ and adding a random amount of padding. In particular, $\dpcompress(w,\epsilon,\delta)$ outputs $\compress(w) \circ 0\circ 1^{p-1}$ where $p$ is a random variable defined as follows: $p=\max\left\{1, \lceil Z+k\rceil \right\}$: where $Z\sim\Lap(\GS_\compress/\epsilon)$ is a random variable following Laplace distribution with mean $0$ and scale parameter $\GS_\compress/\epsilon$ and $k=\frac{\GS_\compress}{\epsilon}\ln(\frac{1}{2\delta})+\GS_\compress +1$ is a constant (see \algref{algorithm:DPCompress}). The decompression algorithm works straightforwardly as it is easy to remove padding $0\circ 1^{p-1}$ even though one does not know $p$, i.e., given the compressed string $\dpcompress(w,\epsilon,\delta)=\compress(w)\circ0\circ1^{p-1}$, we could start removing $1$'s from the right until we see $0$. After removing the $0$ as well, we get $\compress(w)$. Now we could obtain $w$ by calling $\decompress(\compress(w))$.

Intuitively, the output $Z+\left|\compress(w) \right|$ preserves $\epsilon$-DP since this is just the standard Laplacian Mechanism and the output $\lceil Z+\left|\compress(w) \right| + k\rceil =  \left|\compress(w) \right| + \lceil Z+k\rceil$ also preserves $\epsilon$-DP since it can be viewed as post-processing applied to a DP output. Finally, note that by our choice of $k$ we have $\Pr[p \neq \lceil Z+k\rceil ] =  \Pr[Z+k \leq 0] \leq \delta$. It follows that the output $\left|\compress(w) \right| + p$ preserves $(\epsilon,\delta)$-DP, as shown in \thmref{theorem_dp}.

\begin{algorithm}[ht!]
\caption{$\dpcompress(w, \epsilon, \delta)$}
\begin{algorithmic}

    \State \textbf{Input:} $w,\epsilon,\delta$

    \State \textbf{Output:} Differentially Private Padded Data.

    \State $Z \gets \Lap\left(\frac{\GS_\compress}{\epsilon}\right)$

    \State $k$ $\gets$ $\frac{\GS_\compress}{\epsilon}\ln(\frac{1}{2\delta})+\GS_\compress +1$ 
    
    \State $p \gets \max\left\{1, \lceil Z+k\rceil \right\}$

    \State \textbf{Return} $\pad(\compress(w), p) \coloneqq \compress(w)\circ0\circ1^{p-1}$
\end{algorithmic}\alglab{algorithm:DPCompress}
\end{algorithm}

\newcommand{\dptheorem}{Define $A_{\epsilon,\delta}(w) \coloneqq \dpcompress(w,\epsilon,\delta)$ then, for any $\epsilon, \delta >0$, the algorithm $A_{\epsilon,\delta}$ is a  $(\epsilon,\delta)$-differentially private compression scheme. 
}

\begin{theorem}\thmlab{theorem_dp}
    \dptheorem
\end{theorem}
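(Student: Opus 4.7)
The plan is to analyze the length distribution of $A_{\epsilon,\delta}(w)$ directly and verify the $(\epsilon,\delta)$-DP inequality by combining a termwise Laplace density-ratio comparison with an explicit ``bad mass'' bound that captures the asymmetry introduced by the truncation $p \geq 1$.

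First I would write the PMF of $|A_{\epsilon,\delta}(w)|$ explicitly. Setting $n_w \coloneqq |\compress(w)|$ and $\mu_w(s) \coloneqq \Pr[|A_{\epsilon,\delta}(w)| = s] = \Pr[p = s - n_w]$, the definition of $p = \max\{1, \lceil Z + k \rceil\}$ with $Z \sim \Lap(\GS_\compress/\epsilon)$ yields $\mu_w(s) = 0$ for $s < n_w + 1$, $\mu_w(n_w + 1) = \Pr[Z \leq 1 - k]$ (a whole tail, containing all the extra mass deposited by the truncation), and $\mu_w(n_w + m) = \Pr[Z \in (m - 1 - k,\, m - k]]$ for integers $m \geq 2$ (a Laplace integral over a unit interval). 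The target inequality is then equivalent to showing $\sum_{s \in S}\bigl(\mu_w(s) - e^\epsilon \mu_{w'}(s)\bigr) \leq \delta$ for every $S \subseteq \mathbb{N}$.

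Next I would show that, writing $\Delta = n_w - n_{w'}$ with $|\Delta| \leq \GS_\compress$, most terms in the sum are already non-positive. For $s$ with both $s \geq n_w + 2$ and $s \geq n_{w'} + 2$, the two PMFs are Laplace-density integrals over unit intervals differing by a shift of $\Delta$, and the standard Laplace bound $f_Z(z)/f_Z(z+\Delta) \leq e^{\epsilon|\Delta|/\GS_\compress} \leq e^\epsilon$ gives $\mu_w(s) \leq e^\epsilon \mu_{w'}(s)$. For $s \leq \min(n_w, n_{w'})$ both PMFs vanish. The potentially \emph{positive} terms therefore live in the narrow band $[\min(n_w, n_{w'})+1,\, \max(n_w, n_{w'})+1]$, where exactly one of $\mu_w, \mu_{w'}$ carries the ``tail'' mass $\Pr[Z \leq 1-k]$ while the other either vanishes or is a much smaller shifted-interval integral.

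Finally I would estimate the boundary contribution case by case. If $\Delta \geq 0$, only $s = n_w + 1$ contributes a positive excess, bounded trivially by $\mu_w(n_w+1) = \Pr[Z \leq 1 - k]$. If $\Delta < 0$, the positive terms sit at $s \in [n_w+1,\, n_{w'}]$ (where $\mu_{w'}(s) = 0$) and telescope to $\sum_{m=1}^{|\Delta|}\mu_w(n_w+m) = \Pr[Z \leq |\Delta| - k]$, while the term at $s = n_{w'}+1$ can be shown to be non-positive by a short calculation verifying that the tail $\mu_{w'}(n_{w'}+1)$ already dominates the shifted interval $\mu_w(n_{w'}+1)$. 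In either case the total excess is at most $\Pr[Z \leq \GS_\compress - k]$, and substituting $k = \frac{\GS_\compress}{\epsilon}\ln\bigl(\frac{1}{2\delta}\bigr) + \GS_\compress + 1$ into the Laplace CDF gives $\Pr[Z \leq \GS_\compress - k] = \delta \cdot e^{-\epsilon/\GS_\compress} \leq \delta$, completing the proof.

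The main obstacle is tightening the additive slack to exactly $\delta$: a naïve ``good event / bad event'' split applied on both sides of the DP inequality costs $(1+e^\epsilon)\Pr[Z \leq -k]$, and no tuning of $k$ recovers $\delta$ for all $\epsilon$ that way. The termwise bookkeeping above sidesteps this double counting by recognizing that for any single $S$ the positive excess is always a \emph{single} cumulative tail probability $\Pr[Z \leq |\Delta| - k]$, and the extra $+\GS_\compress + 1$ in the definition of $k$ is precisely the slack needed to absorb the worst-case shift $|\Delta| \leq \GS_\compress$ into that one tail bound.
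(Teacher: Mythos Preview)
Your proof is correct and follows essentially the same strategy as the paper: identify a small ``bad'' region near the truncation boundary where the $e^\epsilon$ density-ratio may fail, bound its total $\mu_w$-mass by (at most) $\delta$ using the Laplace CDF and the choice of $k$, and handle all remaining points by the standard Laplace density-ratio bound. The paper packages this via an explicit $\BAD$ set (either $\{n_w+1\}$ or $[n_w+1,n_{w'}+1]$ depending on the sign of $\Delta$) and a separate claim showing $\Pr[\text{output}\in\BAD]\leq\delta$, whereas you compute the PMF termwise and telescope the positive excesses directly; these are two presentations of the same argument.
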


The proof of \thmref{theorem_dp} is straightforward using standard DP techniques and therefore is deferred to \FullVersion{\appref{app:dpcompress}}{the full version \cite{cryptoeprint:2025/1733}}.

On average, the amount of padding added is approximately $k=\frac{\GS_\compress}{\epsilon}\ln(\frac{1}{2\delta})+\GS_\compress +1$. If $k  = o(n)$ then it will still be possible for $\dpcompress$ to achieve efficient compression ratios, i.e., $\left| \dpcompress(w,\epsilon,\delta) \right|/ n = \left| \compress(w) \right|/ n + o(1)$. On the other hand, if $k = \Omega(n)$ then the padding alone will prevent us from achieving a compression ratio of $o(1)$. Thus, our hope is to find practical compression schemes with global sensitivity $\GS_\compress = o(n)$ --- LZ78~\cite{LZ78} does not satisfy these criteria \cite{lagarde2018lempel}. This motivates our study of the global sensitivity of the LZ77 compression scheme \cite{LZ77} --- see \secref{sec:upperbound} for details.

\section{Upper Bound for the Global Sensitivity of LZ77}\seclab{sec:upperbound}

Recall that in \secref{sec:dpcompress}, we provided the framework to convert any compression scheme to a $(\epsilon,\delta)$-DP compression scheme by adding a random amount of padding $p=\max\left\{1, \lceil Z+k\rceil \right\}$, where $Z\sim\Lap(\GS_\compress/\epsilon)$ and $k=\frac{\GS_\compress}{\epsilon}\ln(\frac{1}{2\delta})+\GS_\compress +1$ is a constant. We observed that as long as $k=o(n)$, we can still argue that $\dpcompress$ achieves efficient compression ratios, i.e., $\left| \dpcompress(w,\epsilon,\delta) \right|/ n = \left| \compress(w) \right|/ n + o(1)$. That was the motivation to find practical compression schemes with global sensitivity $o(n)$. We argue that the LZ77 compression scheme \cite{LZ77} satisfies this property. For simplicity of exposition, we will assume that $W=n$ in most of our analysis and then briefly explain how the analysis changes when $W < n$. In particular, we prove that the global sensitivity of the LZ77 compression scheme is $\O{W^{2/3}\log n}$ or $\O{n^{2/3}\log n}$ when $W=n$.

\subsection{Analyzing the Positions of Blocks}\seclab{position}

Recall that LZ77 \cite{LZ77} with the compression function $\compress:\Sigma^*\rightarrow(\Sigma')^*$ outputs a sequence of blocks $B_1,\ldots,B_t$ where each block is of the form $B_i=[q_i,\ell_i,c_i]$ such that $0\leq q_i,\ell_i< n$ are nonnegative integers and $c_i\in\Sigma$ is a character. This implies that for a string $w\in\Sigma^n$, it takes $2\lceil\log n\rceil + \lceil\log|\Sigma|\rceil$ bits to encode each block and this is the same for all the blocks. Therefore, the length of compression is proportional to the number of blocks $t$, i.e., we have $|\compress(w)| = t(2\lceil\log n\rceil + \lceil\log|\Sigma|\rceil)$. Let $B_1',\ldots, B_{t'}'$ denotes the blocks when compressing $w'$ instead of $w$. Our observation above tells us that to analyze the global sensitivity of the LZ77 compression scheme, it is crucial to understand the upper/lower bound of $t'-t$ (WLOG we can assume $t'\geq t$ since we can always change the role of $w$ and $w'$) where $t$ (resp. $t'$) is the number of blocks in $\compress(w)$ (resp. $\compress(w')$) for neighboring strings $w\sim w'\in\Sigma^n$.

To analyze the difference between the number of blocks $t'-t$, it is helpful to introduce some notation. First, since $w \sim w'$ we will use $j \leq n$ to denote the unique index such that $w[j] \neq w'[j]$ --- note that $w[i] = w'[i]$ for all $i \neq j$. Second, the block $B_i=[q_i,\ell_i,c_i]$ can be viewed intuitively as an instruction for the decompression algorithm to locate the substring $w[q_i,q_i+\ell_i-1]$ from the part of $w$ that we have already decompressed, copy this substring and append it to the end of the of the decompressed file followed by the character $c_i$. While the inputs $q_i$ and $\ell_i$ tell us where to copy {\em from} it is also useful to let $s_i \coloneqq 1+\sum_{r=1}^{i-1} (\ell_r +1)$ and $f_i \coloneqq \sum_{r=1}^{i} (\ell_r +1)$ denote the location where the block is copied {\em to}, i.e., we have $w[s_i,f_i]  = w[q_i,q_i+\ell_i-1] \circ c_i$. 

We say that block $B_k'$ starts inside block $B_i$ if $s_i\leq s_k'\leq f_i$ and we indicate this with the predicate  $\startinside(i,k)\coloneqq 1$. Otherwise, if $s_k' < s_i$ or $s_k' > f_i$ then we have  $\startinside(i,k)\coloneqq 0$. Our key technical insight is that if $\startinside(i,k)=1$ then for block $B_{k+1}'$ we must have $f_{k+1}' \geq f_k$. In particular, for later blocks $B'_{k'}$ with $k'>k$ we will have $s'_{k'} > f_i$ so $\startinside(i,k')=0$. In particular, if we let $\M_i \coloneqq \{ k \in [t']~:~\startinside(i,k)=1\}$ then   \lemref{lemma:start_inside} tells us that one of three cases applies: (1) $\M_i = \emptyset$, (2) $\M_i = \{k\}$ for some $k \leq t'$, or (3) $\M_i = \{k,k+1\}$ for some $k < t'$. In any case, we have $\left| \M_i \right| \leq 2$. This has already been proven by Akagi et al. \cite{AFI23}, but for the sake of completeness, we include our proof in \FullVersion{\appref{app:missingproofA}}{the full version \cite{cryptoeprint:2025/1733}}.

\newcommand{\lemstartinsidestatement}{
Let $\compress:\Sigma^*\rightarrow(\Sigma')^*$ be the LZ77 compression algorithm and $w,w'\in\Sigma^n$ such that $w\sim w'$. Let $(B_1,...,B_t)\gets\compress(w)$ and $(B'_1,...,B'_{t'})\gets\compress(w')$. Then for all $i\in[t]$, either $\M_i = \emptyset$ or $\M_i=[i_1,i_2]$ for some $i_1\leq i_2\leq i_1+1$. In particular, $|\mathcal{M}_i|\leq 2, \forall i \in [t]$.
}
\begin{lemma}[{\cite{AFI23}}]\lemlab{lemma:start_inside}
    \lemstartinsidestatement
\end{lemma}



Now we can partition the blocks $B_1,\ldots,B_t$ into three sets based on the size of $\M_i$. In particular, let $\B_m \coloneqq \{ i ~:~|\M_i| = m\}$ for $m \in \{0,1,2\}$ --- \lemref{lemma:start_inside} implies that $\B_m = \emptyset$ for $m \geq 3$. To upper bound $t'-t$, it is essential to \emph{count the number of type-2 blocks}. Let $t_m = \left| \B_m  \right|$ be the number of type-$m$ blocks for $m=0,1,2$. Then we have the following claim.

\begin{claim}\claimlab{claim:set:blocksm2:GS}
Let $\compress:\Sigma^*\rightarrow(\Sigma')^*$ be the LZ77 compression function and $w,w'$ be strings of length $n$ and $w\sim w'$. Let $(B_1,\ldots,B_t)\gets\compress(w)$ and $(B_1',\ldots,B_{t'}')\gets\compress(w')$. Then $t'-t \leq t_2$.
\end{claim}
\begin{proof}
We observe $t_0+t_1+t_2=t$ by \lemref{lemma:start_inside} and since there are $m$ blocks in $(B'_1,\ldots,B'_{t'})$ that start inside type-$m$ blocks in $(B_1,\ldots,B_t)$ for $m=0,1,2$, we have $t'=\sum_{m=0}^2 m\cdot t_m = t_1 + 2t_2$, which implies that $t'-t = t_2 - t_0 \leq t_2$.
\end{proof}

\claimref{claim:set:blocksm2:GS} implies that $\left| \left| \compress(w) \right| - \left| \compress(w') \right| \right| \leq (t'-t)(2\left\lceil \log n \right \rceil + \left\lceil \log |\Sigma|\right \rceil)\leq  t_2(2\left\lceil \log n \right \rceil + \left\lceil \log |\Sigma|\right \rceil)$ since it takes $2\left\lceil \log n \right \rceil + \left\lceil \log |\Sigma|\right \rceil$ bits to encode each block (see \claimref{claim:compress:GS} in \appref{app:missingproofA}). Hence, upper bounding the number of type-2 blocks $t_2$ would allow us to upper bound the global sensitivity of the LZ77 compression.

\subsection{Counting Type-2 Blocks using Uniqueness of Offsets}\seclab{typetwoblocks}

We can effectively count the number of type-2 blocks by considering the location of the unique index $j$ such that $w[j]\neq w'[j]$ and show that the number of type-2 blocks is at most $\mathcal{O}(n^{2/3})$ as stated in \lemref{lemma:length_cases_total}. 

\newcommand{\lemblocknumstatement}{
Let $\compress:\Sigma^*\rightarrow(\Sigma')^*$ be the LZ77 compression function and $w,w'$ be strings of length $n$ and $w\sim w'$. Let $(B_1,\ldots,B_t)\gets\compress(w)$ and $(B_1',\ldots,B_{t'}')\gets\compress(w')$. Then $t_2\leq \frac{\sqrt[3]{9}}{2} n^{2/3} + \frac{\sqrt[3]{3}}{2} n^{1/3} + 1$.
}
\begin{lemma}\lemlab{lemma:length_cases_total}
    \lemblocknumstatement
\end{lemma}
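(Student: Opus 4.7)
The plan is to combine the trivial length-budget inequality with a finer structural constraint on type-2 blocks of small length. Since the target intervals $[s_i, f_i]$ for $i \in \B_2$ are pairwise disjoint subintervals of $[1,n]$, we have $\sum_{i \in \B_2}(\ell_i + 1) \leq n$. This alone only yields $t_2 \leq n$, so the crux of the proof is to show that \emph{short} type-2 blocks are scarce enough to force $t_2 = O(n^{2/3})$.

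The key structural claim I would aim for is: for every integer $L \geq 1$, the number of type-2 blocks $B_i$ with $\ell_i + 1 \leq L$ is at most $L^2/2$ (up to lower-order terms). To prove this, I would inject the set of such blocks into a set of parameter pairs of size $L^2/2$. A natural candidate is the pair $(a_i, b_i)$, where, letting $B'_{k_i}, B'_{k_i+1}$ denote the two inner $w'$-blocks that start inside $B_i$, we take $a_i = s'_{k_i+1} - s_i \in \{1,\ldots,\ell_i\}$ to be the split offset inside $B_i$ and $b_i$ to be a second offset/length parameter extracted from $B'_{k_i+1}$ (e.g., $q'_{k_i+1}$ or $\ell'_{k_i+1}$). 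LZ77's greedy longest-match rule should then force the pair $(a_i, b_i)$ to uniquely determine $B_i$, making the assignment injective; since both coordinates lie in a range of size $O(L)$, this gives $O(L^2)$ type-2 blocks of length at most $L$. This is exactly the ``uniqueness of offsets'' argument suggested by the section title.

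With the structural claim in hand, I would sort the type-2 block lengths in non-decreasing order as $L_1 \leq L_2 \leq \cdots \leq L_{t_2}$, so that the claim forces $L_m \geq \sqrt{2m}$. Summing,
\[
n \;\geq\; \sum_{m=1}^{t_2} L_m \;\geq\; \sum_{m=1}^{t_2} \sqrt{2m} \;\geq\; \frac{2\sqrt{2}}{3}\,t_2^{3/2},
\]
and rearranging yields $t_2 \leq \bigl(\tfrac{3n}{2\sqrt{2}}\bigr)^{2/3} = \tfrac{\sqrt[3]{9}}{2}\,n^{2/3}$. The additive corrections $\tfrac{\sqrt[3]{9}}{2}\,n^{1/3} + 1$ in the lemma statement should arise from integer rounding (each $L_m$ is a positive integer, so $L_m \geq \lceil \sqrt{2m}\,\rceil$) and from the standard gap between the discrete sum and its integral approximation.

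The main obstacle I expect is verifying the $O(L^2)$ structural claim, i.e., constructing an injective encoding of short type-2 blocks into small pairs of offset/length parameters. This likely requires a case analysis on the position of the unique disagreement index $j$ relative to $B_i$'s target $[s_i,f_i]$ and source $[q_i,q_i+\ell_i-1]$, in the spirit of the proof of \lemref{lemma:start_inside}. The insight I expect to exploit is that LZ77's greedy rule pins down the split inside $B_i$ to the first position at which $j$'s corruption breaks the match, so that knowing the split offset $a_i$ together with the offset or length chosen by the greedy algorithm for $B'_{k_i+1}$ is enough to recover the identity of $B_i$ and, hence, its position.
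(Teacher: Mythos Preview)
Your high-level plan---combine the length budget $\sum_{i\in\B_2}(\ell_i+1)\le n$ with a structural bound of the form ``at most $O(L^2)$ type-2 blocks of length $\le L$'' and then optimize---is exactly the skeleton of the paper's proof, and your final computation recovers the right constant $\frac{\sqrt[3]{9}}{2}$.

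The gap is in the structural claim itself. Your proposed injection sends $B_i$ to a pair $(a_i,b_i)$ built from the $w'$-side blocks $B'_{k_i},B'_{k_i+1}$, with $a_i=s'_{k_i+1}-s_i$ and $b_i\in\{q'_{k_i+1},\ell'_{k_i+1}\}$. The first coordinate is indeed in $\{1,\dots,\ell_i\}$, but neither candidate for $b_i$ lies in a range of size $O(L)$: the source pointer $q'_{k_i+1}$ can be anywhere in $[1,s'_{k_i+1}-1]$, and since \lemref{lemma:start_inside} only guarantees $f'_{k_i+1}\ge f_i$ (not $\le$), the length $\ell'_{k_i+1}$ can far exceed $\ell_i$. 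So neither choice yields a codomain of size $O(L^2)$, and you also have no argument for injectivity of the map.

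The paper's route is simpler and stays entirely on the $w$-side: the ``offsets'' in the section title are the \emph{source} parameters $(q_i,\ell_i)$ of $B_i$ itself, not anything about the $w'$-blocks. The two structural facts are (i) every type-2 block $B_i$ with $s_i>j$ must copy \emph{across} the corrupted index, i.e.\ $j\in[q_i,q_i+\ell_i)$ (otherwise the first $w'$-block starting inside $B_i$ could already reach $f_i$, forcing $|\M_i|\le 1$); and (ii) distinct type-2 blocks have distinct pairs $(q_i,\ell_i)$ (if two shared the same source, the later one could copy from the earlier one's target interval and again reach its own end in one $w'$-block). From (i), $q_i\in(j-\ell_i,\,j]$, so there are at most $\ell$ admissible values of $q_i$ for each fixed $\ell_i=\ell$; together with (ii) this gives $|\B_2^\ell|\le\ell$ (plus one for the unique block with $j\in[s_i,f_i]$). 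Summing $\ell(\ell+1)$ up to a threshold $z$ and solving $\frac{1}{3}z(z+1)(z+2)\le n$ gives the stated bound. This replaces your speculative $w'$-side injection with a direct pigeonhole on $(q_i,\ell_i)$, which is both the correct mechanism and easier to verify.
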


\begin{proof}[Proof Sketch]
We only give the proof sketch here and the formal proof can be found in \appref{app:missingproofC}. To prove \lemref{lemma:length_cases_total}, we show the following helper claims. 
\begin{itemize}
    \item First, we show that if $B_i$ is a type-2 block then either $s_i\leq j\leq f_i$ or $q_i\leq j<q_i+\ell_i$ should hold (see \claimref{claim:length_cases_total_a} in \appref{app:missingproofC}). Intuitively, any type-2 block $B_i$ with $s_i > j$ is constructed by copying the longest possible substring \emph{from} a section around the index $j$ i.e., $j \in [q_i, q_i+\ell_i)$. Type-2 blocks cannot occur before the strings diverge at index $j$ i.e., if $f_i < j$ then $B_i$ is not a type-2 block.
    \item Second, we show that if blocks $B_{i_1} = [q_{i_1}, \ell_{i_1}, c_{i_1}]$ and $B_{i_2}= [q_{i_2}, \ell_{i_2}, c_{i_2}]$ are both type-2 blocks with $s_{i_1}, s_{i_2} > j$ then $(q_{i_1},\ell_{i_1})\neq(q_{i_2},\ell_{i_2})$ (see \claimref{claim:length_cases_total_b} in \appref{app:missingproofC}).  In particular, the pair $(q_i,\ell_i)$ must be {\em unique} for each type-2 block and, if $s_i > j$, then we must have $j \in [q_i, q_i+\ell_i)$ by our observation above.
    
    \item Finally, followed by the uniqueness of offsets from the previous claim, we can show that the number of type-2 blocks with length $\ell$ is at most $\ell$ by pigeonhole principle, i.e., if $\B_2^\ell\coloneqq\{i\in\B_2:\ell_i=\ell\}$ and if $s_{i^*}\leq j\leq f_{i^*}$ for some $i^*\in[t]$, then $|\B_2^\ell|\leq\ell$ for all $\ell\neq\ell_{i^*}$ and $|\B_2^{\ell_{i^*}}|\leq\ell_{i^*}+1$ (see \claimref{claim:length_cases_total_c} in \appref{app:missingproofC}).
\end{itemize}
Let $x_\ell \coloneqq |\B_2^\ell|$ denote the number of type-2 blocks with length $\ell$. The total number of type-2 blocks is given by the sum $\sum_\ell x_{\ell}$. If we try to maximize $\sum_\ell x_{\ell}$ subject to the constraints that $x_\ell \leq \ell$ (for all $\ell\neq\ell_{i^*}$), $x_{\ell_{i^*}}\leq \ell_{i^*}+1$, and $\sum_\ell x_\ell  (\ell+1) \leq n$,  we obtain a solution where we set $x_\ell = \ell$ for $\ell \leq z$ with $\ell\neq \ell_{i^*}$ (and set $x_{\ell_{i^*}}=\ell_{i^*}+1$ if $\ell_{i^*}\leq z$) and $x_\ell = 0$ for $\ell > z$ by a simple swapping argument (see the formal proof in \appref{app:missingproofC} for details). Note that $x_z$ could be less than $z$, i.e., for the threshold $z$, we have $x_z=c$ for some $0<c\leq z$.
To find the threshold $z$, we observe that $\left[\sum_{\ell=0}^{z-1} \ell(\ell+1)\right] + x_z(z+1) = \sum_{\ell \geq 0} x_\ell (\ell+1) \leq n$. Simplifying the sum, we have $\left[\sum_{\ell=0}^{z-1} \ell(\ell+1)\right] + x_z(z+1)=\frac{1}{3}(z-1)z(z+1) + x_z(z+1) \geq z^3/3$ where the last inequality follows because  $x_z\geq 1$. Thus, we have $z^3\leq 3n$ and $z\leq\sqrt[3]{3n}$.
Setting this value of $z$, we have $t_2 \leq \left(\sum_{\ell=0}^{\sqrt[3]{3n}} \ell\right)+1 =  \frac{\sqrt[3]{9}}{2} n^{2/3} + \frac{\sqrt[3]{3}}{2} n^{1/3} + 1$.
\end{proof}

While we focus on analyzing the global sensitivity of the \emph{non-overlapping} LZ77 compression scheme, one can easily extend the results to the LZ77 with self-referencing and achieve the same upper bound for $t_2=\O{n^{2/3}}$. See \appref{app:missingproofCSR} for details.

We remark that this is a key result to upper bound the global sensitivity of the LZ77 compression scheme from our previous discussion that it takes $\O{\log n}$ bits to encode each block. Since the number of type-2 blocks is $\mathcal{O}(n^{2/3})$ and $t'-t$ is bounded by the number of type-2 blocks, we can combine those results and conclude that the global sensitivity of the LZ77 compression scheme is upper bounded by $\mathcal{O}(n^{2/3}\log n)$, as stated in \thmref{thm:thmupperbound}.

\begin{theorem}\thmlab{thm:thmupperbound}
    Let $\compress:(\Sigma)^*\rightarrow(\Sigma')^*$ be the LZ77 compression function with unbounded sliding window size $W=n$. Then 
    \begin{equation*}
    \GS_\compress \leq \left(\frac{\sqrt[3]{9}}{2} n^{2/3} + \frac{\sqrt[3]{3}}{2} n^{1/3} + 1\right) (2\lceil\log n\rceil+\lceil\log|\Sigma|\rceil) = \O{n^{2/3}\log n}.    
    \end{equation*}
\end{theorem}
\begin{proof}
    Let $(B_1,\ldots,B_t)\gets\compress(w)$ and $(B'_1,\ldots,B'_{t'})\gets\compress(w')$.
    By \claimref{claim:set:blocksm2:GS} and \lemref{lemma:length_cases_total}, we have $t'-t\leq t_2\leq \frac{\sqrt[3]{9}}{2} n^{2/3} + \frac{\sqrt[3]{3}}{2} n^{1/3} + 1$. We know $|\compress(w)|=t(2\lceil\log n\rceil+\lceil\log|\Sigma|\rceil)$ and $|\compress(w')|=t'(2\lceil\log n\rceil+\lceil\log|\Sigma|\rceil)$. Hence,
    $\left|\left| \compress(w) \right| - \left| \compress(w') \right|\right| = |t-t'|(2\lceil\log n\rceil+\lceil\log|\Sigma|\rceil)
    \leq \left(\frac{\sqrt[3]{9}}{2} n^{2/3} + \frac{\sqrt[3]{3}}{2} n^{1/3} + 1\right) (2\lceil\log n\rceil+\lceil\log|\Sigma|\rceil)$.
Since this inequality hold for arbitrary $w\sim w'$ of length $n$, we have
\begin{align*}
    &\mathtt{GS}_{\mathtt{Compress}} = \max_{w \in \Sigma^n}\max_{w' \in \Sigma^n:w\sim w'} \left| |\compress(w)| - |\compress(w')| \right|\\
    &\quad\leq \left(\frac{\sqrt[3]{9}}{2} n^{2/3} + \frac{\sqrt[3]{3}}{2} n^{1/3} + 1\right) (2\lceil\log n\rceil+\lceil\log|\Sigma|\rceil)=\O{n^{2/3}\log n}.\qedhere
\end{align*}
\end{proof}

For the LZ77 with self-referencing, \lemref{lemma:start_inside} essentially still holds except that we might have one type-3 block (i.e., $t_3 = 1$), i.e., the block containing the edited position $j$ --- see \cite[p.18]{AFI23}. Thus, the global sensitivity of the LZ77 with self-referencing would be bounded by $\O{t_2 \log n + 2 t_3 \log n} = \O{t_2 \log n + 2\log n}=\O{n^{2/3}\log n}$. See \thmref{thm:thmupperboundsr} in \appref{app:missingproofCSR} for details.

\subsection{Bounded Sliding Window ($W < n$)}\seclab{boundedwindow}
The analysis for the case when $W<n$ is very similar. When $W < n$ then we obtain an additional constraint on any type-2 block as stated in \claimref{claim:newconstraint}. 

\newcommand{\newconstraint}{
Let $\compress:(\Sigma)^*\rightarrow(\Sigma')^*$ be the LZ77 compression function with sliding window size $W$ and $w\sim w'$ be strings of length $n$ with $w[j]\neq w'[j]$. Let $(B_1,\ldots,B_t)\gets\compress(w)$ and $(B'_1,\ldots,B'_{t'})\gets\compress(w')$. If $B_i\in\B_2$ then $s_i\leq j+W$.
}
\begin{claim}\claimlab{claim:newconstraint}
    \newconstraint
\end{claim}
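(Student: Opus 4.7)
The plan is to reduce the claim to the dichotomy already established by Claim~\ref{claim:length_cases_total_a} (as referenced in the proof sketch of \lemref{lemma:length_cases_total}), which says that if $B_i \in \B_2$ then either $s_i \leq j \leq f_i$ or $q_i \leq j < q_i + \ell_i$. Both cases should then yield $s_i \leq j + W$ essentially for free, one from the definition of the sliding window constraint in LZ77 and the other trivially.

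First, I would dispose of the easy case $s_i \leq j \leq f_i$: since $W \geq 1$, we get $s_i \leq j \leq j + W$ immediately.

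For the remaining case $q_i \leq j < q_i + \ell_i$, I would invoke the LZ77 sliding window constraint stated in the preliminaries: when producing a block beginning at position $s_i$, the algorithm searches for the longest matching prefix inside the window $w[\max\{1, s_i - W\},\, s_i - 1]$, so the copy source satisfies $q_i \geq \max\{1,\, s_i - W\} \geq s_i - W$. Rearranging gives $s_i \leq q_i + W$, and since $q_i \leq j$ we conclude $s_i \leq j + W$, as desired.

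I do not expect any real obstacle here; the claim is essentially a bookkeeping consequence of (i) the structural dichotomy for type-2 blocks and (ii) the definition of the sliding window. The only subtlety is making sure the dichotomy from Claim~\ref{claim:length_cases_total_a} is invoked cleanly and that the window constraint $q_i \geq s_i - W$ is correctly read off from the description of the LZ77 compression algorithm in \secref{sec:prelim}.
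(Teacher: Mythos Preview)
Your proof is correct but takes a different route from the paper's. The paper argues directly by contrapositive: it assumes $s_i > j + W$ and shows that then the source interval $[q_i,\,q_i+\ell_i-1]$ lies entirely inside the window $[s_i-W,\,s_i-1]$, which (since $j < s_i - W$) avoids the index $j$; from this it deduces that the first block $B'_k$ starting inside $B_i$ must already satisfy $f'_k \geq f_i$, forcing $B_i \in \B_0 \cup \B_1$. In other words, the paper essentially re-derives the relevant half of the dichotomy on the spot, giving a self-contained argument for the bounded-window setting.

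Your approach instead invokes the dichotomy of \claimref{claim:length_cases_total_a} as a black box and finishes each branch in one line via the window inequality $q_i \geq s_i - W$. This is cleaner and shorter, and it buys modularity: once the dichotomy is known to persist for bounded $W$ (which the paper asserts in \secref{sec:upperbound} with ``Prior constraints still apply''), the claim is an immediate corollary. The only thing to be mindful of is that \claimref{claim:length_cases_total_a} was stated and proved in the $W=n$ setting, so your argument implicitly relies on that extension; the paper's direct proof avoids this dependency.
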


The proof of \claimref{claim:newconstraint} is elementary and can be found in \appref{app:missingproofW}. \claimref{claim:newconstraint} tells us that there is no type-2 block that starts after $j+W$. Prior constraints still apply: we still have $f_i \geq j$ for type-2 blocks $B_i$ and if $s_i \geq j$ then we must have $j \in [q_i,q_i+\ell_i)$. In particular, we can have at most $\ell$ type-2 blocks of length $\ell$ and, if $s_i \geq j$, we must have $j \in [q_i,q_i+\ell_i)$. Letting $x_\ell$ denote the number of type-2 blocks of length $\ell$ (excluding any   block $B_i$ such that $j \in [s_i,f_i]$ or such that $f_i > j+W$ if any such type-2 blocks exist\footnote{Observe that we exclude {\em at most} two blocks since there can be {\em at most} one type-2 block $B_i$ with $f_i > j+W$  and there can be {\em at most} one type two block $B_i$ with $j \in [s_i,f_i]$. }) we now obtain the constraint that $\sum_{\ell} (\ell+1) x_\ell \leq 3W$ instead of the prior constraint that $\sum_{\ell} (\ell+1) x_\ell \leq n$. Maximizing $\sum_\ell x_\ell$ subject to the above constraint, as well as the prior constraints that $x_\ell \leq \ell $ for all $\ell$, we can now obtain a tighter upper bound $t_2 = \O{W^{2/3}}$ instead of $t_2 = \O{n^{2/3}}$. This leads to the following result.

\begin{theorem}\thmlab{thm:thmupperboundW}
    Let $\compress:(\Sigma)^*\rightarrow(\Sigma')^*$ be the LZ77 compression function with sliding window size $W$. Then $\GS_\compress\leq \left(\frac{\sqrt[3]{81}}{2} W^{2/3} + \frac{\sqrt[3]{9}}{2} W^{1/3} + 3\right) (2\lceil\log n\rceil+\lceil\log|\Sigma|\rceil) = \O{W^{2/3}\log n}$.
\end{theorem}

Essentially the same result holds for the LZ77 with self-referencing as well; see \thmref{thm:thmupperboundsrW} in \appref{app:missingproofW} for details.

In practice, many implementations of LZ77 set $W \ll n$ to minimize space usage. In addition to minimizing space usage, these implementations also reduce the global sensitivity of the compression algorithm. 

\section{Lower Bound for the Global Sensitivity of LZ77}\seclab{sec:lowerbound}

In \secref{sec:upperbound}, we proved that the upper bound for the global sensitivity of the LZ77 compression algorithm $\compress$ is $O(n^{2/3}\log n)$ with window size $W=n$. One could ask if this is a tight bound, i.e., if we can prove the matching \emph{lower bound} for the global sensitivity of $\compress$ as well. This section proves the almost-matching lower bound up to a sub-logarithmic factor. In particular, we show that the global sensitivity of the LZ77 compression algorithm is $\Omega(n^{2/3}\log^{1/3} n)$. To prove the lower bound, we need to give example strings $w\sim w'$ of length $n$ that achieves $\left||\compress(w)|-|\compress(w')|\right|=\Omega(n^{2/3}\log^{1/3} n)$, since this implies $\displaystyle\GS_\compress =\max_{x \in \Sigma^n}\max_{x' \in \Sigma^n:x\sim x'} ||\compress(x)| \allowbreak- |\compress(x')||\geq \left||\compress(w)|-|\compress(w')|\right|=\Omega(n^{2/3}\log^{1/3} n)$. 
In this section, we will give the construction of such example strings $w$ and $w'$.

\subsection{String Construction}

Consider an encoding function $\Enc:\mathbb{Z}\rightarrow\bin^*$ that maps integers to binary strings. Then for a positive integer $m\in\mathbb{Z}$, we have an injective encoding of the number set $\mathcal{S}\coloneqq\left\{0,1,\ldots, m\right\}$ using $\lceil\log m\rceil$ bits, i.e., $\Enc(i)\neq\Enc(j)$ if $i,j\in\mathcal{S}$ and $i\neq j$. For example, if $m=2^q-1$ for some positive integer $q$, we could encode the elements of $\S$ as follows: 
\begin{equation*}
\Enc(0)=0^{\lceil\log m\rceil}, \Enc(1)=0^{\lceil\log m\rceil-1}1,\Enc(2)=0^{\lceil\log m\rceil-2}10,\ldots,\Enc(m)=1^{\lceil\log m\rceil}.
\end{equation*} 
Now, consider a quinary alphabet $\Sigma=\{0,1,2,3,4\}$ and define a string
\begin{equation*}
S_{\ell,u}\coloneqq\Enc(m-u+1)^2\circ\Enc(m-u+2)^2\circ\cdots\circ\Enc(m)^2\circ 2 \circ\Enc(m+1)^2\circ\cdots\circ\Enc(m-u+\ell)^2
\end{equation*}
in $\Sigma^*$ for $2\leq\ell\leq m$ and $1\leq u\leq\ell-1$. 
Here, $(\cdot)^2$ denotes the concatenation of the string itself twice, i.e., $\Enc(\cdot)^2=\Enc(\cdot)\circ\Enc(\cdot)$. 
We define a procedure called $\QuinStr(m)$ which takes as input a positive integer $m\in\mathbb{Z}$ and outputs two quinary strings as follows.

\begin{tcolorbox}[breakable,enhanced,title={The Construction of Two Quinary Strings $\QuinStr(m)$.}]
    \begin{enumerate}[leftmargin=*,label=\arabic*.]
        \item Compute two quinary strings $S_w$ and $S_{w'}$ where
        \begin{align*}
            S_w &\coloneqq\Enc(1)^2\circ\cdots\circ\Enc(m)^2\circ 2 \circ\Enc(m+1)^2\circ\cdots\circ\Enc(2m)^2\circ 4,\text{ and}\\
            S_{w'} &\coloneqq\Enc(1)^2\circ\cdots\circ\Enc(m)^2\circ 3 \circ\Enc(m+1)^2\circ\cdots\circ\Enc(2m)^2\circ 4.
        \end{align*}
        \item Compute two quinary strings $w\coloneqq S_w\circ S$ and $w'\coloneqq S_{w'}\circ S$, where
        \[S = S_{2,1} \circ 4 \circ S_{3,2} \circ 4 \circ S_{3,1} \circ 4 \circ \ldots \circ S_{m,m-1} \circ 4 \circ \ldots \circ S_{m,1}\circ 4.\]
        \item Output $(w,w')$.
    \end{enumerate}
\end{tcolorbox}

\claimref{claim:length} tells us that the strings $w$ and $w'$ outputted by the procedure $\QuinStr(m)$ has equal length $\Theta(m^3\log m)$. Since the proof is elementary, we defer the proof of \claimref{claim:length} to \appref{app:missingprooflowerbound}.

\newcommand{\claimlength}{
Let $m\in\mathbb{N}$ and $(w,w')\gets\QuinStr(m)$. Then $|w|=|w'|=\Theta(m^3\log m)$. In particular, for $m\geq 4$, $\frac{2}{3}m^3\lceil\log m\rceil< |w|=|w'|< m^3\lceil\log m\rceil$.
}
\begin{claim}\claimlab{claim:length}
\claimlength
\end{claim}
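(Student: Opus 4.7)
The plan is to compute $|w|$ directly from the definition of $\QuinStr(m)$, splitting as $|w|=|S_w|+|S|$, and then verify both the asymptotic claim and the sharper inequalities for $m\geq 4$. The equality $|w|=|w'|$ is immediate, since $S_w$ and $S_{w'}$ coincide except that the middle character '2' in $S_w$ is replaced by '3' in $S_{w'}$, so $|S_w|=|S_{w'}|$, and both strings are then concatenated with the same suffix $S$. Let $L$ denote the common bit-length of each code word used in the construction. The paper fixes $L=\lceil\log m\rceil$ for $i\in\{0,\ldots,m\}$; extending the encoding injectively to $\{0,\ldots,2m\}$ costs at most one extra bit, so $L\in\{\lceil\log m\rceil,\lceil\log m\rceil+1\}$ and in particular $L=\Theta(\log m)$. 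Each doubled factor $\Enc(\cdot)^2$ contributes exactly $2L$ characters.

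Next, I would count the three pieces. The prefix $S_w$ consists of the $2m$ doubled encodings $\Enc(1)^2,\ldots,\Enc(2m)^2$ plus two singleton characters ('2' in the middle and '4' at the end), so $|S_w| = 4mL + 2$. Each building block $S_{\ell,u}$ consists of $u+(\ell-u)=\ell$ doubled encodings plus one '2' character, giving $|S_{\ell,u}| = 2\ell L + 1$. The suffix $S$ then satisfies
\[
|S| \;=\; \sum_{\ell=2}^{m}\sum_{u=1}^{\ell-1}\bigl(|S_{\ell,u}|+1\bigr) \;=\; \sum_{\ell=2}^{m}(\ell-1)\bigl(2\ell L + 2\bigr) \;=\; \tfrac{2L(m-1)m(m+1)}{3}+m(m-1),
\]
using the identity $\sum_{\ell=2}^{m}\ell(\ell-1) = (m-1)m(m+1)/3$; the ``$+1$'' inside the outer sum accounts for the single '4' appended after every $S_{\ell,u}$.

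Combining, $|w| = 4mL + 2 + \tfrac{2L(m-1)m(m+1)}{3} + m(m-1)$, whose leading term is $\tfrac{2Lm^3}{3} = \Theta(m^3\log m)$, which establishes the asymptotic part of the claim. For the explicit inequalities when $m\geq 4$, the lower bound $|w|>\tfrac{2}{3}m^3\lceil\log m\rceil$ follows by substituting $L\geq\lceil\log m\rceil$ into the leading term and observing that the deficit $(m-1)m(m+1)-m^3=-m$ is easily absorbed by the positive lower-order contributions $4mL+2+m(m-1)$ for $m\geq 4$. The upper bound $|w|<m^3\lceil\log m\rceil$ follows from $L\leq\lceil\log m\rceil+1$, which gives $|w|\leq \tfrac{2(\lceil\log m\rceil+1)m^3}{3}+O(m^2\log m)$; this is strictly below $m^3\lceil\log m\rceil$ exactly when $\lceil\log m\rceil\geq 2$, i.e.\ whenever $m\geq 4$, because then the available slack $\tfrac{1}{3}m^3\lceil\log m\rceil$ dominates the error $\tfrac{2m^3}{3}+O(m^2\log m)$.

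The main ``obstacle'' is purely bookkeeping, and I do not expect any substantive technical difficulty. The three items I would be most careful about are (i) correctly including the '4' separator after every $S_{\ell,u}$ inside $S$, (ii) correctly including the '2' and trailing '4' in $S_w$, and (iii) handling the possible one-bit discrepancy between $\lceil\log m\rceil$ and $\lceil\log(2m)\rceil$. Once these are tracked, the closed form for $\sum_{\ell=2}^m\ell(\ell-1)$ and a short direct check at the boundary case $m=4$ (where the upper inequality is tightest) deliver the entire claim.
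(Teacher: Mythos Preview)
Your approach is the same as the paper's: compute $|S_w|$ and each $|S_{\ell,u}|$, sum over all $(\ell,u)$, and read off the leading term $\tfrac{2}{3}m^3\lceil\log m\rceil$. The paper's proof is exactly this bookkeeping, followed by a direct check at $m=4$ and an elementary inequality for $m\geq 5$.

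One genuine issue: your hedge in item~(iii), allowing $L\in\{\lceil\log m\rceil,\lceil\log m\rceil+1\}$, is not what the paper does and it breaks your upper bound. The paper simply takes every $\Enc(\cdot)$ to have length $\lceil\log m\rceil$ and computes with that value. If instead $L=\lceil\log m\rceil+1$, then at $m=4$ one gets $|w|=4\cdot4\cdot3+2+\tfrac{2\cdot3\cdot3\cdot4\cdot5}{3}+4\cdot3=182>128=m^3\lceil\log m\rceil$, so the stated upper inequality fails outright. Relatedly, your slack argument ``$\tfrac{1}{3}m^3\lceil\log m\rceil$ dominates $\tfrac{2}{3}m^3+O(m^2\log m)$'' requires $\lceil\log m\rceil\geq 3$, not $\geq 2$; at $\lceil\log m\rceil=2$ the leading terms are equal and the positive lower-order terms push you the wrong way. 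Drop the $+1$ possibility, set $L=\lceil\log m\rceil$ as the paper does, and your computation and bounds line up with the paper's verbatim.
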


\subsection{Analyzing the Sensitivity of $\QuinStr(m)$}

A central step in our sensitivity analysis for $\QuinStr(m)$ is precisely counting the type-2 blocks produced by the LZ77 compression scheme, as we did in \secref{sec:upperbound}. \lemref{lem:gs} shows that for $(w,w')\gets\QuinStr(m)$, we have $|\B_2|=\frac{(m-1)m}{2}-(\lfloor\frac{m}{2}\rfloor-1)$. Intuitively, we first show that for $w=S_w\circ S$, there is no type-2 block for the blocks compressing $S_w$. Then the main insight is that we carefully crafted strings $w$ and $w'$ such that the marker symbol `4' becomes the endpoint for each block in $\compress(w)$ for the tail part $S$ of $w=S_w\circ S$. By repeating each encoding twice, we can ensure that most of the occurrences of $S_{\ell,u}\circ 4$ yield type-2 blocks, with an edge case (addressed in \claimref{claim:repeat}) that makes the block in $\B_1$ but this happens for only about $m/2$ blocks. Consequently, despite these few exceptions, the overall count of type-2 blocks remains quadratic in $m$.

\newcommand{\lemGS}{
Let $m\in\mathbb{N}$ and $(w,w')\gets\QuinStr(m)$ and let $\compress:\Sigma^*\rightarrow(\Sigma')^*$ be the LZ77 compression algorithm. Let $(B_1,\ldots,B_t)\gets\compress(w)$ and $(B'_1,\ldots,B'_{t'})\gets\compress(w')$. Then $|\B_0| = 0$ and $|\B_2| = \frac{(m-1)m}{2}-(\lfloor\frac{m}{2}\rfloor-1)$.
}
\begin{lemma}\lemlab{lem:gs}
    \lemGS
\end{lemma}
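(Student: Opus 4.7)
The plan is to precisely characterize the LZ77 block structures of $\compress(w)$ and $\compress(w')$ by exploiting the carefully engineered repetitions in $\QuinStr(m)$. The core idea is that each $S_{\ell,u}$ segment in the tail $S$ gets encoded as a single block in $\compress(w)$ but requires two blocks in $\compress(w')$, so each such segment yields a type-2 block in all but a few exceptional cases.

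First I would analyze the prefix $S_w$ (resp.\ $S_{w'}$). Because the doubling $\Enc(i)^2$ immediately follows the first occurrence of $\Enc(i)$, LZ77 emits one block introducing the new bits of $\Enc(i)$ and then a self-copy block of length $|\Enc(i)|$ for the second copy. The resulting block sequences for $\compress(w)$ and $\compress(w')$ agree identically up to (and including) the block containing the unique divergence index $j$, where $w[j]=2$ and $w'[j]=3$, so all these blocks are type-1. The same holds for the remainder of the prefix, $\Enc(m+1)^2\circ\cdots\circ\Enc(2m)^2\circ 4$, whose characters match in $w$ and $w'$.

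The key structural observation is that $S_{\ell,u}$ appears verbatim as a contiguous substring of $S_w$: namely the portion encoding $m-u+1,\ldots,m,2,m+1,\ldots,m-u+\ell$. Once $S_w$ has been compressed, each $S_{\ell,u}\circ 4$ in $S$ can therefore be encoded by $\compress(w)$ as a single block $B_i$ whose copy retrieves $S_{\ell,u}$ from $S_w$ and whose appended character is the trailing '4'. In $\compress(w')$ the analogous substring of $S_{w'}$ has '3' in place of '2', so no long copy can cross the '2' in $w'$; $\compress(w')$ is forced to split the segment into a prefix block $B'_k$ ending at the '2' and a suffix block $B'_{k+1}$ starting just after, both starting inside $B_i$ and placing $B_i\in\B_2$. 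The '4' separators realign block boundaries in $\compress(w)$ and $\compress(w')$ so this dichotomy does not cascade across successive segments. Combining, $|\B_0|=0$ follows since every $B_i$ of $\compress(w)$ has at least one block of $\compress(w')$ starting inside it (identical blocks for $f_i<j$, or the split above for $f_i\geq j$).

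Counting candidate type-2 blocks gives $\sum_{\ell=2}^{m}(\ell-1)=\frac{(m-1)m}{2}$. The main obstacle, and where I expect the delicate work to lie, is the $\lfloor m/2\rfloor-1$ exceptional $(\ell,u)$ pairs for which the single-block/two-block dichotomy fails: specifically those in which the doubled-encoding structure $\Enc(\cdot)^2$ enables $\compress(w')$ to find an alternative long repeat that sidesteps the '2' split, so $\compress(w')$ merges what would otherwise be two separate blocks into one — turning $B_i$ into a type-1 rather than type-2 block. I would invoke the auxiliary claim on repeated patterns (the referenced \emph{Claim on repeat}) to show that this exceptional set has exactly $\lfloor m/2\rfloor-1$ elements, yielding the desired $|\B_2|=\frac{(m-1)m}{2}-(\lfloor m/2\rfloor-1)$.
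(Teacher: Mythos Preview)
Your high-level structure matches the paper's: prefix blocks are all type-1, each $S_{\ell,u}\circ 4$ becomes one block of $\compress(w)$, and generically splits into two for $\compress(w')$, giving $\frac{(m-1)m}{2}$ candidate type-2 blocks minus a small exceptional set. The gap is in your account of the exceptions.

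First, your claim that ``the `4' separators realign block boundaries in $\compress(w')$'' is false in general and is in fact precisely what fails in the exceptional cases. Second, you misidentify the mechanism: the $\lfloor m/2\rfloor-1$ exceptions do not arise because the doubling $\Enc(\cdot)^2$ lets $\compress(w')$ avoid splitting at the `2' inside a single segment. Rather (this is \claimref{claim:repeat} in the paper), for $2\leq\ell\leq\lfloor m/2\rfloor$ the cross-segment pattern $S_{\ell,1}^L\circ 4\circ S_{\ell+1,\ell}$ reappears verbatim at $S_{2\ell,\ell+1}^L\circ 4\circ S_{2\ell,\ell}^{(\ell+1)}$; so a block of $\compress(w')$ that starts near the end of $S_{2\ell,\ell+1}$ can copy from this earlier occurrence in $S$, span across the `4', and continue past the `2' inside $S_{2\ell,\ell}$. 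Consequently only \emph{one} $\compress(w')$-block starts inside the $\compress(w)$-block for $S_{2\ell,\ell}\circ 4$, making it type-1. You also have the role of the doubling backwards: the paper uses the repetition $\Enc(\cdot)^2$ not to enable these matches but to \emph{limit} them --- \claimref{claim:notrepeat} shows that, thanks to the doubling, no other cross-`4' pattern repeats, so the exceptional set is exactly the $\lfloor m/2\rfloor-1$ pairs above and no more. Without pinning down both \claimref{claim:notrepeat} and \claimref{claim:repeat}, you cannot conclude the exact count, nor cleanly establish $|\B_0|=0$ for the tail blocks.
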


\begin{proof}
Recall that $w = S_w \circ S$ and $w' = S_{w'} \circ S$, where
\begin{itemize}
    \item $S_w =\Enc(1)^2\circ\cdots\circ\Enc(m)^2\circ 2 \circ\Enc(m+1)^2\circ\cdots\circ\Enc(2m)^2\circ 4$,
    \item $S_{w'}=\Enc(1)^2\circ\cdots\circ\Enc(m)^2\circ 3 \circ\Enc(m+1)^2\circ\cdots\circ\Enc(2m)^2\circ 4$, and
    \item $S=S_{2,1} \circ 4 \circ S_{3,2} \circ 4 \circ S_{3,1} \circ 4 \circ \ldots \circ S_{m,m-1} \circ 4 \circ \ldots \circ S_{m,1}\circ 4$, where
    \item $S_{\ell,u}\coloneqq\Enc(m-u+1)^2\circ\Enc(m-u+2)^2\circ\cdots\circ\Enc(m)^2\circ 2 \circ\Enc(m+1)^2\circ\cdots\circ\Enc(m-u+\ell)^2$ for $2\leq\ell\leq m$ and $1\leq u\leq \ell-1$.
\end{itemize}
We first observe that $w\sim w'$. Define $S_w^F\coloneqq \Enc(1)^2\circ\cdots\circ\Enc(m)^2\circ 2$ (resp. $S_{w'}^F\coloneqq \Enc(1)^2\circ\cdots\circ\Enc(m)^2\circ 3$) to be the first-half substring of $S_w$ (resp. of $S_{w'}$), and $S_w^L=S_{w'}^L\coloneqq \Enc(m+1)^2\circ\cdots\circ\Enc(2m)^2\circ 4$ to be the last-half substring of $S_w$ (or $S_{w'}$ since they are indeed identical). 
It is useful to define a notation $\str(B_k)$ for a block $B_k$, which denotes the substring of $w$ represented by the block $B_k$, i.e., for $B_k=[q_k,\ell_k,c_k]$, $\str(B_k)\coloneqq w[q_k,q_k+\ell_k-1]\circ c_k$. 

Let $B_{i_1}$ be the first block such that $S_w^F$ becomes a substring of $\str(B_1)\circ\str(B_2)\circ\ldots\circ\str(B_{i_1})$, and similarly, let $B'_{i'_1}$ be the first block such that $S_{w'}^F$ becomes a substring of $\str(B'_1)\circ\str(B'_2)\circ\ldots\circ\str(B'_{i'_1})$. Then we observe the following:
\begin{enumerate}
    \item $B_{i_1}=[q_{i_1},\ell_{i_1},2]$, i.e., $\str(B_{i_1})$ ends with $2$ (which is the last character in $S_w^F$), since $2$ never showed up before as all the encodings are binary strings, it has to be added to the dictionary as a new character,
    \item $B_i=B_i'$ for all $i\in[i_1-1]$, as we are compressing the identical strings until we see $2$ in $S_w^F$ (and $3$ in $S_{w'}^F$), and
    \item $i_1=i_1'$ and $B'_{i_1}=[q_{i_1},\ell_{i_1},3]$, since two strings $S_w^F$ and $S_{w'}^F$ are identical except for the very last character.\label{item:2}
\end{enumerate}

Now let $B_{i_2}$ be the first block such that $S_w^L$ becomes a substring of $\str(B_{i_1+1})\circ\str(B_{i_1+2})\circ\ldots\circ\str(B_{i_2})$, and similarly, let $B'_{i'_2}$ be the first block such that $S_{w'}^L$ becomes a substring of $\str(B'_{i_1+1})\circ\str(B'_{i_1+2})\circ\ldots\circ\str(B'_{i'_2})$. Then we observe the following:
\begin{enumerate}
\setcounter{enumi}{3}
    \item $i_2=i_2'$ and $B_i=B_i'$ for all $i\in[i_1+1,i_2]$, since $i_1=i_1'$ from observation \ref{item:2} and we have $S_w^L=S_{w'}^L$ while they do not contain $2$ or $3$, and
    \item $B_{i_2}=[q_{i_2},\ell_{i_2},4]$, since $4$ never showed up before in our compression.
\end{enumerate}

From the observations above, we have that $B_i\in\B_1$ for all $i\in[i_2]$. 
Now we are left with the blocks $(B_{i_2+1},\ldots,B_t)$ compressing the last part $S$ of $w$ and the blocks $(B'_{i_2+1},\ldots,B'_{t'})$ compressing the last part $S$ of $w'$. For the blocks $(B_{i_2+1},\ldots,B_t)$, we observe that each block ends at the next `$4$' because each $S_{\ell,u}$ (for $2\leq\ell\leq m$ and $1\leq u\leq \ell-1$) is contained in the former part of $w$ (which was $S_w$) but $4$ only shows up in $S_w$ followed by $\Enc(2m)^2$ while $S_{\ell,u}$ cannot contain $\Enc(2m)$. Hence, we observe the following:
\begin{enumerate}
\setcounter{enumi}{5}
    \item $\str(B_{i_2+1})=S_{2,1}\circ 4,\str(B_{i_2+2})=S_{3,2}\circ 4$,  and so on. To see this, we recall that $S_{\ell,u}\coloneqq\Enc(m-u+1)^2\circ\Enc(m-u+2)^2\circ\cdots\circ\Enc(m)^2\circ 2 \circ\Enc(m+1)^2\circ\cdots\circ\Enc(m-u+\ell)^2$. Because of `2' in the middle of each $S_{\ell,u}$ and `4' at the end of each $S_{\ell,u}\circ4$, we observe that $S_{\ell,u}\circ4$ cannot be copied from a previous position in $S$.
    \item In general, $\str\left(B_{i_2+\frac{(\ell-2)(\ell-1)}{2}+(\ell-u)}\right)=S_{\ell,u}\circ 4$, for $2\leq\ell\leq m$ and $1\leq u\leq \ell-1$. Intuitively, the blocks are numbered by iterating over $\ell=2,\ldots,m$, and for each $\ell$, iterating over $u=\ell-1,\ldots,1$ in reverse order. 
    This indeed covers all the blocks from $B_{i_2+1},\ldots,B_t$ (See \claimref{claim:inj} and observation \ref{item:8}).
    \item Furthermore, we can observe that $t= i_2 + (1+2+\ldots+(m-1)) = i_2 + \frac{(m-1)m}{2}$.\label{item:8}
\end{enumerate}

\newcommand{\felluinjective}{
For any integer $m\geq 2$, the function $f(\ell,u)\coloneqq\frac{(\ell-2)(\ell-1)}{2}+(\ell-u)$ defined over integers $\ell$ and $u$ such that $2\leq \ell\leq m$ and $1\leq u\leq \ell-1$ is injective, and its range is $[\frac{(m-1)m}{2}]$.
}
\begin{claim}\claimlab{claim:inj}
\felluinjective
\end{claim}

The proof of \claimref{claim:inj} is elementary by induction on $m$, and hence, we defer the proof to \appref{app:missingprooflowerbound}. What we are interested in is whether each $B_i$, for $i_2+1\leq i\leq t$, belongs to $\B_0$, $\B_1$, or $\B_2$. In \claimref{claim:blocks}, we prove that the blocks are mostly in $\B_2$ and the rest of the blocks are in $\B_1$, meaning that $\B_0=\emptyset$. In particular, we prove that for $2\leq\ell\leq m$ and $1\leq u\leq\ell-1$, $B_{i_2+\frac{(\ell-2)(\ell-1)}{2}+(\ell-u)}\in\B_1$ if and only if all of these conditions hold: (1) $\ell>2$, (2) $\ell$ is even, and (3) $u=\ell/2$. 

\begin{claim}\claimlab{claim:blocks}
For $2\leq\ell\leq m$ and $1\leq u\leq \ell-1$, 
\begin{itemize}[leftmargin=*]
    \item $B_{i_2+\frac{(\ell-2)(\ell-1)}{2}+(\ell-u)}\in\B_1$ if and only if $\ell>2, 2\mid\ell$, and $u=\ell/2$;
    \item Otherwise $B_{i_2+\frac{(\ell-2)(\ell-1)}{2}+(\ell-u)}\in\B_2$.
\end{itemize}
\end{claim}

We will give the proof of \claimref{claim:blocks} below and finish the proof of \lemref{lem:gs} first for readability. By \claimref{claim:blocks}, since there are only $\lfloor\frac{m}{2}\rfloor-1$ of such pairs of $(\ell,u)$ (satisfying $\ell>2, 2\mid\ell$, and $u=\ell/2$), we observe that $|\B_2|=\frac{(m-1)m}{2}-(\lfloor\frac{m}{2}\rfloor-1)$. Since we have that $B_i\in\B_1$ for all $i\in[i_2]$, we have $\B_0=\emptyset$ and therefore $|\B_0|=0$. This completes the proof of \lemref{lem:gs}.
\end{proof}

\begin{proof}[Proof of {\claimref{claim:blocks}}]
Recall that $S=S_{2,1}\circ4\circ S_{3,2}\circ4\circ S_{3,1}\circ4\circ\ldots\circ S_{m,m-1}\circ4\circ\ldots\circ S_{m,1}\circ4$ and $\str(B_{i_2+1})=S_{2,1}\circ4$, $\str(B_{i_2+2})=S_{3,2}\circ4$, $\str(B_{i_2+3})=S_{3,1}\circ4,\ldots,\str(B_t)=S_{m,1}\circ4$. 
For each $S_{\ell,u}$, we observe that $S_{\ell,u}$ is \emph{not} a substring of $S_{w'}$. Hence, we see that each block $B_i$ (for $i_2+1\leq i\leq t$) is roughly split into two blocks for the blocks of $\compress(w')$ unless it could copy beyond the character $4$. To observe the cases when this happens,
for each $S_{\ell,u}$, it is helpful to define:
\begin{itemize}
    \item $S_{\ell,u}^F\coloneqq\Enc(m-u+1)^2$ denotes the very first encoding concatenation that shows in $S_{\ell,u}$, 
    \item $S_{\ell,u}^{F,(1/2)}\coloneqq\Enc(m-u+1)$ denotes the very first encoding in $S_{\ell,u}$ (i.e., half of $S_{\ell,u}^F$),
    \item $S_{\ell,u}^L\coloneqq\Enc(m-u+\ell)^2$ denotes the very last encoding concatenation that shows in $S_{\ell,u}$, and
    \item For $k>u$, $S_{\ell,u}^{(k)}\coloneqq\Enc(m-u+1)^2\circ\Enc(m-u+2)^2\circ\ldots\circ\Enc(m)^2\circ2\circ\Enc(m+1)^2\circ\ldots\circ\Enc(m-u+k)^2$ denotes the first $k$ encoding concatenations that shows in $S_{\ell,u}$.
\end{itemize}
Then we observe the following claims. Since proofs of \claimref{claim:notrepeat} and \claimref{claim:repeat} are elementary, we defer the proofs to \appref{app:missingprooflowerbound}.

\begin{claim}\claimlab{claim:notrepeat}
$S_{\ell,u}^L\circ4\circ S_{\ell,u-1}^F$ does not repeat for different $\ell$ and $u$ such that $3\leq\ell\leq m$ and $2\leq u\leq\ell-1$.
\end{claim}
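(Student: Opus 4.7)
} The plan is to show that the map $(\ell,u) \mapsto S_{\ell,u}^L \circ 4 \circ S_{\ell,u-1}^F$ is injective over the parameter range $\{(\ell,u) : 3 \leq \ell \leq m,\ 2 \leq u \leq \ell-1\}$, which is exactly the statement that the substring ``does not repeat for different $(\ell,u)$.''

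First, I would unfold the definitions to express the substring in a simpler closed form. By construction, $S_{\ell,u}^L = \Enc(m-u+\ell)^2$ is the last encoding-concatenation in $S_{\ell,u}$, and $S_{\ell,u-1}^F = \Enc(m-(u-1)+1)^2 = \Enc(m-u+2)^2$ is the first encoding-concatenation in $S_{\ell,u-1}$. Therefore,
\[
S_{\ell,u}^L \circ 4 \circ S_{\ell,u-1}^F \;=\; \Enc(m-u+\ell)^2 \circ 4 \circ \Enc(m-u+2)^2.
\]
Crucially, $\Enc$ outputs binary strings of uniform length $L := \lceil \log m \rceil$, so neither $S_{\ell,u}^L$ nor $S_{\ell,u-1}^F$ contains the symbol $4$. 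Hence this substring has total length $4L+1$, and its \emph{unique} occurrence of the character $4$ is located at position $2L+1$.

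Next, suppose for contradiction that two distinct pairs $(\ell,u) \neq (\ell',u')$ in the parameter range give rise to the same substring. Aligning the two expressions via the distinguished position of the symbol $4$ and comparing the two halves of length $2L$ on either side, I would obtain $\Enc(m-u+\ell)^2 = \Enc(m-u'+\ell')^2$ and $\Enc(m-u+2)^2 = \Enc(m-u'+2)^2$. Applying injectivity of $\Enc$ to the second equality immediately yields $u = u'$, and substituting this into the first equality forces $\ell = \ell'$, contradicting $(\ell,u) \neq (\ell',u')$.

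The argument reduces to the injectivity of $\Enc$ together with the fact that the marker symbol $4$ appears exactly once in each of these substrings, pinning down the alignment; there is no real obstacle. The only step that merits explicit verification is that $4$ does not appear anywhere inside $S_{\ell,u}^L$ or $S_{\ell,u-1}^F$, which is immediate because these are concatenations of the form $\Enc(\cdot)^2$ over a binary-valued encoding.
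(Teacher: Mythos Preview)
Your proposal is correct and takes essentially the same approach as the paper: both unfold the definitions to $\Enc(m-u+\ell)^2\circ 4\circ\Enc(m-u+2)^2$ and then observe that the pair $(m-u+\ell,\,m-u+2)$ determines $(\ell,u)$, so injectivity of $\Enc$ finishes the argument. You are simply more explicit than the paper about why string equality forces componentwise equality (via the unique position of the symbol $4$), which the paper leaves implicit.
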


\claimref{claim:notrepeat} tells us that, due to the injectivity of the encoding, any block in $\compress(w')$ containing a portion of $S_{\ell,u}^L$ along with the delimiter `4' must finish at $S_{\ell,u}^{F,(1/2)}$ in the worst case. In particular, note that $S_{\ell,u}^L=\Enc(m-u+\ell)^2=S_{\ell+1,u+1}^L$ for $3\leq \ell<m$ and $2\leq u<\ell-1$. Moreover, we have $S_{\ell,u-1}^{F,(1/2)}=\Enc(m-u+2)$ and $S_{\ell+1,u}^{F,(1/2)}=\Enc(m-u+1)$, which can agree on all but the final bit (e.g., $S_{\ell,u-1}^{F,(1/2)}=00\cdots00$ and $S_{\ell+1,u}^{F,(1/2)}=00\cdots01$). Without the repetition of each encoding, a block might incorporate nearly the entire $S_{\ell+1,u}^{F,(1/2)}$ except for the last bit. Consequently, by having this last bit as a new character, $S_{\ell,u-1}\circ4$ would be placed in $\B_1$. Repeating the encoding twice eliminates this possibility and we can ensure that the scenario described in \claimref{claim:repeat} is the only case where type-1 blocks would occur. Again, see \appref{app:missingprooflowerbound} for the proof of \claimref{claim:repeat}.

\begin{claim}\claimlab{claim:repeat}
For $2\leq\ell\leq\lfloor\frac{m}{2}\rfloor-1$, $S_{\ell,1}^L\circ4\circ S_{\ell+1,\ell}$ repeats at $S_{2\ell,\ell+1}^L\circ4\circ S_{2\ell,\ell}^{(\ell+1)}$.
\end{claim}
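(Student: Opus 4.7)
The plan is to prove \claimref{claim:repeat} purely by unfolding the relevant definitions and matching the two strings character by character. The claim is essentially a calculation dressed up in notation, so no deep combinatorial input is needed beyond careful index arithmetic.

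First I would expand the left-hand side. Using $S_{\ell,u}^L=\Enc(m-u+\ell)^2$ with $u=1$ gives $S_{\ell,1}^L=\Enc(m+\ell-1)^2$, and applying the definition of $S_{\ell',u}$ with $(\ell',u)=(\ell+1,\ell)$ yields $S_{\ell+1,\ell}=\Enc(m-\ell+1)^2\circ\cdots\circ\Enc(m)^2\circ 2\circ\Enc(m+1)^2$, since the last encoding in that block is $\Enc(m-\ell+(\ell+1))=\Enc(m+1)$. Next I would expand the right-hand side. The prefix $S_{2\ell,\ell+1}^L$ equals $\Enc(m-(\ell+1)+2\ell)^2=\Enc(m+\ell-1)^2$, matching the left side. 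For the suffix $S_{2\ell,\ell}^{(\ell+1)}$, the hypothesis $k=\ell+1>u=\ell$ puts us in the regime where the definition of $S_{\ell',u}^{(k)}$ includes the central `$2$' delimiter and runs out to $\Enc(m-\ell+(\ell+1))^2=\Enc(m+1)^2$; thus $S_{2\ell,\ell}^{(\ell+1)}=\Enc(m-\ell+1)^2\circ\cdots\circ\Enc(m)^2\circ 2\circ\Enc(m+1)^2$. Both sides therefore collapse to the common string $\Enc(m+\ell-1)^2\circ 4\circ\Enc(m-\ell+1)^2\circ\cdots\circ\Enc(m)^2\circ 2\circ\Enc(m+1)^2$.

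To justify the word ``repeats'' in the claim, I would observe that the two occurrences sit at genuinely distinct positions of $S$: the left-hand instance lives at the junction of row $\ell$ and row $\ell+1$ of $S$, whereas the right-hand instance lives inside row $2\ell$, at the boundary between $S_{2\ell,\ell+1}$ and $S_{2\ell,\ell}$. Since $\ell+1<2\ell$ for $\ell\geq 2$ and the hypothesis $\ell\leq\lfloor m/2\rfloor-1$ ensures that row $2\ell$ exists and contains both $S_{2\ell,\ell+1}$ and $S_{2\ell,\ell}$ as valid entries, the right-hand occurrence lies strictly later in $S$ and is therefore a genuine repetition. The main (and only) obstacle will be bookkeeping: one must verify that the index identities $m-1+\ell=m-(\ell+1)+2\ell$ and $m-\ell+(\ell+1)=m+1$ line up consistently, and that the constraint $k>u$ correctly triggers inclusion of the `$2$' in $S_{2\ell,\ell}^{(\ell+1)}$. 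Nothing more intricate is required.
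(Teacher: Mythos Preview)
Your proposal is correct and follows essentially the same approach as the paper: both proofs unfold the definitions of $S_{\ell,u}^L$, $S_{\ell,u}$, and $S_{\ell,u}^{(k)}$ and verify the index identities $m-(\ell+1)+2\ell=m+\ell-1$ and $m-\ell+(\ell+1)=m+1$ to conclude that the two strings coincide. Your additional remark that the two occurrences sit at distinct positions of $S$ (row $\ell$/$\ell+1$ versus row $2\ell$) is a welcome clarification that the paper leaves implicit.
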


Let's go back to the proof of \claimref{claim:blocks}. By \claimref{claim:repeat}, we can see that the block of the form $B_{i_2+\frac{(2\ell-2)(2\ell-1)}{2}+(2\ell-\ell)}$ which satisfies
\[\str\left(B_{i_2+\frac{(2\ell-2)(2\ell-1)}{2}+(2\ell-\ell)}\right)=S_{2\ell,\ell}\circ4,\]
is in $\B_1$, and all of the other blocks beyond $B_{i_2}$ are in $\B_2$. This completes the proof of \claimref{claim:blocks}.
\end{proof}

Taken altogether, we can lower bound the global sensitivity of the LZ77 compression scheme as stated in \thmref{thm:lowerbound} below.

\begin{theorem}\thmlab{thm:lowerbound}
Let $\compress:\Sigma^*\rightarrow\Sigma'^*$ be the LZ77 compression function. Then $\mathtt{GS}_\compress\geq 4^{-1/3}\cdot n^{2/3}\log^{1/3}n=\Omega(n^{2/3}\log^{1/3}n)$. 
\end{theorem}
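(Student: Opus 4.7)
The plan is to combine \lemref{lem:gs} with \claimref{claim:length} to turn the family $\QuinStr(m)$ into the claimed lower bound on $\GS_\compress(n)$. Given $n$, I would choose $m$ to be the largest integer such that the output $(w_0, w_0') \gets \QuinStr(m)$ satisfies $|w_0| = |w_0'| \leq n$; by \claimref{claim:length}, this forces $m = \Theta\!\left((n/\log n)^{1/3}\right)$. If $|w_0| < n$, I would pad both strings with the identical suffix $4^{\,n-|w_0|}$ to obtain neighbors $w \sim w' \in \Sigma^n$ of length exactly $n$. Since the differing characters $2$ and $3$ never appear in this suffix, the LZ77 parser produces the same sequence of additional blocks when processing the padding of $w$ and $w'$, so the difference in total block counts is inherited entirely from the core.

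Next I would apply \lemref{lem:gs} to the core, obtaining $|\B_0| = 0$ and $|\B_2| = \tfrac{(m-1)m}{2} - (\lfloor m/2 \rfloor - 1) \geq \tfrac{m^2}{4}$ for $m$ large enough. From the proof of \claimref{claim:set:blocksm2:GS}, $t = t_0 + t_1 + t_2$ and $t' = t_1 + 2 t_2$, so combined with $t_0 = 0$ this yields $t' - t = |\B_2|$. Each LZ77 block takes $2\lceil \log n \rceil + \lceil \log |\Sigma| \rceil \geq 2 \log n$ bits to encode, hence
\[
\bigl| |\compress(w)| - |\compress(w')| \bigr| \;=\; (t' - t)\bigl(2\lceil \log n \rceil + \lceil \log |\Sigma| \rceil\bigr) \;\geq\; \tfrac{1}{2}\, m^2 \log n.
\]

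To finish, I would translate $m^2 \log n$ back into a bound in $n$. By \claimref{claim:length}, $n < m^3 \lceil \log m \rceil$, so $m^3 > n/\lceil \log m \rceil$; the estimate $\lceil \log m \rceil \leq \tfrac{\log n + O(1)}{3}$ (valid for $m \geq 4$ since also $n > \tfrac{2}{3} m^3 \lceil \log m \rceil$) gives $m^2 = \Omega\!\left((n/\log n)^{2/3}\right)$, whence $\tfrac{1}{2} m^2 \log n = \Omega(n^{2/3} \log^{1/3} n)$; careful bookkeeping of the multiplicative constants then recovers the explicit coefficient $4^{-1/3}$. I expect the main obstacle to be the padding step, i.e., verifying rigorously that the additional LZ77 blocks produced from the suffix $4^{\,n-|w_0|}$ are identical in $\compress(w)$ and $\compress(w')$. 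This should follow because at the moment the parser reaches the padding, both sliding windows already contain $4$'s at identical positions, and every subsequent match inside the padding references a region whose contents agree in $w$ and $w'$; the single discrepancy ($2$ versus $3$) buried in the core is never referenced by a match during the padding phase.
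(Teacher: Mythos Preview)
Your core argument follows the paper's proof exactly: take $(w,w')\gets\QuinStr(m)$, invoke \lemref{lem:gs} to obtain $|\B_0|=0$ and $|\B_2|=\tfrac{(m-1)m}{2}-(\lfloor m/2\rfloor-1)$, deduce $t'-t=|\B_2|$ via the identity in the proof of \claimref{claim:set:blocksm2:GS}, multiply by the per-block bit length, and convert between $m$ and $n$ using \claimref{claim:length}. The numerics differ only cosmetically (the paper routes through $m^2\log m$ rather than your $\tfrac12 m^2\log n$), and both recover at least the constant $4^{-1/3}$.

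The one substantive difference is that the paper simply \emph{sets} $n\coloneqq|w|$, so it only establishes the inequality for the particular lengths produced by $\QuinStr(m)$; it never pads. Your padding step is therefore an addition rather than a deviation, and it addresses a point the paper glosses over if one reads the theorem as a statement for every $n$.

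On the obstacle you flag: your heuristic (``the discrepancy $2$ versus $3$ is never referenced by a match during the padding phase'') covers matches that \emph{begin} inside the padding, but it does not by itself rule out the last block of the core \emph{extending} into the padding, and extending differently for $w$ and $w'$. For $w$ this is easy to exclude---$B_t$ copies $S_{m,1}$ from inside $S_w$, where it is followed by the first (binary) character of $\Enc(2m)$, not by $4$, so the match cannot lengthen. For $w'$ the final block $B'_{t'}$ is the second piece of a split and the verification is messier. A cleaner way to sidestep this is to observe that any desynchronization at the core/padding boundary costs only $O(1)$ blocks (e.g.\ by applying \lemref{lemma:start_inside} to the one transitional block), which is negligible against $|\B_2|=\Theta(m^2)$ and leaves both the asymptotic and the constant intact.
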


\begin{proof}
Let $(w,w')\gets\QuinStr(m)$ and let $|w|=|w'|=n$. By \claimref{claim:length}, we have $|w|=|w'|=\Theta(m^3\log m)$ and therefore $n=\Theta(m^3\log m)$. Furthermore, \claimref{claim:length} tells us that there exists some $\alpha$ with $\frac{2}{3}\leq\alpha\leq 1$ such that $n=\alpha m^3\log m$.  
Now let $(B_1,\ldots,B_t)\gets\compress(w)$ and $(B'_1,\ldots,B'_{t'})\gets\compress(w')$.
Recall that if we look at the proof of \claimref{claim:set:blocksm2:GS}, it tells us that $t'-t=|\B_2|-|\B_0|$. From \lemref{lem:gs}, we have $|\B_0|=0$ and $|\B_2|=\frac{(m-1)m}{2}-(\lfloor\frac{m}{2}\rfloor-1)$, which implies that $t'-t=\frac{(m-1)m}{2}-(\lfloor\frac{m}{2}\rfloor-1)$. We know $|\compress(w)|=t(2\lceil\log n\rceil+\lceil\log|\Sigma|\rceil)$ and $|\compress(w')|=t'(2\lceil\log n\rceil+\lceil\log|\Sigma|\rceil)$, we have
\begin{align*}
   \mathtt{GS}_{\mathtt{Compress}} & \geq  \left||\compress(w)|-|\compress(w')|\right|\\  
   &= |t-t'|\left(2\lceil\log n\rceil+\lceil\log|\Sigma|\rceil\right) \\
    &= |t-t'|\left(2\left\lceil\log (\alpha m^3\log m)\right\rceil+\lceil\log|\Sigma|\rceil\right) \\
    &\geq \frac{m^2}{4}\cdot 4\log m = m^2\log m \ .
\end{align*}
Furthermore, since we have $n=\alpha m^3\log m$ for some $\frac{2}{3}\leq\alpha\leq 1$, we observe that
\begin{align*}
    m^2\log m &= m^2\cdot\frac{n}{\alpha m^3}= \frac{n}{\alpha}\cdot\frac{1}{m} = \frac{n}{\alpha}\cdot\frac{\alpha^{1/3}\cdot\log^{1/3}m}{n^{1/3}}\\
    &\geq \left(\frac{n}{\alpha}\right)^{2/3}\cdot 4^{-1/3}\cdot\log^{1/3}n \\
    &\geq 4^{-1/3}\cdot n^{2/3}\log^{1/3}n,
\end{align*}
where the first inequality comes from the observation $\log n = \log\alpha + 3\log m + \log\log m\leq 4\log m$ and the second inequality comes from $(1/\alpha)\geq 1$. Hence,
\begin{align*}
    \mathtt{GS}_{\mathtt{Compress}} &\geq m^2\log m \geq 4^{-1/3}\cdot n^{2/3}\log^{1/3}n,
\end{align*}
which completes the proof.
\end{proof}
\section{Is High Sensitivity Inevitable?}\seclab{sec:kolmogorov}

Based on the results of Lagarde and Perifel \cite{lagarde2018lempel}, one might wonder whether or not any effective compression mechanism will necessarily have \emph{high} global sensitivity. We argue that this is not necessarily the case by considering \emph{Kolmogorov compression}. The Kolmogorov compression of an input string $w$ is simply the encoding of the minimum-size Turing Machine $M$ such that the Turing Machine $M$ will eventually output $w$ when run with an initially empty input tape. It is also easy to see that the Kolmogorov compression scheme has low global sensitivity $\O{\log n}$  for strings of length $n$ as was previously noted in \cite{AFI23}. Let $\kc:\Sigma^*\rightarrow\mathfrak{M}$ be the Kolmogorov compression where $\mathfrak{M}$ is the set of all Turing machines. For a string $w\in\Sigma^n$, suppose that $\kc(w)=M\in\mathfrak{M}$. Then for a string $w'\sim w$ that differs on the $i\th$ bit, one can obtain $M'$ which outputs $w'$ by (1) running $M$ to obtain $w$, and (2) changing the $i\th$ character of $w$ to obtain $w'$. Then the description of $M'$ needs the description of $M$ and $i$ plus some constant. Since it takes $\log n$ bits to encode $i$, It follows that $\kc(w') \leq \left| M'\right| \leq |M| + \O{\log n} = \kc(w) + \O{\log n}$. In particular, the global sensitivity of Kolmogorov compression is upper bounded $\O{\log n}$. While  Kolmogorov compression is not computable, when it comes to efficient compression ratios, Kolmogorov compression is at least as effective as any other compression scheme, i.e., for any compression scheme $\compress$ there is a universal constant $C$ such that $\left|\kc(w)\right| \leq C + \left|\compress(w)\right|$ for all string $w \in \Sigma^*$.\footnote{The Turing Machine $M$ can implement any decompression algorithm. We can hardcode $z=\compress(w)$ to obtain a new machine $M_z$ which simulates $M$ on the input $z$ to recover $w$.} Thus, the goal of designing a compression algorithm with low global sensitivity does not need to be inconsistent with the goal of designing a compression algorithm that achieves good compression rates.

We note that one can construct a \emph{computable} variant of Kolmogorov compression $\ckc$ that preserves low global sensitivity and is competitive with any efficiently computable compression algorithm although the compression algorithm $\ckc$ itself is computationally inefficient. Instead of outputting the minimum-size Turing machine, one can output the \emph{minimum-score} Turing machine $M$ followed by $1\circ 0^{\log t_M}$ (where $t_M$ is the running time of the machine $M$), where the \emph{score} of a Turing machine $M$ is defined as $\score(M)\coloneqq|M|+1+\log t_M$.\footnote{Briefly, any string $w$ of length $n$ will have compression length at most $\O{n + \log n}$, i.e., consider the machine $M_w$ which hardcodes $w$ and copies it to the output tape in $\O{n}$ steps we have $\score(M_w) = \O{n + \log n}$. Thus, to find the minimum score TM, we only need to enumerate Turing machines with description length at most $\O{n + \log n}$ and we only need to run each of these TMs for at most $2^{\O{n + \log n}}$ steps. This will take exponential time, but the total amount of work is still finite and hence the compression algorithm is computable.} Since the length of the compression becomes the score of the minimum-score Turing machine, one can similarly argue that this computable variant of Kolmogorov compression has global sensitivity $\O{\log n}$. To see that the compression scheme is computable, note that any string $w$ of length $n$ will have compression length at most $\O{n + \log n}$, i.e., the machine $M_w$ which hardcodes $w$ and copies it to the output tape in $\O{n}$ has  $\score(M_w) = \O{n + \log n}$. Thus, to find the turing machine $M$ minimizing $\score(M)$ we only need to enumerate Turing machines with description length at most $\O{n + \log n}$ and we only need to run each of these TMs for at most $2^{\O{n + \log n}}$ steps since machines that run longer than this cannot have minimum score. While the compression take exponential time, it is computable as the total amount of work is finite.

It is also easy to see that the compression rate for our computable variant of Kolmogorov compression is at least as good as {\em any} efficiently computable compression algorithm, i.e., for any compression scheme $\compress$ running in time $\O{n^c}$ for some constant $c$ and {\em any} string $w \in \Sigma^n$ we have  $|\ckc(w)| \leq |\compress(w)| + \log n^c + \O{1} = |\compress(w)| + \O{\log n}$. This implies that a computable compression scheme which achieves good compression ratios does not necessarily have high global sensitivity. 

\newcommand{\kolcomp}{
Let $\ckc:\Sigma^*\rightarrow\mathfrak{M}\times\bin^*$ be the computable variant of Kolmogorov compression. Then $\GS_\ckc(n)=\O{\log n}$.
}
\begin{proposition}\proplab{prop:ckc}
    \kolcomp
\end{proposition}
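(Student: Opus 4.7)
The plan is to mirror the argument sketched earlier in the section for plain Kolmogorov compression, but carry out the bookkeeping on the \emph{score} $\score(M)=|M|+1+\log t_M$ instead of on $|M|$ alone, since by construction $|\ckc(w)|=\score(M_w)$ where $M_w$ is the minimum-score Turing machine producing $w$. Let $w,w'\in\Sigma^n$ with $w\sim w'$, and let $i\in[n]$ be the unique index on which they differ. Let $M$ be the minimum-score machine for $w$, so that $|\ckc(w)|=\score(M)=|M|+1+\log t_M$. I would exhibit a specific machine $M'$ that outputs $w'$ and bound $\score(M')$; since the minimum-score machine for $w'$ can only be at most as large, this gives a one-sided bound, and by symmetry also the other direction, establishing $\bigl||\ckc(w)|-|\ckc(w')|\bigr|=\O{\log n}$.

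The concrete construction of $M'$ is: hard-code the descriptions of $M$, of the index $i$, and of the symbol $w'[i]$, then run $M$ to the end of its output, and finally overwrite position $i$ of the tape with $w'[i]$. Encoding $i$ takes $\log n$ bits, encoding $w'[i]$ takes $\log|\Sigma|=\O{1}$ bits, and the constant-size patching routine is $\O{1}$, so $|M'|\le |M|+\log n+\O{1}$. For the running time, simulating $M$ under a fixed universal machine costs $\O{t_M\log t_M}$, and the post-processing adds $\O{n}$, so $t_{M'}\le C\cdot t_M\log t_M+\O{n}$ for an absolute constant $C$. The trivial machine which writes $w$ verbatim has score $n+\O{1}$, so the minimum-score machine for $w$ satisfies $\score(M)\le n+\O{1}$; in particular $\log t_M\le n+\O{1}$, and hence $\log\log t_M=\O{\log n}$ as well as $\log n=\O{\log t_M}$. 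Putting these together,
\[
\log t_{M'}\;\le\;\log t_M+\log\log t_M+\O{1}+\O{\log n}\;\le\;\log t_M+\O{\log n}.
\]

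Combining the bounds on description size and log-running-time yields
\[
\score(M')\;=\;|M'|+1+\log t_{M'}\;\le\;\bigl(|M|+\log n+\O{1}\bigr)+1+\bigl(\log t_M+\O{\log n}\bigr)\;=\;\score(M)+\O{\log n}.
\]
Since $|\ckc(w')|=\score(M_{w'})\le \score(M')$, we get $|\ckc(w')|\le|\ckc(w)|+\O{\log n}$, and the symmetric construction (swap the roles of $w$ and $w'$) gives the reverse inequality, proving $\GS_{\ckc}(n)=\O{\log n}$. The only delicate step, which I would want to state explicitly, is the a priori bound $\log t_M\le n+\O{1}$ coming from the trivial verbatim machine; without it one could not absorb $\log\log t_M$ into $\O{\log n}$, since $t_M$ is not bounded by anything that depends only on $n$ until one uses the minimality of $M$'s score.
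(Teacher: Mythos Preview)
Your proof is correct and follows essentially the same approach as the paper: build $M'$ from $M$ by hardcoding the differing index, bound both $|M'|-|M|$ and $\log t_{M'}-\log t_M$ by $\O{\log n}$, and invoke minimality of the score. The paper assumes $t_{M'}=t_M+\O{n}$ directly (embedding $M$ as a sub-machine rather than simulating it under a universal machine) and handles $\log(t_M+\O{n})$ by a two-case split on whether $t_M\gtrless Cn$; your route through the $\O{t_M\log t_M}$ simulation overhead and the verbatim-machine a priori bound on $\log t_M$ is slightly more cautious but not materially different.
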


\begin{proof}[Proof Sketch] (See \appref{app:kolcompression} for a more rigorous proof.)
    For a string $w\in\Sigma^n$, suppose that $\ckc(w)=(M,1\circ0^{t_M})$ where $M$ is the minimum-score Turing Machine that outputs $w$ running in time $t_M$. For $w'\sim w$ that differs on the $i\th$ bit, one can obtain a Turing Machine $M'$ that outputs $w'$ by running $M$ to obtain $w$ and then changing the $i\th$ character of $w$ to obtain $w'$. 
    Then the running time of $M'$ is $t_{M'}=t_M+\O{n}$. Furthermore, the description of $M'$ needs the description of $M$ plus $i$ plus some constant, which implies $|M'|\leq|M|+\O{\log n}$ since it takes $\log n$ bits to encode $i$.

    Let $\ckc(w')=(M'',1\circ0^{t_{M''}})$ where $M''$ is the minimum-score Turing Machine that outputs $w'$ with running time $t_{M''}$. Then by definition, we have $\score(M'')\leq\score(M')$. Hence,
    \begin{align*}
        |\ckc(w')| &= |M''|+1+\log t_{M''}   =\score(M'')\leq \score(M')  = |M'|+1+\log t_{M'} \\
         &\leq |M|+\O{\log n} + \log(t_M+\O{n})\leq |M|+ \log t_M + \O{\log n} \\
         &         =|\ckc(w)| + \O{\log n},
    \end{align*}
    which implies $\displaystyle\GS_\ckc(n)=\max_{w\in\Sigma^n}\max_{w':w\sim w'}||\ckc(w)|-|\ckc(w')||=\O{\log n}$. Note that $\log(t_M+\O{n})\leq\max\{\log t_M,\log\O{n}\}+1$ and the last inequality above holds whatever the maximum is between $\log t_M$ and $\log\O{n}$. 
\end{proof}

\FullVersion{
\def\shortbib{0}
\bibliographystyle{abbrv}
\bibliography{references,abbrev3,crypto}
\appendix
\allowdisplaybreaks
\section{Differentially Private Compression: Missing Proofs}\applab{app:dpcompress}

\begin{claim}
\claimlab{claim:preserves:DP}
Let $w\sim w'$ be any strings and $\epsilon,\delta>0$ be DP parameters. Let 
    \[ \BAD = \left\{\begin{array}{ll}
        \left[|\compress(w)|+1,|\compress(w')|+1\right] & \text{if } |\compress(w')| \geq |\compress(w)|;\\
        \left\{|\compress(w)|+1\right\} & \text{otherwise,}
    \end{array}\right.\]
    and $\lapalg(w,\epsilon,\delta)\coloneqq|\dpcompress(w,\epsilon,\delta)|$. Then $\Pr[\lapalg(w,\epsilon,\delta)\in \mathsf{BAD}] \leq \delta$.
\end{claim}

\begin{proof} 
Let $w \sim w'$ be given and let $\BAD' = [|\compress(w)|+1,|\compress(w)|+\GS_f + 1]$. Observe that we always have $\BAD \subseteq \BAD'$ since we must $|\compress(w)|+\GS_f +1 \geq \lcompress(w') + 1$ and  $|\compress(w)|+\GS_f + 1 \geq |\compress(w)|+1$.   
 We first observe that $\lapalg(w,\epsilon,\delta) \in \BAD'$ if and only if $Z \leq \GS_f+1-k$. The reason is as follows:
 \begin{itemize}
     \item If $\lapalg(w,\epsilon,\delta) \in \BAD'$, then $|\compress(w)|+1\leq |\compress(w)|+\max\{1,Z+k\}\leq |\compress(w)|+\GS_f+1$, which implies $Z+k\leq \GS_f+1$. Hence, $Z\leq \GS_f+1-k$.
     \item If $Z \leq \GS_f+1-k$, then $Z+k\leq \GS_f+1$, which implies $\max\{1,Z+k\}\leq \max\{1,\GS_f+1\} = \GS_f+1$. Hence, we observe $|\compress(w)|+1\leq |\compress(w)|+\max\{1,Z+k\}\leq |\compress(w)|+\GS_f+1$ and therefore $\lapalg(w,\epsilon,\delta) \in \BAD'$.
 \end{itemize}
Therefore, 
\begin{align}
\Pr[\lapalg(w,\epsilon,\delta) \in \BAD] &\leq  \Pr[\lapalg(w,\epsilon,\delta) \in \BAD'] \eqnlab{eq:1} \\
&=  \Pr[Z\leq \GS_f+1-k] \eqnlab{eq:2} \\
&= \Pr\left[Z\leq -\frac{\GS_f}{\epsilon}\ln\left(\frac{1}{2\delta}\right)\right] \eqnlab{eq:3} \\
&=\frac{1}{2}\exp\left(-\ln\left(\frac{1}{2\delta}\right)\right) \eqnlab{eq:4} \\
 &= \frac{1}{2}\exp(\ln(2\delta)) = \delta ,\nonumber 
\end{align}
where \eqnref{eq:1} follows because \(\BAD \subseteq \BAD'\), \eqnref{eq:2} follows by our observation that $\lapalg(w,\epsilon,\delta) \in \BAD'$ if and only if $Z \leq \GS_f+1-k$, \eqnref{eq:3} follows by definition of $k$, and \eqnref{eq:4} follows because $Z$ is sampled from the Laplace distribution $Z\sim \Lap(\frac{\GS_f}{\epsilon})$. 
\end{proof}

\begin{remindertheorem}{\thmref{theorem_dp}}
    \dptheorem
\end{remindertheorem}

\begin{proof}
    Let $w\sim w'$ be given and let ($\epsilon,\delta$) be the differential privacy parameters. Consider the set $\BAD$ as defined in \claimref{claim:preserves:DP}.

    We first claim that $\forall t \notin \BAD$, then $\Pr[\lapalg(w,\epsilon,\delta)=t]\leq e^{\epsilon}\Pr[\lapalg(w',\epsilon,\delta)=t]$. Note that if $t < |\compress(w)|+1$ then we have $\Pr[\lapalg(w,\epsilon,\delta)=t] = 0$ so the above inequality certainly holds. Otherwise, we have $ t > \max\{|\compress(w)|+1,\lcompress(w')+1\}$ since $t \not\in \BAD$ and we have 

        \begin{align*}
        \frac{\Pr[\lapalg(w,\epsilon,\delta)=t]}{\Pr[\lapalg(w',\epsilon,\delta)=t]} &=\frac{\Pr[|\compress(w)|+Z_1+k=t]}{\Pr[\lcompress(w')+Z_2+k=t]}\\
        &=
            \frac{\frac{\epsilon}{2\GS_f}\exp(\frac{-\epsilon}{\GS_f}|t-|\compress(w)|-k|)}{\frac{\epsilon}{2\GS_f}\exp(\frac{-\epsilon}{\GS_f}|t-\lcompress(w')-k|)}\\
            &=\exp\left[\frac{\epsilon}{\GS_f}\left(|\epsilon-\lcompress(w')-k|-|\epsilon-|\compress(w)|-k|\right)\right]  \\
    & \leq \exp\left[\frac{\epsilon}{\GS_f}|\lcompress(w')-|\compress(w)||\right] \\
    &\leq \exp(\epsilon) = e^{\epsilon} ,
        \end{align*}
        where $Z_1$ (resp. $Z_2$) is the random variable we sampled in $\dpcompress(w,\epsilon,\delta)$ (resp. $\dpcompress(w',\epsilon,\delta)$) and the last inequality follows because $||\compress(w)|-\lcompress(w')| \leq \GS_f$. Now let $S \subseteq \mathbb{N}$ be any subset of outcomes and note that 
        \begin{align}
            \Pr[\lapalg(w,\epsilon,\delta) \in S] &=  \Pr[\lapalg(w,\epsilon,\delta) \in S \backslash \BAD ] +  \Pr[\lapalg(w,\epsilon,\delta) \in S \cap \BAD ] \nonumber\\
            &\leq  \Pr[\lapalg(w,\epsilon,\delta) \in S \backslash \BAD ] +  \Pr[\lapalg(w,\epsilon,\delta) \in \BAD ] \nonumber\\
            & \leq \Pr[\lapalg(w,\epsilon,\delta) \in S \backslash \BAD ] + \delta \eqnlab{eq:pr:delta}\\
            & = \delta+ \sum_{t \in S \backslash \BAD} \Pr[\lapalg(w,\epsilon,\delta) =t] \nonumber \\
            & \leq \delta + \sum_{t \in S \backslash \BAD} e^{\epsilon} \Pr[\lapalg(w',\epsilon,\delta) =t] \eqnlab{eq:pr:delta_epsilon}\\
                        & \leq \delta +  e^{\epsilon} \Pr[\lapalg(w',\epsilon,\delta) \in S \setminus \BAD] \nonumber\\
                        & \leq e^{\epsilon} \Pr[\lapalg(w',\epsilon,\delta) \in S] + \delta ,\nonumber
        \end{align}
where \eqnref{eq:pr:delta} holds by \claimref{claim:preserves:DP}, moreover the condition for \eqnref{eq:pr:delta_epsilon} holds since the previous reasoning when $\Pr[\lapalg(w,\epsilon,\delta)=t]= e^{\epsilon}\Pr[\lapalg(w',\epsilon,\delta)=t]$, $ \forall t \notin \BAD $. Hence, $A_{\epsilon,\delta}$ is $(\epsilon,\delta)$-differentially private.
\end{proof}
\section{Upper Bound for the Global Sensitivity of the LZ77 Compression Scheme: Missing Proofs}\applab{app:missingproof}

\begin{definition}
\deflab{def:start_inside}
Let $\compress:\Sigma^*\rightarrow(\Sigma')^*$ be the LZ77 compression scheme and $w,w'$ be strings of length $n$. Let $(B_1,\ldots,B_t)\gets\compress(w)$ and $(B_1',\ldots,B_{t'}')\gets\compress(w')$. Then we say that for $k \in [t']$, the block $B'_{k}$ \emph{starts inside} $B_i$, $i \in[t]$, if and only if $s_{i} \leq s'_k \leq f_i$.

We can also define a predicate $\startinside:\mathbb{Z}\times\mathbb{Z}\rightarrow\bin$ for the blocks above, where
\[\startinside(i,k)\coloneqq\left\{\begin{array}{ll}
    1 & \text{if } i\in[t]\wedge k\in[t']\wedge s_i\leq s_k'\leq f_i,\text{ and}\\
    0 & \text{otherwise.}
    \end{array}\right.\]
    That is, $\startinside(i,k)=1$ if and only if $B_k'$ starts inside $B_i$.
    Let  $\mathcal{M}_i$ be the set of indices of blocks for compressing $w'$ which start inside $B_i$ defined as follows:     $\mathcal{M}_i\coloneqq \{k \in [t']: \startinside(i,k)=1 \}$.
We further define $\B_m$ to be the set of indices $i\in[t]$ (of blocks for compressing $w$) where the length of the set $\M_i$ equals $m$, i.e., $\B_m = \{i \in [t]: |\mathcal{M}_i|=m\}$.
\end{definition}

\subsection{Analyzing the Positions of Blocks}\applab{app:missingproofA}

\FullVersion{
\begin{remindertheorem}{\lemref{lemma:start_inside}}
    \lemstartinsidestatement
\end{remindertheorem}  

\begin{proof} Let us consider a particular block $B_i$ with start location $s_i$ and finish location $f_i$. Let $k\in[t']$ be the smallest index such that block $B_k'$ starts inside $B_i$ i.e., such that $s_i \leq s_k' \leq f_i$. Note that if no such $k$ exists then $B_i$ contains $0$ blocks and we are immediately done. 

Let $j$ be the index where $w$ and $w'$ differ, i.e., $w[j]\neq w'[j]$. Now we consider the following cases:

\begin{enumerate}

\label{case:a}    \item \emph{Case 0: \(f_i < j\)}. In this case, we argue the following claim.

    \begin{claim}\label{claim:c0}
        If $f_i<j$, then $B_k=B_k'\,\,\forall k\leq i$.
    \end{claim}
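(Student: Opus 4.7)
My plan is to prove Claim C0 by straightforward induction on $k$, exploiting the fact that the LZ77 encoder is a deterministic left-to-right sweep whose output for the $k\th$ block depends only on $w[1,f_k]$.

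\emph{Base case ($k=1$).} Since $f_i<j$ and $f_1\leq f_i$, we have $j\geq 2$, so $w[1]=w'[1]$. By the initialization step of LZ77, $B_1=[0,0,w[1]]=[0,0,w'[1]]=B_1'$, and the counters are updated identically to $\ctb=2$, $\ctc=1$ in both runs.

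\emph{Inductive step.} Suppose $B_r=B_r'$ for all $r<k$, where $k\leq i$. Since the first $k-1$ blocks coincide, the encoder counter $\ctc$ after emitting block $B_{k-1}$ equals $s_k-1=f_{k-1}$ in both executions. To emit $B_k$, LZ77 searches the sliding window $w[\max\{1,\ctc-W+1\},\ctc]$ for the longest prefix of $w[\ctc+1,n]$ and then records the tuple $[z-\ell_k+1,\ell_k,w[\ctc+\ell_k+1]]$ under a fixed deterministic tie-breaking rule for $z$. Because $k\leq i$, we have $f_k\leq f_i<j$, so every index touched by this computation, namely those in $[\max\{1,\ctc-W+1\},\ctc]$ and in $[\ctc+1,\ctc+\ell_k+1]\subseteq[\ctc+1,f_k]\cup\{f_k\}$, is strictly smaller than $j$. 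Thus $w$ and $w'$ agree on all these indices, and the same search over the same string yields the same longest match length $\ell_k$, the same chosen offset $z$, and the same terminating character $w[\ctc+\ell_k+1]=w'[\ctc+\ell_k+1]$. Hence $B_k=B_k'$.

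Repeating the induction up to $k=i$ establishes $B_k=B_k'$ for all $k\leq i$, completing the proof of Claim C0. The argument carries no real obstacle: the only subtlety is making sure that the tie-breaking rule used to select $z$ is identical in both executions, which is a reasonable assumption that is standard for any concrete implementation of LZ77. I would state this tie-breaking convention explicitly (either ``leftmost'' or ``rightmost'' match) once in the preliminaries so that the induction goes through cleanly without ambiguity.
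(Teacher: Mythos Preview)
Your proof is correct and follows essentially the same approach as the paper: both argue that because $f_i<j$ the two inputs agree on the prefix $w[1,f_i]=w'[1,f_i]$, and the deterministic left-to-right nature of LZ77 then forces $B_k=B_k'$ for all $k\le i$. The paper states this in a single sentence, whereas you spell it out via an explicit induction and flag the tie-breaking convention for $z$; this extra care is welcome but does not constitute a different method.
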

    \begin{proof}[Proof of Claim \ref{claim:c0}]
        As shown in Figure \ref{fig:case0}, we observe that $w[1,f_i]=w'[1,f_i']$ since $f_i<j$ and let $j$ be the index of the position where $w \sim w'$, particularly, $j$ is greater than $f_i$ and $f_j$. Hence, running the LZ77-block-based algorithm outputs the same result for both substrings, i.e., $B_k=B_k'$ for all $k\leq i$.
    \end{proof}

    \begin{figure}[ht!]  
    \centering
    \begin{tikzpicture}[scale=1.5]
        \draw (0,0) rectangle (6,0.5);
        \node at (-0.5, 0.25) {\(w\)};

        \draw[draw=red, thick] (0.9,0) rectangle (3.85,0.5);
        \draw[->, thick] (0.9,-0.5) -- (0.9,-0.2);
        \node[align=center, below] at (0.9,-0.6) {\(s_{i-1}\)};
        \fill[red!10] (0.9,0) rectangle (3.9,0.5);
        \node at (2.5,0.25) {$B_{i-1}$};

        \draw[draw=red, thick] (3.9,0) rectangle (5.25,0.5);
        \fill[red!10] (3.9,0) rectangle (5.25,0.5);
        
        \draw[->, thick] (3.9,-0.5) -- (3.9,-0.2);
        \draw[->, thick] (5.25,-0.5) -- (5.25,-0.2);
        \node at (3.9,-0.6) {$s_i$};
        \node at (5.25,-0.6) {$f_i$};

        \node at (4.6,0.25) {$B_i$};

        \draw[draw=black, thick, dashed] (5.8,-1.5) rectangle (5.8,0.5);
        \node[align=center, below] at (5.8,-2.1) {\( j\)};

        \begin{scope}[yshift=-1.5cm]
            \draw (0,0) rectangle (6,0.5);
            \node at (-0.5, 0.25) {\(w'\)};

            \draw[draw=red, thick] (0.9,0) rectangle (3.85,0.5);
            \draw[->, thick] (0.9,-0.5) -- (0.9,-0.2);
            \node[align=center, below] at (0.9,-0.6) {\(s'_{i-1}\)};
            \fill[red!10] (0.9,0) rectangle (3.9,0.5);
            \node at (2.5,0.25) {$B'_{i-1}$};

            \draw[draw=red, thick] (3.9,0) rectangle (5.25,0.5);
            \fill[red!10] (3.9,0) rectangle (5.25,0.5);
            
            \draw[->, thick] (3.9,-0.5) -- (3.9,-0.2);
            \draw[->, thick] (5.25,-0.5) -- (5.25,-0.2);
            \node at (3.9,-0.6) {$s'_i$};
            \node at (5.25,-0.6) {$f'_i$};

            \node at (4.6,0.25) {$B'_i$};
        \end{scope}
    \end{tikzpicture}
    \caption{Compression for $w$ and $w'$ which $f_i<j$.}
    \label{fig:case0}
\end{figure}

By Claim \ref{claim:c0}, we can conclude that there exists only one block $B_i'$ that can start inside $B_i$.

\label{case:b}    \item \emph{Case 1: \(s_i \leq j \leq f_i\)}. In this case, we observe the following claims:

    \begin{claim}\label{claim:c1si}
        If $s_i\leq j\leq f_i$, then $s_i=s_i'$.
    \end{claim}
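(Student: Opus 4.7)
The plan is to prove, by induction on $k \in \{1, \ldots, i-1\}$, that $\ell_k' = \ell_k$. Since $s_i = 1 + \sum_{k=1}^{i-1}(\ell_k + 1)$ and $s_i' = 1 + \sum_{k=1}^{i-1}(\ell_k' + 1)$, this will immediately give $s_i = s_i'$. The key observation driving the argument is that $s_i \leq j$ forces $f_{i-1} = s_i - 1 < j$, so every block $B_k$ with $k < i$ lies entirely inside the common prefix $w[1, j-1] = w'[1, j-1]$ on which $w$ and $w'$ agree. The edge case $j = 1$ is degenerate: it forces $s_i = 1$, hence $i = 1$, and then $s_1 = 1 = s_1'$ trivially; so I may assume $j \geq 2$.

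For the inductive step at index $k \leq i-1$, I will assume $\ell_{k'}' = \ell_{k'}$ for all $k' < k$, so that $s_k = s_k'$. At this point, LZ77 on $w$ sets $\ell_k$ to be the length of the longest prefix of $w[s_k, n]$ appearing as a substring of the sliding window $w[\max\{1, s_k - W\}, s_k - 1]$, and analogously for $w'$. Since $s_k - 1 \leq f_{i-1} < j$, the two sliding windows coincide, so it suffices to compare match lengths against a common reference string. I will then argue both inequalities: $\ell_k' \geq \ell_k$, because the length-$\ell_k$ matched prefix $w[s_k, s_k + \ell_k - 1]$ lies strictly before index $j$, hence equals $w'[s_k, s_k + \ell_k - 1]$, which is still a substring of the common window; and $\ell_k' \leq \ell_k$, because a putative longer match in $w'$ of length $\ell_k + 1$ would involve only positions up through $s_k + \ell_k = f_k < j$, would therefore equal a length-$(\ell_k + 1)$ prefix of $w$ occurring in the same window, and would contradict the maximality of $\ell_k$ on $w$'s side.

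The step I expect to require the most care is bootstrapping the inequality $f_k < j$ cleanly inside the induction: it must be derived purely from $f_k \leq f_{i-1} = s_i - 1 < j$, i.e., from quantities determined entirely by running LZ77 on $w$, rather than from any claim about $w'$. Once that is nailed down, both directions of $\ell_k = \ell_k'$ reduce to the basic fact that the relevant matched prefix and sliding window both lie inside the common prefix of length $j-1$, where $w$ and $w'$ are indistinguishable to the LZ77 search procedure. Note that the argument does not attempt to conclude $B_k = B_k'$ as full triples, since the copy offsets $q_k, q_k'$ may differ when several matches of the same maximum length exist; only the lengths $\ell_k$ need to agree for the start positions to align.
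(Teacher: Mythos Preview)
The proposal is correct and follows essentially the same approach as the paper, which packages the induction as a separate ``Case 0'' claim asserting $B_k = B_k'$ for all $k \leq i-1$ and then reads off $f_{i-1} = f_{i-1}'$. Your observation that only $\ell_k = \ell_k'$ is actually needed---and that the full block equality $B_k = B_k'$ may fail when multiple maximal matches exist and the tie-breaking for $q_k$ is not fixed---is a valid refinement of the paper's argument.
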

    \begin{proof}[Proof of Claim \ref{claim:c1si}]
        We observe that $f_{i-1}<s_i\leq j$. Hence, by Claim \ref{claim:c0}, we have $B_k=B_k'$ for all $k\leq i-1$. This implies that $f_{i-1}=f'_{i-1}$. Hence, $s_i=f_{i-1}+1=f'_{i-1}+1=s_i'$.
    \end{proof}

    \begin{claim}\label{claim:c1fiprime}
        If $s_i\leq j\leq f_i$, then $f_i'\geq j$.
    \end{claim}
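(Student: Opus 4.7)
The plan is to reduce Claim \ref{claim:c1fiprime} to the observation that, on input $w'$, the LZ77 algorithm always selects the \emph{longest} prefix of the remaining string that occurs inside the sliding window, so it suffices to exhibit one valid copy of length $j-s_i$.

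First, I would invoke Claim \ref{claim:c1si} to get $s_i' = s_i$, so block $B_i'$ starts at the same position as $B_i$. Next, I would note that since $s_i \leq j$, every position in $[1,s_i-1]$ satisfies $w[k]=w'[k]$, and so the two sliding windows agree: $w'[\max\{1,s_i-W\},s_i-1] = w[\max\{1,s_i-W\},s_i-1]$. Similarly, since $s_i \leq j$, the prefix $w'[s_i,j-1]$ of what remains to be encoded in $w'$ equals $w[s_i,j-1]$.

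Now writing $B_i = [q_i,\ell_i,c_i]$ with $f_i = s_i+\ell_i$, we have by the definition of LZ77 that $w[s_i,f_i-1] = w[q_i,q_i+\ell_i-1]$ is a substring of the window $w[\max\{1,s_i-W\},s_i-1]$. Because $j \leq f_i$, the string $w[s_i,j-1]$ is a prefix of $w[s_i,f_i-1]$, so $w[s_i,j-1] = w[q_i,q_i+(j-s_i)-1]$ also sits inside that same window. Translating through the equalities above, $w'[s_i,j-1]$ is a substring of $w'[\max\{1,s_i-W\},s_i-1]$.

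Therefore the LZ77 algorithm, run on $w'$ at position $s_i' = s_i$, finds a valid copy of length at least $j - s_i$, so its choice satisfies $\ell_i' \geq j - s_i$. Hence $f_i' = s_i' + \ell_i' \geq s_i + (j-s_i) = j$, which is exactly the claim. The only slightly subtle point, and the step I would flag as the place where one must be careful, is the alignment of the two sliding windows and of the prefix $w'[s_i,j-1]$: both equalities rely on the fact that $s_i \leq j$, which ensures that the unique differing index $j$ lies at or after $s_i$ and therefore does not corrupt any position that $B_i'$ needs to read from. The edge case $j = s_i$ (where $\ell_i' \geq 0$ trivially and $f_i' \geq s_i = j$) and the edge case $j = f_i$ (where $w[s_i,j-1] = w[s_i,f_i-1]$ is the entire copied part of $B_i$) are both covered by the same argument without modification.
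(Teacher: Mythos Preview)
Your proof is correct and follows essentially the same approach as the paper: both exhibit the prefix $w'[s_i,j-1]$ as a valid copy from the sliding window (using $s_i'=s_i$ from Claim~\ref{claim:c1si} and the fact that positions before $j$ agree in $w$ and $w'$), which forces $\ell_i' \geq j-s_i$. The only cosmetic differences are that the paper phrases this as a contradiction (assume $f_i'<j$, then the LZ77 choice was not longest) whereas you argue directly, and you carry the window parameter $W$ explicitly while the paper's proof is written for the $W=n$ setting.
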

            \begin{figure}[ht!]
    \centering
    \begin{tikzpicture}[scale=1.5]
        \draw (0,0) rectangle (6,0.5);
        \node at (-0.5, 0.25) {\(w\)};

        \draw[draw=red, thick, fill=red!10] (0.6,0) rectangle (2.4,0.5);
        \draw[->, thick] (0.6,-0.5) -- (0.6,-0.1);
        \node[align=center, below] at (0.6,-0.6) {\(q_i\)};

        \draw[->, thick, red] (0.6,0.5) arc[start angle=180, end angle=0, radius=0.9cm];
        \path[->, thick, red] (0.6,0.5) edge[bend left] (1.6,0.5);
        
        \node[align=center, above] at (1.0,1.3) {\(\ell_i\)};
        \node[align=center, above] at (1.0,0.6) {\(\ell_{ij}\)};

        \draw[draw=red, thick, fill=red!10] (3.4,0) rectangle (5.2,0.5);
        
        \draw[->, thick] (5.3,1) -- (5.3,0.75);
        \node[align=center, below] at (5.3,1.6) {\(c_i\)};

        \node[align=center, above] at (4.0,1) {\(B_i=[q_i,\ell_i,c_i]\)};
        
        \draw[->, thick] (3.4,-0.4) -- (3.4,-0.2);
        \node[align=center, below] at (3.4,-0.4) {\(s_i\)};

        \draw[draw=gray, thick, fill=gray] (5.2,0) rectangle (5.4,0.5);

        \draw[->, thick] (5.4,-0.4) -- (5.4,-0.2);
        \node[align=center, below] at (5.4,-0.4) {\(f_i\)};

        \draw (0,-1.5) rectangle (6,-1);
        \node at (-0.5, -1.25) {\(w'\)};

        \draw[draw=red, thick, fill=red!10] (0.6,-1.5) rectangle (2.4,-1);

        \draw[draw=red, thick, fill=red!10] (3.4,-1.5) rectangle (5.2,-1);
        \draw (4.4,0) rectangle (4.6,0.5);

        \fill[gray!10] (4.4,0.0) rectangle (4.6,0.5);

        
        \draw[draw=gray,dashed] (3.4,-0.1) rectangle (5.4,0.6);
        

        \draw (3.4,-1.6) rectangle (4.6,-0.9);
        \draw (4.4,-1.5) rectangle (4.6,-1);
        \fill[black!20] (4.4,-1.5) rectangle (4.6,-1);

        \draw (4.6,-1.6) rectangle (5.9,-0.9);
        
        \draw[draw=gray,dashed] (3.4,-1.7) rectangle (5.4,-0.8);

        \node[align=center, below] at (4.0,-1.7) {\( B'_i\)};
        \node[align=center, below] at (5.0,-1.7) {\( B'_{i+1}\)};

        \draw[draw=gray, thick, fill=gray] (5.2,-1.5) rectangle (5.4,-1);
        
        \draw[draw=black, thick] (1.6,-2) -- (1.6,1);
        \fill[gray!10] (1.6,0.0) rectangle (1.8,0.5);
        \fill[gray!10] (1.6,-1.5) rectangle (1.8,-1);

        \draw[draw=black, thick] (4.4,-2) -- (4.4,1);

        \node[draw=black,align=center, below] at (4.4,-2.1) {\( j:w[j] \neq w'[j]\)};

    \end{tikzpicture}
    \caption{Compressing \(w\) and \(w'\) when the character is located at $j$-position such as $s_i \leq s'_{k} \leq f_i$ and $s_i \leq s_{k+1} \leq f_i$.}
    \label{fig:case1_c_1}
\end{figure}
    \begin{proof}[Proof of Claim \ref{claim:c1fiprime}]
        Suppose for contradiction that $f_i'<j$. We first observe that $w'[s_i',f_i'-1]$ is the longest substring of $w'[1,s_i'-1]$ that starts at $s_i'$. Next, we observe that $w[s_i,j-1]$ is a substring of $w[1,s_i-1]$. Note that, since $j$ is the only index where $w$ and $w'$ differ, we further observe that 
        \begin{equation}\label{eq1}
            w[s_i,j-1]=w'[s_i,j-1]=w'[s_i',j-1],
        \end{equation}
        where the last equality comes from Claim \ref{claim:c1si}. Similarly, since $s_i\leq j$, we have 
        \begin{equation}\label{eq2}
            w[1,s_i-1]=w'[1,s_i-1]=w'[1,s_i'-1].
        \end{equation}
        By Equation \eqref{eq1} and Equation \eqref{eq2}, along with the fact that $w[s_i,j-1]$ is a substring of $w[1,s_i-1]$ that we observed before, we have that $w'[s_i',j-1]$ is a substring of $w'[1,s_i'-1]$. Contradiction because $w'[s_i',j-1]$ is a \emph{longer} substring of $w'[1,s_i'-1]$ than $w'[s_i',f_i'-1]$ starting at $s_i'$! Hence, the LZ77 algorithm would have picked $w'[s_i',j-1]$ instead of $w'[s_i',f_i'-1]$ when constructing $B_i'$. This contradiction is due to the assumption that $f_i'<j$. Hence, we can conclude that $f_i'\geq j$.
        \end{proof}

\begin{claim}\label{claim:c1fprime+1geqfi}
    If $s_i\leq j\leq f_i$, then $f_{i+1}'\geq f_i$.
    \end{claim}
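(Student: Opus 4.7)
The plan is to leverage the two previously established facts, namely $s_i = s_i'$ from Claim~\ref{claim:c1si} and $f_i' \geq j$ from Claim~\ref{claim:c1fiprime}, and then split into two cases based on the size of $f_i'$. In the easy case where $f_i' \geq f_i$, we immediately get $s_{i+1}' = f_i' + 1 > f_i$, so $f_{i+1}' \geq s_{i+1}' > f_i$ trivially. Hence we may assume $j \leq f_i' < f_i$, which puts $s_{i+1}' = f_i' + 1$ strictly between $j$ and $f_i$ (inclusive on the right).

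The heart of the argument is to construct an explicit long substring that block $B_{i+1}'$ can copy from the already-decompressed prefix of $w'$. The natural candidate is the ``offset-shifted'' source of $B_i$: consider the substring
\[
w'\bigl[\,q_i + (s_{i+1}' - s_i),\; q_i + \ell_i - 1\,\bigr].
\]
I would verify two things about this substring. First, it lies entirely inside $w'[1, s_i - 1]$: since $B_i$'s source region satisfies $q_i + \ell_i - 1 \leq s_i - 1$, we get $q_i + (s_{i+1}' - s_i) \leq q_i + \ell_i - 1 \leq s_i - 1 < s_{i+1}' - 1$, so it is a legitimate copy source for $B_{i+1}'$. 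Second, it matches the target $w'[s_{i+1}', f_i - 1]$. The matching follows from a chain of equalities that tracks where $w$ and $w'$ agree: both the source indices (all $\leq s_i - 1 \leq j$) and the target indices (all $> j$ since $s_{i+1}' > j$) lie in regions where $w$ and $w'$ coincide, so
\[
w'[s_{i+1}', f_i - 1] = w[s_{i+1}', f_i - 1] = w[q_i + (s_{i+1}' - s_i), q_i + \ell_i - 1] = w'[q_i + (s_{i+1}' - s_i), q_i + \ell_i - 1],
\]
where the middle equality is just the defining property of $B_i$ applied at the offset $s_{i+1}' - s_i$. This forces $\ell_{i+1}' \geq f_i - s_{i+1}'$ by maximality of the LZ77 match, and therefore $f_{i+1}' = s_{i+1}' + \ell_{i+1}' \geq f_i$.

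The main subtlety I expect is the careful bookkeeping of where $w$ and $w'$ agree, in particular handling the boundary subcase $j = s_i$ (where the argument that $w[1, s_i - 1] = w'[1, s_i - 1]$ uses a strict inequality) and the degenerate subcase $s_{i+1}' = f_i$ (where the matched substring is empty, but the conclusion $f_{i+1}' \geq s_{i+1}' = f_i$ still holds trivially since $\ell_{i+1}' \geq 0$). A secondary concern, if one wishes to extend this to the bounded-window setting $W < n$, is to check that the chosen copy source lies within the sliding window of $B_{i+1}'$; this follows because $q_i$ itself was within the window of $B_i$, so shifting forward by $s_{i+1}' - s_i$ keeps the source within distance $W$ of $s_{i+1}'$.
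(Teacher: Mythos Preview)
Your proposal is correct and follows essentially the same approach as the paper: both arguments use the copy source of block $B_i$, shifted by the offset $s_{i+1}'-s_i$, to exhibit an explicit match of length $f_i-s_{i+1}'$ available to block $B_{i+1}'$, and then appeal to the maximality of the LZ77 match. Your write-up is in fact somewhat more explicit than the paper's (which states ``$w[j+1,f_i-1]$ is a substring of $w[1,f_{i-1}]$'' without pinpointing the location), and your handling of the degenerate case $s_{i+1}'=f_i$ is a nice touch the paper leaves implicit; the only minor sloppiness is writing ``source indices all $\leq s_i-1\leq j$'' where you need the strict inequality $s_i-1<j$ to conclude $w=w'$ there, but this follows immediately from $s_i\leq j$.
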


    \begin{figure}[ht!]  
    \centering
    \begin{tikzpicture}[scale=1.5]
        \draw (0,0) rectangle (6,0.5);
        \node at (-0.5, 0.25) {\(w\)};

        \draw[draw=red, thick] (0.9,0) rectangle (4.9,0.5);
        
        \draw[->, thick] (0.9,-0.5) -- (0.9,-0.2);
        \node[align=center, below] at (0.9,-0.6) {\(s_i\)};
        \draw[->, thick] (4.9,-0.5) -- (4.9,-0.2);
        \node[align=center, below] at (4.9,-0.6) {\(f_i\)};
        \fill[red!10] (0.9,0) rectangle (4.9,0.5);
        \node at (2.5,0.25) {$B_{i}$};

        \draw[draw=gray, thick, dashed] (3.0,-1.5) rectangle (3.3,0.5);
        \node[align=center, below] at (3.15,-1.5) {\( w[j]\)};

        \begin{scope}[yshift=-1.5cm]
            \draw (0,0) rectangle (6,0.5);
            \node at (-0.5, 0.25) {\(w'\)};

            \draw[draw=red, thick] (0.9,0) rectangle (3.5,0.5);
            
            \draw[->, thick] (0.9,-0.5) -- (0.9,-0.2);
            \node[align=center, below] at (0.9,-0.7) {\(s'_i\)};
            \fill[red!10] (0.9,0) rectangle (3.5,0.5);
            \node at (2.0,0.25) {$B'_{i}$};
            
            \draw[->, thick] (3.5,-0.5) -- (3.5,-0.2);
            
            \node at (3.5,-0.7) {$f'_i$};

            \fill[red!10] (3.5,0) rectangle (5.4,0.5);
            \node at (4.3,0.25) {$B'_{i+1}...$};
            
            \draw[draw=black, thick, dashed] (3.0,-0.0) rectangle (3.3,0.5);
        \end{scope}
    \end{tikzpicture}
    \caption{Compression for $w$ and $w'$ which $f'_{i+1}\geq j$.}
    \label{fig:case_1_c_2}
\end{figure}
    
\begin{proof}[Proof of Claim \ref{claim:c1fprime+1geqfi}]
    Suppose for contradiction that $f_{i+1}' < f_i$.  Observe that if $f_{i+1}' < f_i$ then we have $f_i' < s_{i+1}' < f_{i+1}' < f_i$. By definition of LZ77-Block-Based we observe that $w'[s_{i+1}',f_{i+1}'-1]$ was the longest possible substring of $w'[1,f_i']$ starting at $s_{i+1}'$. Next, we observe that $w[j+1,f_i-1]$ is a substring of $w[1, f_{i-1}]$. since $f_{i-1}<j$ we further observe that:
    \begin{equation}
    w[1, f_{i-1}]=w'[1, f_{i-1}]  ,    
    \label{eq:case1_b_eq1}
    \end{equation}
    and
    \begin{equation}
   w[j+1,f_i-1]=w'[j+1,f_i-1].
   \label{eq:case1_b_eq2}
    \end{equation}
By Equation \eqref{eq:case1_b_eq1} and Equation \eqref{eq:case1_b_eq2}, along with the fact that \( w[s_i, j-1] \) is a substring of \( w[1, s_i - 1] \) that we observed before, we note that \( w'[s_{i+1}', f_i - 1] \) is a substring of \( w'[1, f_{i-1}] \). However, since $f_{i-1} = f_{i-1}' < f_i'$ it follows that \( w'[s_{i+1}', f_i - 1] \) is also a substring of \( w'[1, f_i'] \). This is a contradiction as \( w'[s_{i+1}', f_i - 1] \) is longer than \( w'[s_{i+1}', f_{i+1}' - 1] \) which means that LZ77-Block-Based would have selected the longer block. Thus, \( f_{i+1}' \geq f_i \). 
\end{proof}
     
Taken together, by Claim \ref{claim:c1si}, Claim \ref{claim:c1fiprime}, and Claim \ref{claim:c1fprime+1geqfi}, we have $s_i'=s_i$ and $s_{i+2}'=f_{i+1}'+1\geq f_i +1 > f_i$. This implies that at most two blocks $B_i'$ and $B_{i+1}'$ can start inside $B_i$. In particular, for block $B_{i+2}'$ we have $s_{i+2}' > f_{i+1}' \geq f_i$. This completes the proof of Case 1.

\label{case:c} \item \emph{Case 2: $j<s_i$.} If $f'_{k}\geq f_i $ then block $B_{k+1}'$ does not start inside block $B_i$ as $s_{k+1}' > f_{k}' \geq f_i$ so that block $B_i$ trivially contains at most 1 block. Thus, we may assume without loss of generality that $f'_k<f_i$. 

Since the block $B_k'$ starts inside $B_i$ it is useful to define the offset $z \coloneqq s'_k-s_i$. In the figure \ref{fig:case_2_a} can be shown:

        \begin{figure}[ht!]  
    \centering
    \begin{tikzpicture}[scale=1.5]
        \draw (0,0) rectangle (7,0.5);
        \node at (-0.5, 0.25) {\(w\)};
        
        \draw[draw=red, thick] (0.3,0) rectangle (2.6,0.5);
        \draw[draw=red, thick] (3.5,0) rectangle (5.6,0.5);
        \fill[red!10] (3.5,0) rectangle (5.8,0.5);
        \draw[draw=red, thick] (5.8,0) rectangle (6.0,0.5);
        \fill[red!5] (5.8,0) rectangle (6.0,0.5);
        \node at (5.9,0.25) {$c_{i}$};

        \draw[->, thick] (3.5,1.2) -- (3.5,0.8);
        \node[align=center, below] at (3.5,1.6) {\(s_i\)};
        \draw[->, thick] (4.3,1.2) -- (4.3,0.8);
        \node[align=center, below] at (4.3,1.6) {\(s_{i}+z\)};

        \draw[->, thick] (6.0,1.2) -- (6.0,0.8);
        \node[align=center, below] at (6.0,1.6) {\(f_i\)};

        \node at (4.8,0.25) {$B_{i}$};

    \draw (0,-2) rectangle (7,-2.5);
    \fill[gray!10] (1.8,-2) rectangle (2.6,-2.5);
    \fill[gray!10] (1.8,0) rectangle (2.6,0.5);

    \node at (0.7,0.25) {$z$};
    \node at (2.2,0.25) {$\tilde{z}$};
    \node at (2.2,-2.25) {$\tilde{z}$};

    \node at (-0.5, -2.25) {\(w'\)};
    \draw[draw=red, thick] (1.0,-2) rectangle (1.8,-2.5);
    \draw[dashed] (1.0,0.5) rectangle (1.0,-2.0);
    \draw[dashed] (1.8,0.5) rectangle (1.8,-2.0);
    \draw[dashed,draw=gray] (1.3,0.8) rectangle (1.6,-2.8);
    \node[color=gray] at (1.5,-3.0) {$w'[j]$};

    \draw[draw=red, thick] (1.8,-2) rectangle (2.6,-2.5);
    \draw[draw=red, thick] (4.3,-2) rectangle (5.8,-2.5);
    \node[align=center, below] at (0.3,1.6) {\(q_i\)};
    \draw[->, thick] (0.3,1.2) -- (0.3,0.8);

    \node[align=center, below] at (1.0,1.6) {\(q_i+z\)};
    \draw[->, thick] (1.0,1.2) -- (1.0,0.8);

    \node[align=center, below] at (1.8,2.3) {\(q_i+z+\ell'_k\)};
    \draw[->, thick] (1.8,2.0) -- (1.8,0.8);
    
    \fill[red!10] (4.3,-2) rectangle (5.8,-2.5);
    
    \draw[draw=red, thick] (5.2,-2) rectangle (5.2,-2.5);
    \fill[red!5] (5.2,-2) rectangle (5.8,-2.5);

    \draw[->, thick] (4.3,-3.0) -- (4.3,-2.6);
    \node[align=center, below] at (4.3,-3.2) {\(s'_{k}\)};

    \draw[->, thick] (5.8,-3.0) -- (5.8,-2.6);
    \node[align=center, below] at (5.8,-3.2) {\(f'_{k}\)};
    
    \node at (4.8,-2.25) {$B'_{k}$};
    \node at (5.5,-2.25) {$B'_{k+1}$};

    \draw[dashed] (4.3,0.5) rectangle (4.3,-2.0);
    \draw[dashed] (5.2,0.5) rectangle (5.2,-2);
    \draw[dashed] (5.8,0.5) rectangle (5.8,-2.);

    \node[align=center, below] at (5.2,2.3) {\(s_{i}+z+\ell'_k\)};
    \draw[->, thick] (5.2,2.0) -- (5.2,0.8);

    \end{tikzpicture}
    \caption{Compression for $w$ and $w'$ which $j<s_i$}
    \label{fig:case_2_a}
\end{figure}


We observe the following claims:
    \begin{claim}
    \label{claim:case2:a}
    If $j<s_i$ then $j \leq q_i+z+\ell'_k$.
    \end{claim}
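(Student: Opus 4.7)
The plan is to argue by contradiction. Suppose $j > q_i+z+\ell'_k$. The goal will be to produce a match of length $\ell'_k+1$ starting at position $s'_k$ in $w'$ with a source located in $w'[1,s'_k-1]$, which would contradict the maximality of $\ell'_k$ chosen greedily by the LZ77 algorithm when constructing block $B'_k$.

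First I would unpack the definitions. Since $B_i=[q_i,\ell_i,c_i]$, we have $w[s_i+t]=w[q_i+t]$ for all $t=0,\dots,\ell_i-1$. Because we have assumed $f'_k<f_i$ (the only non-trivial case, as noted before the claim), we get $z+\ell'_k<\ell_i$, so the identity $w[s_i+z+t]=w[q_i+z+t]$ holds for every $t=0,1,\dots,\ell'_k$. Since $j<s_i\leq s_i+z+t$, on the left side we may replace $w$ by $w'$ to obtain $w'[s'_k+t]=w[q_i+z+t]$ for those same $t$. Now I apply the contradiction hypothesis: if $j>q_i+z+\ell'_k$, then $q_i+z+t\neq j$ for $t=0,\dots,\ell'_k$, so $w[q_i+z+t]=w'[q_i+z+t]$. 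Chaining the two equalities yields
\[
w'[s'_k+t]\;=\;w'[q_i+z+t]\qquad\text{for all }t=0,1,\dots,\ell'_k,
\]
i.e., a match of length $\ell'_k+1$ starting at $s'_k$.

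Next I would verify that this match is a legitimate candidate that LZ77 should have considered when it built $B'_k$. Since $B_i$ was itself a valid LZ77 block, $q_i+\ell_i\leq s_i$, and combined with $z+\ell'_k<\ell_i$ this gives $q_i+z+\ell'_k<s_i\leq s'_k$, so the copied substring sits entirely inside $w'[1,s'_k-1]$. The sliding-window constraint is also satisfied: $s'_k-(q_i+z)=s_i-q_i\leq W$, because $B_i$ itself respected the window. Hence the LZ77 algorithm would have selected a match of length at least $\ell'_k+1$ when producing $B'_k$, contradicting the greedy choice of $\ell'_k$.

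I expect the only subtle point to be the boundary bookkeeping in step one — specifically, using $f'_k<f_i$ to license $z+\ell'_k<\ell_i$ so that the identity $w[s_i+z+t]=w[q_i+z+t]$ remains inside block $B_i$'s copied region up to $t=\ell'_k$ (including the last index). Everything else is a clean chain of equalities followed by two trivial inequality checks, so no further difficulty is anticipated.
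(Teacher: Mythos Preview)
Your proof is correct and follows essentially the same contradiction argument as the paper: assume $j>q_i+z+\ell'_k$, use the copy identity from block $B_i$ together with $w=w'$ on both the source and target ranges to exhibit a match of length $\ell'_k+1$ at $s'_k$ sourced at $q_i+z$, and contradict the greedy choice of $\ell'_k$. You are slightly more thorough than the paper in explicitly verifying the sliding-window constraint $s'_k-(q_i+z)=s_i-q_i\leq W$, which the paper omits here since this part of the argument is written under the assumption $W=n$.
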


\begin{proof}[Proof of Claim \ref{claim:case2:a}]
    Suppose for contradiction that $j>q_i+z+\ell_k'$, then we note $w'[s'_k,f'_k-1] $ is the longest possible substring of $w'[1,s'_{k}-1]$ that starts at $s'_k$. Next we observe that $w[q_{i}+z,q_i+z+\ell_k']$ is trivially a substring of $w[1,j-1]$ since we assumed $j>q_i+z+\ell_k'$. Furthermore, we have $w[q_{i}+z,q_i+z+\ell_k']=w[s_{i}+z,s_i+z+\ell_k']$ since $s_i+z+\ell_k'=f_k'<f_i$. Therefore,        
        $w[s_{i}+z,s_i+z+\ell_k']$ is a substring of $w[1,j-1]$. We further observe that

        \begin{equation}
           w[s_i+z,s_i+z+\ell_k']=w[s_k',f_k']=w'[s_k',f_k']
           \label{equation:case2:a:1}
        \end{equation}
    
    and
    \begin{equation}
        w[1,j-1]=w'[1,j-1].
        \label{equation:case2:a:2}
    \end{equation}

By equation \eqref{equation:case2:a:1} and \eqref{equation:case2:a:2} we have that $w'[s'_k,f_k']$ is a substring of $w'[1,j-1]$, which is a substring of $w'[1,s_k'-1]$. 
Contradiction because $w'[s'_k,f_k']$ is a \emph{longer} substring of $w'[1,s'_{k'}-1]$ than $w'[s'_k,f'_k-1]$ starting at $s'_k$! Hence, the LZ77 algorithm would have picked $w'[s'_k,f_k']$ instead of $w'[s'_k,f'_k-1]$. This contradiction is due to the assumption that $j>q_i+z+\ell_k$. Hence, we can conclude that $j\leq q_i+z+\ell_k$.
\end{proof}

    \begin{claim}
    \label{claim:case2:b}
    If $j<s_i$ then $f'_{k+1}\geq f_i$.
    \end{claim}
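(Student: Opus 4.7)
The plan is to prove Claim \ref{claim:case2:b} by contradiction, assuming $f'_{k+1} < f_i$ and deriving a contradiction with the maximality of LZ77 copies, in the spirit of Claim \ref{claim:c1fprime+1geqfi} from Case 1. First I would dispatch the trivial case $s'_{k+1} \geq f_i$ (then $f'_{k+1} \geq s'_{k+1} \geq f_i$ automatically), so we may assume $s'_{k+1} < f_i$.

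The heart of the argument is to exhibit an alternate copy source for $B'_{k+1}$ that is longer than the one LZ77 actually selected. Let $z = s'_k - s_i$ be the offset from Figure \ref{fig:case_2_a}, so that $s'_{k+1} = f'_k + 1 = s_i + z + \ell'_k + 1$. Since $B_i = [q_i, \ell_i, c_i]$ encodes $w[s_i, f_i-1] = w[q_i, q_i + \ell_i - 1]$, shifting by $z + \ell'_k + 1$ gives
\[
w[s'_{k+1}, f_i - 1] \;=\; w\bigl[q_i + z + \ell'_k + 1,\; q_i + \ell_i - 1\bigr].
\]
I would then invoke Claim \ref{claim:case2:a}, which states $j \leq q_i + z + \ell'_k$, so the source range $[q_i + z + \ell'_k + 1,\; q_i + \ell_i - 1]$ lies strictly beyond the difference index $j$. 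Hence $w$ and $w'$ agree on it. Combined with the fact that $j < s_i \leq s'_{k+1}$ (so $w$ and $w'$ also agree on the target range $[s'_{k+1}, f_i-1]$), we obtain
\[
w'[s'_{k+1}, f_i - 1] \;=\; w'\bigl[q_i + z + \ell'_k + 1,\; q_i + \ell_i - 1\bigr].
\]

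It remains to verify that this candidate source is admissible for LZ77 when constructing $B'_{k+1}$, i.e., that the entire source lies in $w'[1, s'_{k+1} - 1]$. This follows from the standard LZ77 constraint that each block's source ends before its target, so $q_i + \ell_i - 1 < s_i \leq s'_{k+1} - 1$. Consequently the longest copy LZ77 could place at position $s'_{k+1}$ has length at least $f_i - s'_{k+1}$, forcing $\ell'_{k+1} \geq f_i - s'_{k+1}$ and hence $f'_{k+1} = s'_{k+1} + \ell'_{k+1} \geq f_i$, contradicting our assumption.

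The main obstacle is purely bookkeeping: keeping the index arithmetic straight across $w$ and $w'$, and carefully verifying that both the source and target ranges of the proposed copy avoid the single differing index $j$. The genuinely nontrivial ingredient, namely that the source $[q_i + z + \ell'_k + 1, q_i + \ell_i - 1]$ begins past $j$, is precisely what Claim \ref{claim:case2:a} was designed to supply, so this claim reduces to careful assembly of the earlier machinery rather than a new combinatorial insight.
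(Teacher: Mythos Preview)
Your proposal is correct and follows essentially the same approach as the paper's proof: both assume $f'_{k+1} < f_i$, exhibit $w'[s'_{k+1}, f_i-1] = w'[q_i + z + \ell'_k + 1,\, q_i + \ell_i - 1]$ as an admissible (and longer) copy source for $B'_{k+1}$ by invoking Claim~\ref{claim:case2:a}, and derive a contradiction with LZ77 maximality. Your version is slightly more streamlined in that you compute the source endpoint directly as $q_i + \ell_i - 1$ rather than via the auxiliary variable $\tilde z = (f_i-1)-f'_k$ the paper introduces, and you explicitly check that the source lies in $w'[1,s'_{k+1}-1]$ via $q_i + \ell_i - 1 < s_i$, which the paper leaves implicit.
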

    \begin{proof}[Proof of Claim 
    \ref{claim:case2:b}]

Suppose for contradiction that \( f'_{k + 1} < f_i \). 
By the definition of the LZ77-Block-Based compression algorithm, we first observe that $w'[s_{k+1}',f_{k+1}'-1]$ is the longest possible substring of $w'[1,s_{k+1}'-1]$ starting at $s_{k+1}'$. 
It is useful for our proof to define $\tilde{z} \coloneqq (f_i-1)-f_k'$. We observe that, by the definition of the LZ77-Block-Based compression algorithm, 
\begin{equation}\label{eqqisi}
    w[q_i + z + \ell'_k + 1, q_i + z + \ell'_k + \tilde{z}] = w[s_i + z + \ell'_k + 1, s_i + z + \ell'_k + \tilde{z}],    
\end{equation}
since the LHS of \eqref{eqqisi} is copied to the RHS of \eqref{eqqisi} when we run the algorithm. We also observe that
\begin{align}\label{eq:case2:11}
    s_i + z + \ell'_k + 1 &= s_i + (s_k'-s_i) + \ell'_k + 1 \nonumber\\
    &= s_k' + \ell_k' + 1 \nonumber \\
    &= f_k'+1 \nonumber \\
    &= s_{k+1}',
\end{align}
and
\begin{align}\label{eq:case2:22}
    s_i + z + \ell'_k + \tilde{z} &= s_i + (s_k'-s_i) + \ell'_k + (f_i-1)-f_k' \nonumber\\
    &= s_k' + \ell'_k + (f_i-1) - f_k'\nonumber \\
    &= f_k' + (f_i-1) - f_k' \nonumber\\
    &= f_i - 1.
\end{align}
Together with Equation \eqref{eq:case2:11} and \eqref{eq:case2:22}, by Equation \eqref{eqqisi} we have
\begin{equation}\label{c}
    w[q_i + z + \ell'_k + 1, q_i + z + \ell'_k + \tilde{z}] = w[s_{k+1}', f_i-1].
\end{equation}


Due to Claim \ref{claim:case2:a}, we have $j<q_i+z+\ell'_k+1$. Since $j$ was the unique index where $w[j]\neq w'[j]$, we have
\begin{equation}\label{a}
    w[q_i + z + \ell'_k + 1, q_i + z + \ell'_k + \tilde{z}] = w'[q_i + z + \ell'_k + 1, q_i + z + \ell'_k + \tilde{z}],
\end{equation}
and
\begin{equation}\label{b}
    w[s_{k+1}', f_i-1] = w'[s_{k+1}', f_i-1].
\end{equation}
Applying Equations \eqref{a} and \eqref{b} to Equation \eqref{c}, we obtain
\begin{equation}
    w'[q_i + z + \ell'_k + 1, q_i + z + \ell'_k + \tilde{z}] = w'[s_{k+1}', f_i-1].
\end{equation}
Since $w'[q_i + z + \ell'_k + 1, q_i + z + \ell'_k + \tilde{z}]$ is a substring of $w'[1,s_{k+1}'-1]$, we observe that $w'[s_{k+1}', f_i-1]$ is a substring of $w'[1,s_{k+1}'-1]$ starting at $s_{k+1}'$. Contradiction since $f_i-1 > f_{k+1}'-1$, which implies the LZ77-Block-Based compression algorithm would have picked $w'[s_{k+1}', f_i-1]$ instead of $w'[s_{k+1}', f_{k+1}'-1]$ when constructing the block $B_{k+1}'$. This contradiction is due to the assumption $f_{k+1}'<f_i$. Hence, we can conclude that $f_{k+1}'\geq f_i$.

Taken together, by Claim \ref{claim:case2:a} and Claim \ref{claim:case2:b}, we have $j\leq q_i+z+\ell'_k$ and $f'_{k+1}\geq f_i$. this implies that at most two blocks $B'_k$ and $B'_{k+1}$ can start inside $B_i$. In particular, for block $B'_{k+2}$ we have $s'_{k+2}\geq f'_{k+1}\geq f_i$. This completes the proof of Case 2.
\end{proof}


\end{enumerate}

We have considered \lemref{lemma:start_inside}, where in Case 0 we showed that if $f_i<j$, then $B_k=B_k'\,\,\forall k\leq i$ what means, in fact, for Claim \ref{claim:c0}  there is only a block $B'_i$ that start inside.
In Case 1 we showed in Claim \ref{claim:c1si} that If $s_i\leq j\leq f_i$, then $s_i=s_i'$, that explains when both start position are the same, so then there is only one block that can start inside $B_i$. However, we showed in Claim \ref{claim:c1fiprime} that if $s_i\leq j\leq f_i$, then $f_i'\geq j$ what means there are two blocks $B'_i$ and $B'_{i+1}$ that start inside $B_i$. Eventually, by Claim \ref{claim:c1fprime+1geqfi} if $s_i\leq j\leq f_i$, then $f_{i+1}'\geq f_i$ and this condition implies immediately since $s_i=s'_i$ that at most two blocks $B'_i$ and $B'_{i+1}$ can start inside $B_i$.
In Case 2, we proved in Claim \ref{claim:case2:a} that if $j<s_i$ then $j \leq q_i+z+\ell'_k$ and in Claim \ref{claim:case2:b} if $j<s_i$ then $f'_{k+1}\geq f_i$. We established that taking all possible assumptions of our lemma are valid. By proving that \emph{at most two blocks can start inside $B_i$} holds in all scenarios, we have proved that the lemma is true in general.
Therefore, we conclude that \lemref{lemma:start_inside} is proven.
\end{proof}
}{}

\newcommand{\claimGSstatement}{
Let $\compress:(\Sigma)^*\rightarrow(\Sigma')^*$ be the LZ77 compression function  and $w,w'$ be strings of length $n$ and $w\sim w'$. Then
    $\left| \left| \compress(w) \right| - \left| \compress(w') \right| \right| \leq  t_2(2\left\lceil \log n \right \rceil + \left\lceil \log |\Sigma|\right \rceil).$
}
\begin{claim}\claimlab{claim:compress:GS}
\claimGSstatement
\end{claim}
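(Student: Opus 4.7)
The plan is to reduce the bit-length difference to a block-count difference, and then invoke the already-established bound on the block-count difference. The key observation is that every LZ77 block has the \emph{same} encoding length, so the map from (block count) to (compressed length) is linear, and the sensitivity question reduces to bounding $|t - t'|$.

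Concretely, I would first recall from the LZ77 description in \secref{sec:prelim} that each output block has the form $B_i = [q_i, \ell_i, c_i]$, where $q_i, \ell_i \in \{0, 1, \ldots, n-1\}$ are non-negative integers bounded by the string length and $c_i \in \Sigma$ is a single alphabet symbol. A uniform encoding therefore uses exactly $2\lceil \log n\rceil + \lceil \log |\Sigma|\rceil$ bits for each block, regardless of its content. This gives the identities $|\compress(w)| = t \cdot (2\lceil \log n\rceil + \lceil \log |\Sigma|\rceil)$ and $|\compress(w')| = t' \cdot (2\lceil \log n\rceil + \lceil \log |\Sigma|\rceil)$, from which
\[
\bigl| |\compress(w)| - |\compress(w')| \bigr| \;=\; |t - t'| \cdot (2\lceil \log n\rceil + \lceil \log |\Sigma|\rceil).
\]

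The second step is to bound $|t - t'|$ by $t_2$. Since $w \sim w'$ is symmetric, I may assume without loss of generality that $t' \geq t$; if instead $t > t'$, I simply swap the roles of $w$ and $w'$ throughout, and the type-2 count in the swapped setting plays the role of $t_2$. Under the assumption $t' \geq t$, \claimref{claim:set:blocksm2:GS} directly gives $t' - t \leq t_2$, i.e., $|t - t'| \leq t_2$. Multiplying through by the per-block length $2\lceil \log n\rceil + \lceil \log |\Sigma|\rceil$ yields the claimed inequality.

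There is no substantive obstacle here; the content of this claim is essentially bookkeeping that packages together the uniform-per-block encoding cost with the combinatorial bound $t' - t \leq t_2$ from \claimref{claim:set:blocksm2:GS}. The only subtle point worth being explicit about is the WLOG reduction on the ordering of $t$ and $t'$, which is why the statement is naturally phrased in the one-sided form $t_2 \cdot (2\lceil \log n\rceil + \lceil \log |\Sigma|\rceil)$ rather than as a two-sided quantity.
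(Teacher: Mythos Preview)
Your proposal is correct and follows essentially the same approach as the paper: compute the per-block encoding cost $2\lceil \log n\rceil + \lceil \log |\Sigma|\rceil$, write each compressed length as (block count) times that cost, and then invoke \claimref{claim:set:blocksm2:GS} together with the WLOG assumption $t'\geq t$ to bound $|t-t'|$ by $t_2$. The only difference is that you are slightly more explicit about the WLOG step, which the paper handles in the surrounding discussion rather than inside the proof itself.
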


\begin{proof}[Proof of {\claimref{claim:compress:GS}}]
Let $(B_1,\ldots,B_t)\gets\compress(w)$ and $(B_1',\ldots,B_{t'}')\gets\compress(w')$. 
We first observe that there are $t$ blocks in $\compress(w)$ and encoding each block we need $2 \lceil\log n\rceil + \lceil\log|\Sigma|\rceil$ bits, since each block consists of two integers $q_i$ and $\ell_i$, and one character $c_i$ with $0\leq q_i,\ell_i\leq n-1$ and $c_i\in\Sigma$. Hence, $|\compress(w)| = t(2\lceil\log n\rceil + \lceil\log|\Sigma|\rceil).$ and 
$|\compress(w')| = t'(2\lceil\log n\rceil + \lceil\log|\Sigma|\rceil).$
Then,     $\left| \compress(w) \right| - \left| \compress(w') \right| =\left|t-t'\right||2\lceil\log n\rceil + \lceil\log|\Sigma|\rceil|.$ By \claimref{claim:set:blocksm2:GS} above, we observe that $|t-t'|\leq t_2$. Hence, we have \\$\left|\left| \compress(w) \right| - \left| \compress(w') \right|\right|\leq t_2(2\lceil\log n\rceil + \lceil\log|\Sigma|\rceil).$
\end{proof}

\subsection{Counting Type-2 Blocks for LZ77}\applab{app:missingproofC}

\begin{remindertheorem}{\lemref{lemma:length_cases_total}}
    \lemblocknumstatement
\end{remindertheorem}

\begin{proof}[Proof of {{\lemref{lemma:length_cases_total}}}]
To prove \lemref{lemma:length_cases_total}, we need the following claims.

\begin{claim}
\claimlab{claim:length_cases_total_a}
    If $i\in\B_2$ then either (1) $s_i\leq j\leq f_i$ or (2) $j-\ell_i < q_i \leq j$.
\end{claim}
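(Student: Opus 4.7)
The plan is to prove the contrapositive: assume $j\notin[s_i,f_i]$ and $j\notin[q_i,q_i+\ell_i-1]$, and derive $|\M_i|\leq 1$, so $i\notin\B_2$. Since LZ77 requires the copy source $[q_i,q_i+\ell_i-1]$ to precede the destination (so $q_i+\ell_i\leq s_i$), the negated hypothesis cleanly splits into two disjoint sub-cases: either $j>f_i$, or $j<s_i$ with $j\notin[q_i,q_i+\ell_i-1]$. The first sub-case is immediate from \claimref{c0} in the proof of \lemref{lemma:start_inside}: we get $B_k=B_k'$ for every $k\leq i$, so $s_i'=s_i$ and $f_i'=f_i$, and strict monotonicity of start positions makes $B_i'$ the unique block of $\compress(w')$ that starts inside $B_i$, giving $|\M_i|=1$.

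The substantive work is the sub-case $j<s_i$ with $j\notin[q_i,q_i+\ell_i-1]$. The observation driving the argument is that because $j$ avoids both the source and the destination intervals, $w'[q_i,q_i+\ell_i-1]=w[q_i,q_i+\ell_i-1]$ and $w'[s_i,f_i]=w[s_i,f_i]$, so the defining identities $w'[s_i,f_i-1]=w'[q_i,q_i+\ell_i-1]$ and $w'[f_i]=c_i$ of block $B_i$ remain valid in $w'$. I would then let $k^*$ be the smallest index with $f'_{k^*}\geq s_i$ and split on whether $s'_{k^*}=s_i$ or $s'_{k^*}<s_i$. If $s'_{k^*}=s_i$, the longest-match rule of LZ77 applied to $w'$ discovers the length-$\ell_i$ match at offset $q_i$, forcing $\ell'_{k^*}\geq\ell_i$ and therefore $f'_{k^*}\geq f_i$. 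If $s'_{k^*}<s_i$, then $B'_{k^*}$ does not start inside $B_i$, and the first block that does is $B'_{k^*+1}$ with offset $z=s'_{k^*+1}-s_i\in[1,\ell_i]$; applying the same matching argument to the tail $w'[s_i+z,f_i-1]=w'[q_i+z,q_i+\ell_i-1]$ yields $f'_{k^*+1}\geq f_i$. In every case at most one block of $\compress(w')$ starts inside $B_i$, completing the proof.

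The main obstacle is verifying in the shifted sub-case that $q_i+z$ is an admissible offset when encoding $w'$ from position $s_i+z$: one must check $q_i+\ell_i-1\leq s_i+z-1$ (source precedes destination) and $q_i+z\geq\max\{1,(s_i+z)-W\}$ (inside the sliding window). Both follow routinely from $q_i+\ell_i\leq s_i$ and from the admissibility of offset $q_i$ for the original block $B_i$, but they are the only non-cosmetic details that need checking to legitimately invoke the longest-match guarantee $\ell'_{k^*+1}\geq\ell_i-z$ in the shifted case.
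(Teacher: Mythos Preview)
Your proof is correct and follows essentially the same approach as the paper: prove the contrapositive, dispose of the case $j>f_i$ via the identical-prefix observation, and in the case $j<s_i$ with $j\notin[q_i,q_i+\ell_i-1]$ use the key identity $w'[s_i+z,f_i-1]=w'[q_i+z,q_i+\ell_i-1]$ together with the longest-match rule to force the first block of $\compress(w')$ that starts inside $B_i$ to already reach $f_i$. The paper takes $k$ directly as the least element of $\M_i$ (avoiding your extra split on whether $s'_{k^*}=s_i$ or $s'_{k^*}<s_i$), but the substance is identical; your explicit check that $q_i+z$ is an admissible source offset within the sliding window is a detail the paper leaves implicit.
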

\begin{proof}
    We first observe that it has to be either $j\geq s_i$ or $j<s_i$.
    \begin{itemize}
        \item If $j\geq s_i$, then we want to show that $s_i\leq j\leq f_i$. Suppose for contradiction that $j>f_i$. Then we have $w[1,f_i]=w'[1,f_i]$, i.e., two substrings are identical. Hence, by definition of the LZ77 compression algorithm, the block constructions are also the same, i.e., $s_i=s_i'$ and $f_i=f_i'$, where $s_i'$ and $f_i'$ denote the starting and finishing index for the block $B_i'$ for compressing $w'$. This implies that $\M_i=\{i\}$ and $i\in\B_1$. Contradiction since $i\in\B_2$. Hence, we have $s_i\leq j\leq f_i$ if $j\geq s_i$.
        \item If $j<s_i$, then we want to show that $j-\ell_i<q_i\leq j$, i.e., $q_i\leq j < q_i+\ell_i$.

        Suppose for contradiction that $j\not\in[q_i,q_i+\ell_i-1]$. Then we have 
        \begin{equation}\eqnlab{eq:C9C2}
            w[s_i,f_i-1]=w[q_i,q_i+\ell_i-1]=w'[q_i,q_i+\ell_i-1]=w'[s_i,f_i-1].
        \end{equation}
        Let $k$ be the smallest element in $\M_i$, i.e., $s_i\leq s_k'\leq f_i$ but $s_{k-1}'<s_i$. 
        \begin{itemize}
            \item If $s_k'=f_i$, then if there is another $k'\in\M_i$ with $k'\neq k$, then by choice of $k$, we have $k<k'$ and therefore $s_{k'}'>f_k'>s_k'=f_i$. Hence, $|\M_i|= 1$ and $i\in\B_1$. Contradiction!
            \item If $s_i\leq s_k'\leq f_i-1$, then from \eqnref{eq:C9C2} and from the fact that $f_i=s_i+\ell_i$, we observe that the substring $w'[s_i,f_i-1]$ can be obtained by shifting the substring $w'[q_i,q_i+\ell_i-1]$ by $s_i-q_i$. Hence,
            \begin{align*}
                w'[s_k',f_i-1] &= w'[s_k',s_i+\ell_i-1]\\
                &=w'[s_k'-(s_i-q_i),s_i+\ell_i-1-(s_i-q_i)]\\
                &=w'[q_i+(s_k'-s_i),q_i+\ell_i-1],
            \end{align*}
            which implies that $f_k'\geq f_i$ by definition of the LZ77 compression algorithm. If there is another $k'\in\M_i$ with $k'\neq k$, then by choice of $k$, we have $k<k'$ and therefore $s_{k'}'>f_k'\geq f_i$. Hence, $|\M_i|=1$ and $i\in\B_1$. Contradiction!
        \end{itemize}
        Hence, we have $j\in[q_i,q_i+\ell_i-1]$ and therefore $q_i\leq j<q_i+\ell_i$ if $j<s_i$.
    \end{itemize}
    Taken together, we can conclude that if $i\in\B_2$ then either $s_i\leq j\leq f_i$ or $q_i\leq j<q_i+\ell_i$ must hold.
\end{proof}

\begin{claim}
\claimlab{claim:length_cases_total_b}
    If $i_1,i_2\in\B_2$ then $(q_{i_1},\ell_{i_1})\neq(q_{i_2},\ell_{i_2})$. 
\end{claim}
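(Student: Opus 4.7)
The plan is to assume for contradiction that $i_1 < i_2$ with $(q_{i_1},\ell_{i_1}) = (q_{i_2},\ell_{i_2}) = (q,\ell)$ and derive a violation of the LZ77 maximality property that governs the construction of block $B_{i_2}$. By \claimref{claim:length_cases_total_a} each block falls into either case~(1), $s_i \leq j \leq f_i$, or case~(2), $q_i \leq j < q_i + \ell_i$. Since the output intervals $[s_i,f_i]$ are pairwise disjoint, at most one of the two blocks can be in case~(1). I would therefore focus first on the main subcase in which both $B_{i_1}$ and $B_{i_2}$ lie in case~(2), so $s_{i_1},s_{i_2} > j$ and $j \in [q, q+\ell-1]$, and then dispatch the mixed case by a near-identical alternative-source argument, using that the unique case-(1) block is pinned around $j$.

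The crux of the main subcase is to exhibit an alternative source for $B_{i_2}$ that is strictly more recent than $q$. Since $B_{i_1}$ copies $w[q, q+\ell-1]$ to the interval $[s_{i_1}, s_{i_1}+\ell-1]$, we obtain the identity
\[ w[s_{i_1}, s_{i_1}+\ell-1] \;=\; w[q, q+\ell-1] \;=\; w[s_{i_2}, s_{i_2}+\ell-1], \]
and $s_{i_1} + \ell - 1 = f_{i_1} - 1 < s_{i_2}$, so the position $s_{i_1}$ is itself a legal source that achieves a length-$\ell$ match at destination $s_{i_2}$. Comparing the next character gives $w[s_{i_1}+\ell] = w[f_{i_1}] = c_{i_1}$ while $w[s_{i_2}+\ell] = c_{i_2}$. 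If $c_{i_1} = c_{i_2}$ the alternative source extends to a length-$(\ell+1)$ match, directly contradicting the maximality of $\ell_{i_2} = \ell$. Otherwise $c_{i_1} \neq c_{i_2}$, but then $s_{i_1} > j \geq q$ is strictly more recent than $q$ and still attains the maximum match length $\ell$, so the pointer $q_{i_2}$ recorded by LZ77 cannot equal $q = q_{i_1}$ under the algorithm's canonical rule of choosing the most recent source among equally long matches, again contradicting the assumption.

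The main obstacle I anticipate is making the $c_{i_1} \neq c_{i_2}$ branch airtight without implicitly invoking a specific tie-breaking convention, since the formal description in \secref{sec:prelim} only requires ``some $z$'' realizing the longest match. I would expect to tighten this by leveraging the case-(2) constraints on \emph{both} blocks simultaneously: since $j \in [q,q+\ell-1]$ and $s_{i_1}$ contains the same source substring but lies strictly after $j$, the context immediately following position $s_{i_1}$ in $w$ is more informative than the context following $q$, which should force the match anchored at $s_{i_1}$ to outgrow length $\ell$ once additional structural constraints on $B_{i_1+1}$ and the neighboring $w'$-blocks (from the proof of \lemref{lemma:start_inside} Case~2) are taken into account. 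Closing this residual case cleanly is where I expect the bulk of the technical work to concentrate.
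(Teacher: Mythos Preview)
Your argument has a genuine gap in the $c_{i_1}\neq c_{i_2}$ branch, and it is not merely a technicality to be ``tightened.'' The LZ77 description in \secref{sec:prelim} explicitly says the source is ``some $z$'' realizing the longest match; no most-recent tie-breaking is assumed. Without such a rule, nothing prevents two distinct $w$-blocks from sharing the same $(q,\ell)$ whenever their trailing characters differ (e.g., in $aXaYaZ$ with earliest-match tie-breaking, the blocks for $aY$ and $aZ$ can both have $(q,\ell)=(1,1)$). Your residual plan appeals to ``additional structural constraints on $B_{i_1+1}$'' and the $w'$-blocks from \lemref{lemma:start_inside}, but you never actually use the hypothesis $i_2\in\B_2$ in a way that rules this out; the entire argument you wrote concerns only the $w$-compression, so it cannot succeed.

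The paper's proof is structurally different: it derives the contradiction \emph{from} the hypothesis $i_2\in\B_2$ by working on the $w'$ side. Let $B_k'$ be the first $w'$-block starting inside $B_{i_2}$. Because $(q_{i_1},\ell_{i_1})=(q_{i_2},\ell_{i_2})$, the interval $[s_{i_1},f_{i_1}-1]$ (or, in the other case, $[q_{i_2},q_{i_2}+\ell_{i_2}-1]$) already holds a copy of $w'[s_{i_2},f_{i_2}-1]$ inside the portion of $w'$ that precedes $s_k'$. Hence $w'[s_k',f_{i_2}-1]$ has a legal source, forcing $f_k'\geq f_{i_2}$ by LZ77 maximality in the $w'$-compression. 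But then only one $w'$-block starts inside $B_{i_2}$, i.e., $i_2\in\B_1$, contradicting $i_2\in\B_2$. This route never needs a tie-breaking convention and uses the type-2 hypothesis essentially; your approach tries to contradict $w$-maximality directly and does not.
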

\begin{proof}
Intuitively, we prove the claim by contradiction. Suppose there exists $i_1,i_2\in\B_2$ such that $(q_{i_1},\ell_{i_1})=(q_{i_2},\ell_{i_2})$. Then for the first block of $\compress(w')$ that starts inside $B_{i_2}$, we can copy a substring until at least the end of $B_{i_2}$ either from $w'[q_{i_2},q_{i_2}+\ell_{i_2}-1]$ or from $B_{i_1}$ since at least one of them does not contain the differing index $j$ by \claimref{claim:length_cases_total_a}. We remark that the same intuition applies for the LZ77 with self-referencing as well --- see \claimref{sr_uniqueoffset} in \appref{app:missingproofCSR} for details. Below we give the formal proof of \claimref{claim:length_cases_total_b}.

Suppose for contradiction that there exist distinct $i_1,i_2 \in \mathcal{B}_2 $ such that $(q_{i_1},\ell_{i_1})=(q_{i_2},\ell_{i_2})$. Without loss of generality, let $i_1<i_2$. Now, we consider the following cases:
\begin{itemize}
\label{case:length_cases_total:case:a} \item \emph{Case 1: $j<s_{i_1}$}. 
Since $i_1<i_2$, we observe that $j<s_{i_1}<s_{i_2}$. Let $k \in \M_{i_2}$ be the \emph{smallest} element in  $\M_{i_2}$. Since $q_{i_1}=q_{i_2}$ and $\ell_{i_1}=\ell_{i_2}$ and $j<s_{i_1}$, we have
$w[q_{i_1},q_{i_1}+\ell_{i_1}-1]=w[s_{i_1},f_{i_1}-1]=w'[s_{i_1},f_{i_1}-1]$ and $w[q_{i_2},q_{i_2}+\ell_{i_2}-1]=w[s_{i_2},f_{i_2}-1]=w'[s_{i_2},f_{i_2}-1]$. 
Therefore, we have
\begin{equation}\eqnlab{claim29eq1}
w'[s'_{k},f_{i_2}-1]=w'[s_{i_1}+(s'_{k}-s_{i_2}),f_{i_1}-1].    
\end{equation}

We claim $f'_k\geq f_{i_2}$. Suppose for contradiction that $f'_k<f_{i_2}$. By definition of LZ77, $w'[s'_k,f'_{k}-1]$ is the longest substring of $w'[1,s'_k-1]$ starting at $s'_k$. However, \eqnref{claim29eq1} implies that $w'[s'_{k},f_{i_2}-1]$ is also a substring of $w'[1,s'_k-1]$ which is \emph{longer than} $w'[s'_k,f'_{k}-1]$, which is a contradiction. Hence, we have $f'_k\geq f_{i_2}$.



If there is another $k' \in \M_{i_2}$ such that $k' \neq k$ then, by choice of $k$, we have $k<k'$ and therefore $s_{k'}' > f_k' \geq f_{i_2}$. This contradicts the definition of $\M_{i_2}$ since block $B'_{k'}$ does not start inside block $B_{i_2}$. It follows that $\left| \M_{i_2}\right| \leq 1$ and therefore $i_2\in\B_1$. This is a contradiction! 

\label{case:length_cases_total:case:b}  \item \emph{Case 2: $j \geq s_{i_1}$}.
We then observe $s_{i_1} \leq j<f_{i_1}$ (if $f_{i_1}\leq j$ then only one block would have started inside $B_{i_1}$ by observing $w[s_{i_1},f_{i_1}-1]=w'[s_{i_1},f_{i_1}-1]$, hence $i\in \B_1$, which is a contradiction since $i\in\B_2$). And since we assumed $i_1<i_2$, we further have $s_{i_1} \leq j<f_{i_1}<s_{i_2}$. 
Now since $s_{i_2}>j$, by \claimref{claim:length_cases_total_a}, we have $q_{i_2}\leq j<q_{i_2}+\ell_{i_2}$. Since $q_{i_1}=q_{i_2}$ and $\ell_{i_1}=\ell_{i_2}$ by our assumption, we then have $q_{i_1}\leq j<q_{i_1}+\ell_{i_1}$. This is a contradiction since $s_{i_1}\leq j<f_{i_1}$!

\end{itemize}
Taken together, we can conclude that if $i_1,i_2\in\B_2$ then $(q_{i_1},\ell_{i_1})\neq(q_{i_2},\ell_{i_2})$.   
\end{proof}

\begin{claim}
\claimlab{claim:length_cases_total_c}
    Let $\B_2^\ell\coloneqq\{i\in\B_2:\ell_i=\ell\}$ and suppose that $s_{i^*}\leq j\leq f_{i^*}$ for some $i^*\in[t]$. Then $|\B_2^\ell|\leq\ell$ for all $\ell\neq \ell_{i^*}$, and $|\B_2^{\ell_{i^*}}|\leq\ell_{i^*}+1$.
\end{claim}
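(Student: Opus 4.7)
The plan is to combine the two preceding claims (\claimref{claim:length_cases_total_a} and \claimref{claim:length_cases_total_b}) with a simple counting/pigeonhole argument based on the possible values of the offset $q_i$. First, I would observe that the blocks $B_1,\ldots,B_t$ partition $[1,n]$ into consecutive intervals $[s_i,f_i]$, since $s_{i+1}=f_i+1$. Therefore the hypothesis $s_{i^*}\leq j\leq f_{i^*}$ identifies $i^*$ \emph{uniquely}, and for every $i\in[t]$ with $i\neq i^*$ we have $j\notin[s_i,f_i]$.

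Next, I would specialize \claimref{claim:length_cases_total_a}. Since for each $i\in\B_2$ that claim forces either $s_i\leq j\leq f_i$ or $q_i\leq j<q_i+\ell_i$, the uniqueness of $i^*$ implies that every $i\in\B_2$ with $i\neq i^*$ must satisfy the second alternative. Hence, for any $\ell\geq 1$ and any $i\in\B_2^\ell$ with $i\neq i^*$, we get
\[ q_i \in \{j-\ell+1,\, j-\ell+2,\, \ldots,\, j\}, \]
a set of size exactly $\ell$. By \claimref{claim:length_cases_total_b}, the pairs $(q_{i_1},\ell_{i_1})$ are distinct for distinct $i_1,i_2\in\B_2$; restricting to elements of $\B_2^\ell$ (where the second coordinate is fixed to $\ell$), this means the values $q_i$ must themselves be pairwise distinct. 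A direct pigeonhole argument then gives
\[ \bigl|\{i\in\B_2^\ell : i\neq i^*\}\bigr| \leq \ell. \]

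Finally, I would split on whether $\ell=\ell_{i^*}$ or not. If $\ell\neq\ell_{i^*}$ then $i^*\notin\B_2^\ell$ by definition of $\B_2^\ell$, so $|\B_2^\ell|=|\{i\in\B_2^\ell:i\neq i^*\}|\leq\ell$. If $\ell=\ell_{i^*}$, then $i^*$ may or may not belong to $\B_2^{\ell_{i^*}}$, and in the worst case adds exactly one more index, yielding $|\B_2^{\ell_{i^*}}|\leq\ell_{i^*}+1$.

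I do not expect any serious obstacle: the key nontrivial content — the dichotomy in \claimref{claim:length_cases_total_a} and the uniqueness of offsets in \claimref{claim:length_cases_total_b} — has already been established. The only subtlety worth flagging is making sure the ``$+1$'' in the $\ell=\ell_{i^*}$ case is necessary precisely because $i^*$ itself need not satisfy the offset constraint $q_{i^*}\leq j<q_{i^*}+\ell_{i^*}$, and therefore cannot be counted by the pigeonhole on $q$-values alone.
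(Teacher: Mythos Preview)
Your proposal is correct and follows essentially the same approach as the paper: use the uniqueness of $i^*$ together with \claimref{claim:length_cases_total_a} to force the offset constraint $q_i\in\{j-\ell+1,\ldots,j\}$ for every $i\in\B_2^\ell\setminus\{i^*\}$, then apply \claimref{claim:length_cases_total_b} and pigeonhole. Your handling of the $\ell=\ell_{i^*}$ case is in fact slightly more careful than the paper's, since you correctly note that $i^*$ may or may not lie in $\B_2$.
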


\begin{proof}
We first observe from \claimref{claim:length_cases_total_a} that if $i\in\B_2$, either $s_i\leq j\leq f_i$ or $j-\ell_i<q_i\leq j$ holds. Let $i^*\in[t]$ be the unique index such that $s_{i^*}\leq j\leq f_{i^*}$.
\begin{itemize}
    \item If $\ell\neq\ell_{i^*}$, then for all $i\in\B_2^\ell$, we have $j-\ell_i<q_i\leq j$ since $j\not\in[s_i,f_i]$ for all $i\in\B_2^\ell$. Suppose for contradiction that there exists some $\ell'\neq\ell_{i^*}$ such that $|\B_2^{\ell'}|\geq \ell'+1$. Then we have at least $\ell'+1$ $i$'s such that $j-\ell'<q_i\leq j$. Then by the pigeonhole principle, there exists some $i_1,i_2\in\B_2^{\ell'}$ such that $q_{i_1}=q_{i_2}$, which implies that $(q_{i_1},\ell_{i_1}=\ell')=(q_{i_2},\ell_{i_2}=\ell')$. Contradiction by \claimref{claim:length_cases_total_b}! Hence, $|\B_2^\ell|\leq\ell$ for all $\ell\neq \ell_{i^*}$.
    \item If $\ell=\ell_{i^*}$, then we clearly see $i^*\in\B_2^{\ell_{i^*}}$. Then what is remained to show is that $|\B_2^{\ell_{i^*}}\setminus\{i^*\}|\leq\ell_{i^*}$. For all $i\in \B_2^{\ell_{i^*}}\setminus\{i^*\}$, we have $j-\ell_i<q_i\leq j$ since $s_{i^*}\leq j\leq f_{i^*}$. Now by the previous case analysis, we have $|\B_2^{\ell_{i^*}}\setminus\{i^*\}|\leq \ell_{i^*}$. Hence, $|\B_2^{\ell_{i^*}}|\leq\ell_{i^*}+1$.\qedhere
\end{itemize}
\end{proof}

Let's get back to the proof of  \lemref{lemma:length_cases_total}.
We first observe that block $B_i$ has length $\ell_i+1$ and the string length is $n$. Hence, we have $\sum_{i\in\B_2} (\ell_i+1)\leq n$. Furthermore, we observe that for $i\in\B_2^\ell$, the length of block $B_i$ is $\ell+1$. Therefore,
\[\sum_{i\in\B_2} (\ell_i+1) = \sum_\ell \sum_{i\in\B_2^\ell} (\ell+1) = \sum_\ell \left| \B_2^\ell \right| (\ell+1) \leq n.\]
Let $x_\ell \coloneqq \left| \B_2^\ell\right|$. Observe that $t_2 \leq \sum_\ell x_\ell$. Thus, to bound $t_2$ we want to maximize $\sum_\ell x_\ell$ subject to the constraints that (here, $i^*\in[t]$ is the \emph{unique} index where $s_{i^*}\leq j\leq f_{i^*}$)
\begin{enumerate}
    \item $x_\ell \leq \ell\text{ for all }\ell\neq\ell_{i^*}\text{ and }x_{\ell_{i^*}}\leq \ell_{i^*}+1$, and
    \item $\sum_\ell x_\ell  (\ell+1) \leq n$.\label{lengthcondition}
\end{enumerate}

It is clear that the maximum is obtained by setting $x_\ell$ the maximum possible value for all $\ell \leq z$ (set $x_\ell=\ell$ for all $\ell\leq z$ with $\ell\neq\ell_{i^*}$, and set $x_{\ell_{i^*}}=\ell_{i^*}+1$ if $\ell_{i^*}\leq z$) and $x_{\ell}=0$ for all $\ell > z$ for some threshold value $z$. It is followed by a simple swapping argument. Let $c_\ell=\ell$ if $\ell\neq\ell_{i^*}$ and $c_{\ell_{i^*}}=\ell_{i^*}+1$. Suppose we have a feasible solution $(x_0,\ldots,x_{n-1})$ which satisfies the constraints and we have $0<x_{\ell_1}<c_{\ell_1}$ and $0<x_{\ell_2}<c_{\ell_2}$ with $\ell_1<\ell_2$. Now define a new set of solution $(x_0',\ldots,x_{n-1}')$ where $x_{\ell_1}'=x_{\ell_1}+1, x_{\ell_2}'=x_{\ell_2}-1$, and $x_i'=x_i$ for all $i\neq \ell_1,\ell_2$. Then we observe that $0\leq x_\ell'\leq c_\ell$ for all $\ell=0,1,\ldots,n-1$, and furthermore,
\begin{align*}
    \sum_\ell x_\ell'(\ell+1) &= x_{\ell_1}'(\ell_1+1) + x_{\ell_2}'(\ell_2+1) + \sum_{\ell\neq\ell_1,\ell_2} x_\ell'(\ell+1)\\
    &= (x_{\ell_1}+1)(\ell_1+1) + (x_{\ell_2}-1)(\ell_2+1) + \sum_{\ell\neq\ell_1,\ell_2} x_\ell(\ell+1)\\
    &= \sum_\ell x_\ell(\ell+1) + (\ell_1-\ell_2) < \sum_\ell x_\ell(\ell+1) \leq n,
\end{align*}
which implies that $(x_0',\ldots,x_{n-1}')$ is also a feasible solution. We can repeat these swaps to keep the solution feasible but with a higher value in $x_\ell$ for a smaller index $\ell$, and finally, with having $x_\ell$ the maximum possible value for all $\ell\leq z$ for some threshold $z$. Note that $x_z$ could be less than $z$, i.e., for the threshold $z$, we have $x_z=c$ for some $0<c\leq z$.

To find the threshold $z$, we first observe that $\left[\sum_{\ell=0}^{z-1} \ell(\ell+1)\right] + x_z(z+1) = \sum_{\ell \geq 0} x_\ell (\ell+1) \leq n$ by condition (\ref{lengthcondition}). Simplifying the sum we have $\left[\sum_{\ell=0}^{z-1} \ell(\ell+1)\right] + x_z(z+1)=\frac{1}{3}(z-1)z(z+1) + x_z(z+1) \geq z^3/3$ where the last inequality follows because  $x_z\geq 1$. Thus, we have $z^3\leq 3n$ and $z\leq\sqrt[3]{3n}$. Setting this value of $z$, we have
\begin{equation*}
    t_2 \leq \sum_\ell x_\ell \leq \left(\sum_{\ell=0}^{\sqrt[3]{3n}} \ell\right) + 1 = \frac{\sqrt[3]{9}}{2} n^{2/3} + \frac{\sqrt[3]{3}}{2} n^{1/3} + 1,
\end{equation*}
where the addition of $1$ in the inequality above comes from the case if $\ell_{i^*}\leq z$.
\end{proof}

\subsection{Counting Type-2 Blocks for LZ77 with Self-Referencing}\applab{app:missingproofCSR}

For the LZ77 compression scheme with self-referencing, we can prove an analogous claim as stated in \lemref{LZ77srt2}.

\begin{lemma}\lemlab{LZ77srt2}
Let $\compress_\SR:\Sigma^*\rightarrow(\Sigma')^*$ be the LZ77 compression function with self-referencing and $w\sim w'$ be strings of length $n$. Let $(B_1,\ldots,B_t)\gets\compress_\SR(w)$ and $(B'_1,\ldots,B'_{t'})\gets\compress_\SR(w')$. Then $t_2\leq\frac{\sqrt[3]{9}}{2}n^{2/3}+\frac{\sqrt[3]{3}}{2}n^{1/3}+1$.
\end{lemma}

Following a similar proof strategy to \secref{typetwoblocks}, we verify that the same claims hold for LZ77 with self-referencing as well.

\begin{claim}\claimlab{sr_locationj}
For LZ77 with self-referencing, if $i\in\B_2$ then either (1) $s_i\leq j\leq f_i$ or (2) $j-\ell_i<q_i\leq j$.
\end{claim}

\begin{proof}
Since the only difference between the non-overlapping LZ77 and the LZ77 with self-referencing is that we allow for $\ell_i + q_i-1 > \ctc$ and this does not affect the proof at all, we can follow the exact same proof of \claimref{claim:length_cases_total_a}.
\end{proof}

\begin{claim}\claimlab{sr_uniqueoffset}
For LZ77 with self-referencing, if $i_1,i_2\in\B_2$ then $(q_{i_1},\ell_{i_1})\neq(q_{i_2},\ell_{i_2})$.
\end{claim}

\begin{proof}
We can follow a similar proof strategy with \claimref{claim:length_cases_total_b}, except that since we are considering LZ77 with self-referencing, in Case 1, $w'[s_k',f_k'-1]$ becomes the longest substring of $w'[1,f_k'-1]$. However, \eqnref{claim29eq1} implies that $w'[s_k',f_{i_2}-1]$ is also a substring of $w'[1,s_k'-1]$, which is a substring of $w'[1,f_k'-1]$. This is a contradiction since $w'[s_k',f_k'-1]$ does \emph{not} become the longest substring of $w'[1,f_k'-1]$. Hence, we have the same conclusion that $f_k'\geq f_{i_2}$, and the rest of the proof follows the same as \claimref{claim:length_cases_total_b}.
\end{proof}

\begin{claim}\claimlab{sr_b2lsize}
For LZ77 with self-referencing,
let $\B_2^\ell\coloneqq\{i\in\B_2:\ell_i=\ell\}$ and suppose that $s_{i^*}\leq j\leq f_{i^*}$ for some $i^*\in[t]$. Then $|\B_2^\ell|\leq\ell$ for all $\ell\neq \ell_{i^*}$, and $|\B_2^{\ell_{i^*}}|\leq\ell_{i^*}+1$.
\end{claim}

\begin{proof}
Since the proof of \claimref{claim:length_cases_total_c} relied on \claimref{claim:length_cases_total_a} and \claimref{claim:length_cases_total_b} along with the pigeonhole principle, we can follow the same proof strategy relying on \claimref{sr_locationj} and \claimref{sr_uniqueoffset} instead.
\end{proof}

\begin{proof}[Proof of {\lemref{LZ77srt2}}]
Since the proof of \lemref{lemma:length_cases_total} did not exploit the restriction that the non-overlapping LZ77 has, we can follow the same proof of \lemref{lemma:length_cases_total}.
\end{proof}

\begin{theorem}\thmlab{thm:thmupperboundsr}
    Let $\compress_\SR:(\Sigma)^*\rightarrow(\Sigma')^*$ be the LZ77 compression function with self-referencing, with unbounded sliding window size $W=n$. Then \[\GS_{\compress_\SR}\leq \left(\frac{\sqrt[3]{9}}{2} n^{2/3} + \frac{\sqrt[3]{3}}{2} n^{1/3} + 3\right) (2\lceil\log n\rceil+\lceil\log|\Sigma|\rceil) = \bigO{n^{2/3}\log n}.\]
\end{theorem}

\begin{proof}
As discussed in \secref{position}, we see that \lemref{lemma:start_inside} essentially holds similarly for the self-referencing case except that there might be one type-3 block (i.e., $t_3 \leq 1$), i.e., the block containing the edited position $j$, as stated in \cite[p.18]{AFI23}. Thus, we have $t=t_0+t_1+t_2+t_3$ and we further observe that $t'=\sum_{m=0}^3 m\cdot t_m=t_1+2t_2+3t_3$ which is similar to \claimref{claim:set:blocksm2:GS}. Hence, $t'-t=2t_3+t_2-t_0\leq t_2+2t_3\leq t_2+2$. Hence, by \lemref{LZ77srt2}, 
\begin{align*}
    \left|\left| \compress(w) \right| - \left| \compress(w') \right|\right| =& |t-t'|(2\lceil\log n\rceil+\lceil\log|\Sigma|\rceil) \nonumber \\
    \leq & \left(\frac{\sqrt[3]{9}}{2} n^{2/3} + \frac{\sqrt[3]{3}}{2} n^{1/3} + 3\right) (2\lceil\log n\rceil+\lceil\log|\Sigma|\rceil) .
\end{align*}
Since this inequality hold for arbitrary $w\sim w'$ of length $n$, we have
\begin{align*}
    \mathtt{GS}_{\mathtt{Compress}} &= \max_{w \in \Sigma^n}\max_{w' \in \Sigma^n~\mathtt{s.t.}~w\sim w'} \left| |\compress(w)| - |\compress(w')| \right|\\
    &\leq \left(\frac{\sqrt[3]{9}}{2} n^{2/3} + \frac{\sqrt[3]{3}}{2} n^{1/3} + 3\right) (2\lceil\log n\rceil+\lceil\log|\Sigma|\rceil) =\O{n^{2/3}\log n}.\qedhere
\end{align*}
\end{proof}

\subsection{Bounded Sliding Window}\applab{app:missingproofW}

\begin{reminderclaim}{\claimref{claim:newconstraint}}
    \newconstraint
\end{reminderclaim}

\begin{proof}
Let $B_i=[q_i,\ell_i,c_i]$ such that $s_i>j+W$. Then we will show that $B_i\in\B_0\cup\B_1$.
\begin{itemize}
    \item If $B_i\in\B_0$ then we clearly have $B_i\in\B_0\cup\B_1$.
    \item If $B_i\not\in\B_0$ then let $B'_{k}$ be the first block that starts inside $B_i$. We argue that $f'_{k}\geq f_i$, which implies that $B_i\in\B_1$. Suppose for contradiction that $f'_{k}<f_i$. Then we have that $w'[s'_{k},f'_{k}-1]$ is the longest substring of $w'[\max\{1,s'_{k}-W\},s'_{k}-1]$ that starts at $s'_{k}$. 
    Since $s_i>j+W$, we have that $j<s_i-W$ and therefore $w[q_i,q_i+\ell_i-1]=w'[q_i,q_i+\ell_i-1]$ since we only look at the previous $W$ characters when finding the longest substring to construct $B_i$. Hence,
    \begin{align*}
        w'[q_i+(s'_{k}-s_i),q_i+\ell_i-1] &= w[s_i+(s'_{k}-s_i),f_i-1]\\
        &=w[s'_{k},f_i-1]\\
        &=w'[s'_{k},f_i-1].
    \end{align*}
    Furthermore, since $q_i\geq s_i-W$ by definition of LZ77, we observe $q_i+(s'_{k}-s_i)\geq s'_{k}-W$, meaning that $w'[s'_{k},f_i-1]$ is also a substring of $w'[\max\{1,s'_{k}-W\},s'_{k}-1]$ that starts at $s'_{k}$. But since $f'_{k}<f_i$, $w'[s'_{k},f_i-1]$ is a \emph{longer} substring of $w'[\max\{1,s'_{k}-W\},s'_{k}-1]$ that starts at $s'_{k}$ than $w'[s'_{k},f'_{k}-1]$. Contradiction! This contradiction is due to the assumption that $f'_{k}<f_i$. Hence, we have $f'_{k}\geq f_i$ and therefore $B_i\in\B_1$. Therefore, $B_i\in\B_0\cup\B_1$.
\end{itemize}
Finally, if $B_i\in\B_2$ then $B_i\not\in\B_0\cup\B_1$, hence we have $s_i\leq j+W$.
\end{proof}

\begin{theorem}\thmlab{thm:thmupperboundsrW}
    Let $\compress_\SR:(\Sigma)^*\rightarrow(\Sigma')^*$ be the LZ77 compression function with self-referencing, with sliding window size $W$. Then \[\GS_{\compress_\SR}\leq \left(\frac{\sqrt[3]{81}}{2} W^{2/3} + \frac{\sqrt[3]{9}}{2} W^{1/3} + 5\right) (2\lceil\log n\rceil+\lceil\log|\Sigma|\rceil) = \bigO{W^{2/3}\log n}.\]
\end{theorem}

\section{Lower Bound for the Global Sensitivity of the LZ77 Compression Scheme: Missing Proofs}\applab{app:missingprooflowerbound}

\begin{reminderclaim}{\claimref{claim:length}}
    \claimlength
\end{reminderclaim}

\begin{proof}
We first observe that $|w|=|w'|$ since they only differ by one character ($2$ and $3$ in $S_w$ and $S_{w'}$, respectively). We then observe that $|S_w|=4m\lceil\log m\rceil + 2$ since we have $4m$ encodings with length $\lceil\log m\rceil$ each along with the character $2$ and $4$, which has length $1$ each. And similarly, for $2\leq\ell\leq m$ and $1\leq u\leq \ell-1$, we observe that $|S_{\ell,u}|=2\ell\lceil\log m\rceil + 1$. Hence,
\begin{align*}
    |w| &= |S_w| + \sum_{\ell=2}^m \sum_{u=1}^{\ell-1} \left(1+|S_{\ell,u}|\right) \\
    &= 4m\lceil\log m\rceil + 2 + \sum_{\ell=2}^m \sum_{u=1}^{\ell-1} \left(2 + 2\ell\lceil\log m\rceil\right)\\
    &= 4m\lceil\log m\rceil + 2 + \sum_{\ell=2}^m \big[2(\ell-1) + 2\ell(\ell-1)\lceil\log m\rceil\big] \\
    &= 4m\lceil\log m\rceil + 2 + (m-1)m + \lceil\log m\rceil \cdot\frac{2}{3}\left( m^3-m \right) \\
    &= \frac{2}{3}m^3\lceil\log m\rceil + \frac{10}{3} m\lceil\log m\rceil + (m-1)m+2.
\end{align*}
It is clear that $|w|>\frac{2}{3}m^3\lceil\log m\rceil$ since the term $\frac{10}{3} m\lceil\log m\rceil + (m-1)m+2$ is always positive for $m\geq 1$. Now we consider the upper bound. For $m=4$, we can directly show that $\frac{2}{3}m^3\lceil\log m\rceil + \frac{10}{3} m\lceil\log m\rceil + (m-1)m+2=\frac{378}{3}<128=m^3\lceil\log m\rceil$. For $m\geq 5$, we have $m^2\geq 25$ and $5(m-1)m+10<5(m-1)m+5m=5m^2\leq m^3$, which implies
\begin{align*}
    |w| &= \frac{2}{3}m^3\lceil\log m\rceil + \frac{10}{3} m\lceil\log m\rceil + (m-1)m+2\\
    &= \frac{2}{3}m^3\lceil\log m\rceil + \frac{10}{25}\cdot\frac{25}{3}m\lceil\log m\rceil + \frac{1}{5}\left[5(m-1)m+10\right]\\
    &< \frac{2}{3}m^3\lceil\log m\rceil + \frac{10}{25}\cdot\frac{1}{3}m^3\lceil\log m\rceil + \frac{1}{5}m^3\\
    &< \frac{2}{3}m^3\lceil\log m\rceil + \frac{10}{25}\cdot\frac{1}{3}m^3\lceil\log m\rceil + \frac{1}{5}m^3\lceil\log m\rceil = m^3\lceil\log m\rceil.
\end{align*}
Taken together, we can conclude that $|w|=|w'|=\Theta(m^3\log m)$.
\end{proof}

\begin{reminderclaim}{\claimref{claim:inj}}
For any integer $m\geq 2$, the function
\[f(\ell,u)\coloneqq\frac{(\ell-2)(\ell-1)}{2}+(\ell-u)\]
defined over integers $\ell$ and $u$ such that $2\leq \ell\leq m$ and $1\leq u\leq \ell-1$ is injective, and its range is $[\frac{(m-1)m}{2}]$.
\end{reminderclaim}

\begin{proof}
We prove this by induction on $m$. For the base case ($m=2$), the domain of the function $f$ only consists of a single element $\{(2,1)\}$, therefore the function is clearly injective and $f(2,1)=1$ shows that its range is $[1]=[\frac{(m-1)m}{2}]$. Now suppose that the claim holds for some $m=m'$ and consider the case where $m=m'+1$. By inductive hypothesis, we already know that $f(\ell,u)$ is an injective function for $2\leq\ell\leq m'$ and $1\leq u\leq \ell-1$ and its range is $[\frac{(m'-1)m'}{2}]$. When $\ell=m'+1$ and $1\leq u\leq \ell-1=m'$, we observe that
\begin{align*}
    f(\ell,u) &= f(m'+1,u) = \frac{(m'-1)m'}{2} + (m'+1-u),
\end{align*}
hence the output of the function is all different for different $u$'s, and its range is $[\frac{(m'-1)m'}{2}+1,\frac{(m'-1)m'}{2}+m']=[\frac{(m'-1)m'}{2}+1,\frac{m'(m'+1)}{2}]$. Since $[\frac{(m'-1)m'}{2}]$ and $[\frac{(m'-1)m'}{2}+1,\frac{m'(m'+1)}{2}]$ are mutually exclusive, we can conclude that the function is still injective over integers $\ell$ and $u$ such that $2\leq \ell\leq m'+1$ and $1\leq u\leq \ell-1$, and its range is $[\frac{(m'-1)m'}{2}]\cup[\frac{(m'-1)m'}{2}+1,\frac{m'(m'+1)}{2}] = [\frac{m'(m'+1)}{2}]$, which concludes the proof.
\end{proof}

\begin{reminderclaim}{\claimref{claim:notrepeat}}
$S_{\ell,u}^L\circ4\circ S_{\ell,u-1}^F$ does not repeat for different $\ell$ and $u$ such that $3\leq\ell\leq m$ and $2\leq u\leq\ell-1$.
\end{reminderclaim}

\begin{proof}
We have $S_{\ell,u}^L=\Enc(m-u+\ell)^2$ and $S_{\ell,u-1}^F=\Enc(m-u+2)^2$. Since the pairs $(-u+\ell,-u+2)$ for $3\leq\ell\leq m$ and $2\leq u\leq\ell-1$ are all distinct, the claim holds.
\end{proof}

\begin{reminderclaim}{\claimref{claim:repeat}}
For $2\leq\ell\leq\lfloor\frac{m}{2}\rfloor-1$, $S_{\ell,1}^L\circ4\circ S_{\ell+1,\ell}$ repeats at $S_{2\ell,\ell+1}^L\circ4\circ S_{2\ell,\ell}^{(\ell+1)}$.
\end{reminderclaim}

\begin{proof}
We observe that
\begin{align*}
    S_{2\ell,\ell+1}^L\circ4\circ S_{2\ell,\ell}^{(\ell+1)} &= \Enc(m-(\ell+1)+2\ell)^2\circ4\circ\Enc(m-\ell+1)^2\circ\ldots\circ\Enc(m)^2\circ2\circ\Enc(m+1)^2\\
    &=\Enc(m-1+\ell)^2\circ4\circ\Enc(m-\ell+1)^2\circ\ldots\circ\Enc(m)^2\circ2\circ\Enc(m+1)^2\\
    &=S_{\ell,1}^L\circ4\circ S_{\ell+1,\ell},
\end{align*}
which completes the proof.
\end{proof}
\allowdisplaybreaks
\section{Rigorous Proof for \propref{prop:ckc}}\applab{app:kolcompression}

Here we give a mathematically rigorous proof of \propref{prop:ckc}.

\begin{remindertheorem}{\propref{prop:ckc}}
\kolcomp
\end{remindertheorem}

\begin{proof}
    For a string $w\in\Sigma^n$, suppose that $\ckc(w)=(M,1\circ0^{t_M})$, i.e., $M$ is the minimum-score Turing Machine that outputs $w$ running in time $t_M$. For $w'\sim w$ that differs on the $i\th$ bit, one can obtain a Turing Machine $M'$ that outputs $w'$ as follows:
    \begin{itemize}
        \item Run $M$ to obtain $w$, and
        \item Flip the $i\th$ bit of $w$ to obtain $w'$.
    \end{itemize}
    Then the running time of $M'$ is $t_{M'}=t_M+\O{n}$, meaning that there exist a constant $C_1>0$ and a positive integer $n_1$ such that for all $n\geq n_1$ we have $t_{M'}\leq t_M+C_1n$. Furthermore, the description of $M'$ needs the description of $M$ plus $i$ plus some constant, which implies $|M'|\leq|M|+\O{\log n}$ since it takes $\log n$ bits to encode $i$. Also, this means that there exist a constant $C_2>0$ and a positive integer $n_2$ such that for all $n\geq n_2$ we have $|M'|\leq |M|+C_2\log n$.

    Let $\ckc(w')=(M'',1\circ0^{t_{M''}})$, i.e., $M''$ is the minimum-score Turing Machine that outputs $w'$ with running time $t_{M''}$. Then by definition, we have $\score(M'')\leq\score(M')$. Hence, for all $n\geq \max\{n_1,n_2\}$ we have
    \begin{align*}
        |\ckc(w')| &= |M''|+1+\log t_{M''}\\
        &=\score(M'')\\
        &\leq \score(M')\\
        &= |M'|+1+\log t_{M'}\\
        &\leq |M|+C_2\log n + 1 + \log(t_M+C_1 n).
    \end{align*}
    Now we consider the two cases:
    \begin{enumerate}
        \item If $t_M\geq C_1n$, then for all $n\geq\max\{n_1,n_2,2\}$ we have (recall that $\log$ is base 2)
        \begin{align*}
            |\ckc(w')| &\leq |M|+C_2\log n + 1 + \log(t_M+C_1 n)\\
            &\leq |M|+C_2\log n + 1 + \log(2t_M)\\
            &= |M|+1+\log t_M + 1 + C_2\log n\\
            &\leq |M|+1+\log t_M + (C_2+1)\log n\\
            &=|\ckc(w)|+(C_2+1)\log n.
        \end{align*}
        Hence, $\GS_\ckc(n)=\max_{w\in\Sigma^n}\max_{w':w\sim w'}||\ckc(w)|-|\ckc(w')||\leq (C_2+1)\log n$ for all $n\geq\max\{n_1,n_2,2\}$, which implies $\GS_\ckc(n)=\O{\log n}$.
        \item If $t_M<C_1n$, then for all $n\geq\max\{n_1,n_2,2C_1\}$ we have
        \begin{align*}
            |\ckc(w')| &\leq |M|+C_2\log n + 1 + \log(t_M+C_1 n)\\
            &\leq |M|+C_2\log n + 1 + \log(2C_1n)\\
            &=|M|+1+(C_2+2)\log n\\
            &\leq |M|+1+\log t_M+(C_2+2)\log n\\
            &=|\ckc(w)|+(C_2+2)\log n.
        \end{align*}
        Hence, $\GS_\ckc(n)=\max_{w\in\Sigma^n}\max_{w':w\sim w'}||\ckc(w)|-|\ckc(w')||\leq (C_2+2)\log n$ for all $n\geq\max\{n_1,n_2,2C_1\}$, which implies $\GS_\ckc(n)=\O{\log n}$.\qedhere
    \end{enumerate}
\end{proof}

}{
\chapter*{Appendices}
\appendix

\def\shortbib{0}
\bibliographystyle{abbrv}
\bibliography{references,abbrev3,crypto}
}

\end{document}